\documentclass[
  twocolumn,          
  aps,              
  prx,              
  superscriptaddress,
  nofootinbib
]{revtex4-2}

\usepackage[utf8]{inputenc}
\usepackage{amsmath,amssymb,bm}
\usepackage{physics}
\usepackage{bbold}
\usepackage{color}
\usepackage{hyperref}
\usepackage{circuitikz}
\usepackage{amsthm}

\newtheorem{proposition}{Proposition}
\newtheorem{theorem}{Theorem}
\newtheorem{lemma}{Lemma}
\newtheorem{corollary}{Corollary}
\theoremstyle{remark}
\newtheorem{remark}{Remark}

\begin{document}

\title{Fingerprints of classical memory in quantum hysteresis}

\author{Francesco Caravelli}
 \affiliation{Los Alamos National Laboratory, Los Alamos, New Mexico 87545, USA}
\affiliation{*Dipartimento di Fisica dell’Universita di Pisa, Largo Bruno Pontecorvo 3, I-56127 Pisa, Italy}

\date{\today}

\begin{abstract}
We present a simple framework for classical and quantum ``memory'' in which the
Hamiltonian at time $t$ depends on past values of a control Hamiltonian through a
causal kernel. This structure naturally describes finite-bandwidth or filtered
control channels and provides a clean way to distinguish between memory in the
control and genuine non-Markovian dynamics of the state. We focus on models where
$H(t)=H_0+\int_{-\infty}^{t}K(t-s)\,H_1(s)\,ds$, and illustrate the framework on
single-qubit examples such as $H(t)=\sigma_z+\Phi(t)\sigma_x$ with
$\Phi(t)=\int_{-\infty}^{t}K(t-s)\,u(s)\,ds$. We derive basic properties of such dynamics,
discuss conditions for unitarity, give an equivalent time-local description for exponential
kernels, and show explicitly how hysteresis arises in the response of a driven qubit.

\end{abstract}

\maketitle


\section{Introduction}
\footnote[1]{caravelli@lanl.gov    
* guest affiliations. }
High-fidelity coherent control is a basic requirement across quantum information
science, from gate synthesis and dynamical decoupling to calibration,
characterization, and error mitigation.\cite{NielsenChuang,Glaser2015,ViolaKnillLloyd1999,Temme2017}
In the idealized description used in much of the theory, the experimenter
prescribes a time-dependent Hamiltonian and the device follows it exactly.
This abstraction is analytically convenient and often accurate at the level of
first principles, but it hides an important practical layer: the Hamiltonian
experienced by the quantum system is not programmed directly. Instead, it is
generated and delivered by a classical control stack, as for instance room-temperature
electronics, wiring, attenuators, bias tees, packaging, and on-chip
filters, with finite bandwidth and its own internal dynamics.\cite{Krantz2019} 
This issue is particularly  timely in view of recent evidence of memory effects in out-of-equilibrium protocols on quantum
annealers by Pelofske and collaborators \cite{pelofske,barrows}.

A simple intuition comes from elementary electronics. Driving an RC circuit with
a time-dependent voltage does not produce an output that tracks the input
instantaneously; rather, the output relaxes toward it with a characteristic time
constant, leading to smoothing and a phase lag that depend on the drive history.
Analogous effects appear broadly in quantum-control platforms: control lines act
as low-pass filters, cables and packaging introduce dispersion, bias tees and
amplifiers contribute dynamical response, and resonator-mediated actuation adds
latency and ringing.\cite{Krantz2019,Motzoi2009,BornemanCory2012BandwidthLimited}
As a result, the field that actually reaches the device can carry substantial
\emph{classical memory}, even when the device itself is perfectly isolated and
its intrinsic evolution is unitary.

This distinction matters both conceptually and operationally. Classical memory
in the control channel can generate signatures that resemble dissipation or
non-Markovianity, for example, hysteresis loops when plotting an observable
against the commanded waveform, despite the absence of any environmental memory
in the quantum state evolution.\cite{BreuerLainePiilo2009,RivasHuelgaPlenio2014,deVegaAlonso2017}
Conversely, genuine quantum memory originates in system--environment coupling
and must be diagnosed by observables that are insensitive to purely classical
distortions.\cite{BreuerLainePiilo2009,RivasHuelgaPlenio2014,deVegaAlonso2017}
Separating these mechanisms is therefore essential for interpreting data and for
designing robust control protocols.

Motivated and inspired by these experimental works, we adopt a minimal framework in which
the commanded waveform $u(t)$ generated at room temperature is mapped to an
in-situ realized field $\Phi(t)$ by a causal filter,
\begin{eqnarray}
    \Phi(t)&&=\Phi_0+(K*u)(t)\nonumber \\
    &&=\Phi_0+\int_{-\infty}^{t} K(t-s)\,u(s)\,ds,
    \label{eq:intro_convolution}
\end{eqnarray}
and $\Phi(t)$ is the control amplitude that enters the device Hamiltonian. This
kernel inside the Hamiltonian has been used sporadically in the past, and it is used in the
quantum-control literature to model hardware distortion, finite bandwidth, and
ringdown.\cite{PhysRevApplied.4.024012,BornemanCory2012BandwidthLimited,Wittler2021C3Toolset}
In particular, Hincks \emph{et al.} incorporate explicit distortion dynamics
(including nonlinearities) directly into optimal-control synthesis, so that the
optimized command $u(t)$ implements a target unitary under the \emph{realized}
waveform that reaches the device.\cite{PhysRevApplied.4.024012} Related approaches
appear in resonator-limited control \cite{BornemanCory2012BandwidthLimited} and in
integrated calibration/optimization toolchains for superconducting circuits.\cite{Wittler2021C3Toolset}

While we adopt the same physical separation between commanded and realized
control, our goal is not pulse design under a known distortion model. Instead,
we use the filtered-control viewpoint to develop a \emph{memory diagnostics}
program based on geometric loop measures. Concretely, we  treat the control as a dynamical memory system in its own right and
quantify its hysteretic response by loop areas such as
$\mathcal A_{u\Phi}=\oint \Phi\,du$;
as a result, we show that even when the device responds adiabatically to the \emph{realized}
trajectory (so that $O(t)\approx f(\Phi(t))$ and $\mathcal A_{\Phi O}\approx 0$),
a nonzero commanded-loop area $\mathcal A_{uO}=\oint O\,du$ can persist as a
deterministic fingerprint of classical filtering, with controlled small-amplitude
relations linking $\mathcal A_{uO}$ to $\mathcal A_{u\Phi}$;
we then use $\mathcal A_{\Phi O}$ as a diagnostic that separates classical
control memory from genuinely state-history-dependent effects (e.g.\ coherent
nonadiabaticity) at fixed realized field. In this sense, our framework is aimed
at the interpretation and separation of mechanisms, rather than at the inversion of the
distortion map.

In the second part of the manuscript, and to keep the control memory both experimentally grounded and analytically
tractable, we connect the kernel $K$ to (classical) passive RC line models to build the intuition.  Classical
network-synthesis constraints imply that passive RC ladders have transfer
functions with poles on the negative real axis, so their impulse responses are
sums (or mixtures) of decaying exponentials. This motivates the exponential mode
embedding used throughout the paper and provides a concrete physical meaning for
the rates as relaxation time scales of the control. 
Dissipation in the control stack is most commonly modeled by treating the channel as an \emph{open} quantum system, i.e., by coupling it to uncontrolled degrees of freedom and describing the reduced channel dynamics with an effective Gorini-Kossakowski-Sudarshan-Lindblad, or GKLS  generator (see for instance \cite{Cattaneo2021} for a review). From a complementary microscopic viewpoint, however, eliminating the environmental degrees of freedom leads to a generalized Langevin description \cite{GardinerCollett1985} in which \emph{memory and fluctuations arise together}: the same bath modes that generate a retarded friction (memory) kernel also produce stochastic forces, whose correlations are constrained by the fluctuation--dissipation theorem \cite{Kubo1957Irreversible1,Forster1975}. In this work we focus on the deterministic memory kernel governing the mean input--output response of the channel; the accompanying noise, while essential for a fully consistent description, is not modeled explicitly here but is nevertheless linked to the kernel through the corresponding fluctuation--dissipation relations.

We also give a microscopic derivation (in the Kubo linear response formalism) in which the control
is modeled as a (possibly large) quantum channel driven at an input port by the
classical source $u(t)$. In linear response, the realized field is a retarded
susceptibility convolved with the command, and we show that under weak coupling conditions the device
evolves under an effective filtered-control Hamiltonian, with fluctuations
entering only as a separate weak correction. We highlight that the classical and quantum framework are mutually consistent in the presence of dissipation.

The manuscript is organized as follows. Section~II introduces the kernel-filtered control model and the standing
assumptions ensuring unitary device evolution at the level treated here.
Section~III shows how common causal kernels admit a time-local embedding via a
finite set of auxiliary filter modes. Section~IV defines loop measures in the
$(u,\Phi)$, $(u,O)$, and $(\Phi,O)$ planes and introduces the adiabatic
response function $f(\Phi)$. Section~V treats the nonadiabatic regime and
clarifies why loop areas alone do not certify open-system quantum memory.
Sections~VI--VII connect the kernel description to RC-network control lines and
work out single-qubit case studies (including an exactly solvable commuting
benchmark). Section~VIII presents numerical experiments mapping the frequency
dependence of the loop measures, and Sec.~IX concludes. Most technical
derivations are collected in the appendices.

\section{Hamiltonians with a Memory Kernel}
\label{sec:model}
In this first part of the manuscript, we discuss a simple model of the control. We provide a general treatment of filtered Hamiltonians, discussed in \cite{BornemanCory2012BandwidthLimited,PhysRevApplied.4.024012}.
\subsection{Model and standing assumptions}
\label{subsec:model_definition}

We consider closed-system dynamics generated by Hamiltonians whose control
component is processed through a causal response function. Concretely, we
consider
\begin{eqnarray}
    \hat H(t)&=&\hat H_A+\hat H^\prime_c(t)\\
    &\equiv&\hat H_A+\int_{-\infty}^{t}K(t-s)\,\hat H_c(s)\,ds,
    \label{eq:Hkernel_general_rewrite}
\end{eqnarray}
where $\hat H_A$ is a fixed (time-independent) drift Hamiltonian and $\hat H^\prime_c(s)$
is a Hermitian control Hamiltonian. The scalar function $K$ is a kernel
encoding the impulse response of a classical control channel.\cite{OppenheimWillsky1997,AgarwalLang2005}
Unless stated
otherwise, we assume that $K$ is \emph{causal}, e.g.
\begin{equation}
    K(\tau)=0\qquad \text{for }\tau<0,
\end{equation}
so that $\hat H(t)$ depends only on past values of the control.

The convolution in Eq.~\eqref{eq:Hkernel_general_rewrite} is understood as an
operator-valued integral (intended as a Bochner integral \cite{Mikusiski1978}). A simple set of sufficient conditions for
well-posedness is that $K\in L^1(\mathbb R_+)$ and that
$\sup_{s\in\mathbb R}\|\hat H_c(s)\|<\infty$, which guarantees that the integral
exists and defines a bounded operator for each $t$.\cite{Kato1995,ReedSimon1972}
These assumptions are
mild and encompass the kernels typically used to model finite bandwidth or
linear filtering (e.g.\ sums of decaying exponentials).\cite{OppenheimWillsky1997}

It is convenient to introduce the filtered control operator
\begin{equation}
   \hat H^\prime_c(t):=\int_{-\infty}^{t}K(t-s)\,\hat H_c(s)\,ds,
    \label{eq:H1_tilde_def}
\end{equation}
so that $\hat H(t)=\hat H_A+ \hat H^\prime_c(t)$. In this form, all history dependence
is confined to the map $\hat H_c(\cdot)\mapsto \hat H_c^\prime(\cdot)$, which is
linear and time-translation invariant. In other words, the control channel
acts as a classical LTI filter on an operator-valued input.\cite{OppenheimWillsky1997,BoydChua1985}
A direct generalization of this integral is the case of $K$ promoted to an operator is not proportional to the identity. This phenomenological Hamiltonian can be justified both via a classical and quantum treatment, that we discuss later in Sec. \ref{sec:physics}.

Now, we note that causality implies that specifying the control trajectory
$\hat H_c(s)$ for $s\le t$ uniquely determines the instantaneous generator $\hat H(t)$.
The resulting dynamics is nevertheless local in the state: once $\hat H(t)$ is
fixed, the Schr\"odinger equation retains its standard first-order form and
does not involve the past history of the quantum state.\cite{ReedSimon1972}
In this sense, the
``memory'' described by $K$ is classical and resides in the actuation
mechanism, not in the quantum dynamics of the system itself.

The RC circuit provides a useful archetype. A voltage command applied at the
input of an RC filter produces an output that relaxes toward the command
with a characteristic time constant.\cite{AgarwalLang2005}
At the Hamiltonian level, this
corresponds to replacing the commanded control by a smoothed and phase-lagged
realized control, obtained by convolution with an exponentially decaying
kernel.\cite{OppenheimWillsky1997}
More general kernels represent more elaborate linear responses,
including multi-timescale relaxation (sums of exponentials) and resonant
features (damped oscillatory responses), while still maintaining a simple,
causal description at the Hamiltonian level.\cite{OppenheimWillsky1997,Magos1970}

For the evolution generated by Eq.~\eqref{eq:Hkernel_general_rewrite} to be
unitary, the instantaneous Hamiltonian must be Hermitian for all $t$.
Since $\hat H_A$ and $\hat H_c(s)$ are assumed Hermitian, one has
\begin{equation}
    \hat H(t)^\dagger
    =
    \hat H_A
    +\int_{-\infty}^{t}K(t-s)^{*}\,\hat H_c(s)\,ds.
\end{equation}
Thus, for a scalar kernel acting multiplicatively on a Hermitian operator
input, a transparent sufficient condition for Hermiticity is that the kernel
be real-valued,
\begin{equation}
    K(\tau)\in\mathbb R \qquad \text{for all }\tau,
    \label{eq:K_real_condition}
\end{equation}
so that the filtered control $\hat H_c^\prime(t)$ is a real linear
combination of Hermitian operators. Under Eq.~\eqref{eq:K_real_condition}, the
propagator is unitary for each fixed control trajectory.\cite{ReedSimon1972,Kato1995}

We emphasize that Eq.~\eqref{eq:K_real_condition} is a condition on the
effective description rather than a fundamental constraint: in applications,
$K$ is typically identified with a classical impulse response (hence real),
or with the real part of a causal response function inferred from
calibration data.\cite{OppenheimWillsky1997,Krantz2019,Motzoi2009}
In either case, the framework provides a compact way to
incorporate finite-bandwidth distortions while maintaining a strictly
unitary closed-system description.\cite{NielsenChuang,ReedSimon1972}

The discussion simplifies dramatically for the case of a scalar control
$\hat H_c(t)=u(t)\hat M$ filtered through a causal kernel:
the realized drive is $w(t)=(K*u)(t)$ and the Hamiltonian reads
$\hat H(t)=\hat H_A+w(t)\hat M$. 
Above, $u(t)$ is the \textit{commanded} control parameter, e.g. what an experimenter would set on the software or the voltage generator. In App.\ref{app:dyson_kernel_time_independent_H0} we provide general statements for the unitary evolution operator based on the properties of the kernel $K(\tau)$. 

Before we continue, let us now briefly comment on the units.
With $\hbar=1$, Hamiltonians have units of inverse time. Since the filtered
control enters as a causal convolution
$
\hat H_c'(t)=\int_{-\infty}^t K(t-s)\,\hat H_c(s)\,ds,
$
dimensional consistency requires $[K]\,[\hat H_c]\,[\mathrm{time}]=[\hat H]$.
In the convention we use here where the control term is written as
$\hat H_c(s)=u(s)\,\hat M$ with a fixed dimensionless generator $\hat M$
(e.g.\ Pauli operators), the drive amplitude $u$ carries the physical energy
scale, and one has $[K]=1/\mathrm{time}$ so that
$u_{\mathrm{eff}}(t)=\int_{-\infty}^t K(t-s)u(s)\,ds$ has the same units as $u$.
In this convention the integrated weight
$g=\int_0^\infty K(\tau)\,d\tau$ is dimensionless, and the Markov/instantaneous
limit $K(\tau)\to g\,\delta(\tau)$ corresponds to a memoryless rescaling
$u_{\mathrm{eff}}(t)\to g\,u(t)$.

Later, when we interpret $u(t)$ as a laboratory control such as a voltage
waveform $V(t)$ (rather than an already-calibrated energy amplitude), an
additional conversion factor is implicitly present: the device couples to an
electrical potential through an appropriate charge-like scale (e.g.\ $e$ or
an effective charge/lever arm set by circuit geometry), so the Hamiltonian drive
takes the schematic form $\hat H_c(t)\sim (\text{coupling})\times V(t)\times \hat M$.
One may adopt ``electrical natural units'' in which this charge scale is set to
unity (informally $e=1$), absorbing the conversion into the definition of $u$.
Restoring physical units then amounts to re-inserting the appropriate factor
(for example $e$ times a dimensionless lever arm, or more generally the relevant
calibration constant mapping volts to energy) so that the product has units of
energy. Which prefactor appears depends on the experimental realization (gate
voltage, flux bias, piezo drive, etc.), but once this mapping $V\mapsto u$ is
fixed, the kernel units follow as above and the convolution remains dimensionally
consistent.

\subsection{Time-local embedding for scalar filtered control}
\label{sec:embedding_scalar}

A key advantage of kernel-filtered Hamiltonians is that broad and physically
relevant classes of causal kernels admit an equivalent \emph{time-local}
realization once one introduces a small set of auxiliary classical variables.
This converts the nonlocal convolution in the Hamiltonian into an ordinary
differential equation (ODE) for the realized field, coupled to the usual
Schr\"odinger equation.\cite{OppenheimWillsky1997,Kailath1980}
The quantum evolution remains unitary (for Hermitian
instantaneous Hamiltonians), while the control channel becomes an explicit
finite-dimensional dynamical system.

We focus on the case in which the filtered control enters as a scalar
amplitude multiplying a fixed Hermitian operator,
\begin{equation}
    \hat H(t)=\hat H_A+\Phi(t)\,\hat M,
    \qquad
    \Phi(t)=\int_{-\infty}^{t}K(t-s)\,u(s)\,ds,
    \label{eq:Phi_conv_scalar}
\end{equation}
with real command $u(t)$ and causal kernel $K(\tau)$. In this setting, $\Phi(t)$ is the \textit{realized field}, e.g. the effective drive amplitude reaching the device. While this choice will be motivated shortly, it is evident from its form that it is a form of classical memory in the Hamiltonian control.

A particularly convenient class is given by kernels representable as finite
(or truncated) sums of decaying exponentials,
\begin{equation}
    K(\tau)=\sum_{k=1}^{K_{\max}} c_k e^{-\nu_k \tau}\,\Theta(\tau),
    \qquad
    \nu_k>0,\ \ c_k\in\mathbb{R}.
    \label{eq:K_sum_exp_main}
\end{equation}
As we show in the example later, this is motivated by the physics of a control line,
where passive RC ladders and lossy transmission channels naturally generate
multi-exponential impulse responses.\cite{Magos1970,FialkowGerst1951}
This includes, as special cases, the single-mode RC kernel, where the decay
rates $\nu_k$ form a ladder $\nu_k=2\pi k/\beta$.

Defining auxiliary filter modes
\begin{equation}
    \Phi_k(t):=\int_{-\infty}^{t} c_k e^{-\nu_k(t-s)}u(s)\,ds,
    \qquad
    \Phi(t)=\sum_{k=1}^{K_{\max}}\Phi_k(t),
    \label{eq:Phi_modes_def_main}
\end{equation}
one finds that each mode satisfies a first-order ODE,
\begin{equation}
    \dot\Phi_k(t)=-\nu_k\Phi_k(t)+c_k u(t),
    \qquad k=1,\ldots,K_{\max}.
    \label{eq:Phi_mode_ode_main}
\end{equation}
Consequently, the kernel-filtered Schr\"odinger dynamics is equivalent to the
time-local coupled system
\begin{equation}
\begin{cases}
    i\,\dfrac{d}{dt}\ket{\psi(t)}
    =\Big(\hat H_A+\Big[\sum_{k=1}^{K_{\max}}\Phi_k(t)\Big]\hat M\Big)\ket{\psi(t)},\\[6pt]
    \dot\Phi_k(t)=-\nu_k\Phi_k(t)+c_k u(t),
    \qquad k=1,\ldots,K_{\max}.
\end{cases}
\label{eq:schrodinger_filter_coupled_main}
\end{equation}
The structure of these equations are similar, in spirit, to those of circuits with memory \cite{memristors,caravelli,diventra}.

In this representation the memory kernel is replaced by a finite number of
classical state variables $\{\Phi_k\}$ that store the control history.\cite{Kailath1980,OppenheimWillsky1997}
We will argue below why this representation is general enough in the case of a
dissipative (and classically treated) \textit{control} channel.\cite{AgarwalLang2005,Magos1970}

The model of eqns \eqref{eq:schrodinger_filter_coupled_main} also makes precise
the case in which there is the absence of memory. In the strictly memoryless case, one expects
$\Phi(t)$ to track the command $u(t)$ instantaneously, $\Phi(t)=g\,u(t)$ for some
(static) gain $g$. At the kernel level, this corresponds to
\begin{equation}
    K(\tau)\longrightarrow g\,\delta(\tau),
    \label{eq:delta_target}
\end{equation}
so that the convolution in \eqref{eq:Phi_conv_scalar} collapses to
$\Phi(t)=g\,u(t)$.\cite{Lighthill1958,OppenheimWillsky1997}

For the exponential family \eqref{eq:K_sum_exp_main}, the $\delta$-limit is
obtained by sending all filter time-scales to zero while keeping the total
gain fixed. A convenient sufficient condition is
\begin{equation}
    \nu_k \to \infty
    \quad \text{for all } k,
    \qquad
    \frac{c_k}{\nu_k}\to g_k,
    \qquad
    g:=\sum_{k=1}^{K_{\max}} g_k <\infty,
    \label{eq:delta_scaling_condition}
\end{equation}
with $g_k$ finite constants. Under \eqref{eq:delta_scaling_condition},
$K(\tau)$ converges to $g\,\delta(\tau)$ in the distributional sense on
$\mathbb R_+$,\cite{Lighthill1958,Rudin1987}
and hence $\Phi(t)\to g\,u(t)$ for sufficiently regular inputs.\cite{Rudin1987}

There are several ways to derive this limit. A direct derivation follows by expanding the ODEs. From
\eqref{eq:Phi_mode_ode_main} one may write
\begin{equation}
    \Phi_k(t)=\frac{c_k}{\nu_k}\,u(t)-\frac{1}{\nu_k}\,\dot\Phi_k(t).
    \label{eq:Phi_k_algebraic}
\end{equation}
Summing over $k$ and using $\Phi=\sum_k\Phi_k$ yields
\begin{equation}
    \Phi(t)=\sum_{k=1}^{K_{\max}}\frac{c_k}{\nu_k}\,u(t)
    -\sum_{k=1}^{K_{\max}}\frac{1}{\nu_k}\,\dot\Phi_k(t).
    \label{eq:Phi_sum_algebraic}
\end{equation}
If $\nu_k\to\infty$ while $c_k/\nu_k\to g_k$ and the $\dot\Phi_k$ remain
bounded on the time window of interest, the second term vanishes and one
obtains the instantaneous relation
\begin{equation}
    \Phi(t)\longrightarrow g\,u(t),
    \qquad
    g=\sum_{k=1}^{K_{\max}} g_k.
    \label{eq:Phi_instantaneous_limit}
\end{equation}
Equivalently, the kernel converges to a delta distribution with weight $g$.
Indeed, for the single-mode case $K(\tau)=c e^{-\nu\tau}\Theta(\tau)$, if one
sets $c=\nu$ then $K(\tau)=\nu e^{-\nu\tau}\Theta(\tau)$ is an approximate
identity on $\mathbb R_+$ and converges to $\delta(\tau)$ as $\nu\to\infty$.\cite{Lighthill1958,Rudin1987}

While we will discuss later the physical origin of this model, let us discuss what truncating the series implies. First, truncations yield
finite sums of exponentials with rates $\nu_k$ growing linearly with $k$; the
memoryless limit corresponds to pushing all relevant rates far above the
drive bandwidth. In the frequency domain, this is the familiar condition that the
transfer function of the filter becomes approximately constant over the
support of the protocol, so that the realized control is proportional to the
command with negligible phase lag.\cite{OppenheimWillsky1997,Kailath1980}

\subsection{Measures of hysteresis}
\label{sec:hysteresis_definitions}

Let us now move on to the notion of ``memory'' between a control and an observable.
We use the term \emph{hysteresis} in an operational sense: a driven protocol is
hysteretic whenever the relationship between two time-dependent quantities fails
to be single-valued over a cycle. This definition is directly analogous to the
standard use of hysteresis in magnetism, where a multivalued relation between
magnetic induction $B$ and applied field $H$ over a cycle produces a
characteristic loop, and where the loop area quantifies the energy dissipated
per cycle.\cite{Bertotti1998,Mayergoyz2003,Preisach1935}

In the present setting, such multivalued behavior can arise from multiple
mechanisms. One is \emph{classical memory} in the control channel, encoded by a
causal kernel $K$, which makes the realized field $\Phi(t)$ depend on the history
of the commanded signal $u(t)$. A second is \emph{state-lag} relative to the
realized Hamiltonian: even in a closed system this can occur through
non-adiabatic (finite-rate) driving and produces a purely \emph{dynamical}
hysteresis that vanishes in the quasi-static limit.\cite{BornFock1928,Kato1950,Teufel2003}
More generally, additional physical sources of loops may appear (e.g., due to
ergodicity breaking or open-system effects), and the goal is to disentangle
control-channel memory from these state-dynamics contributions.

A useful feature of the scalar filtered-control model adopted throughout the
paper is that control-channel and state-dynamics effects can be separated
experimentally and numerically by comparing hysteresis measures in the
$(u,\Phi)$ plane (control channel only) and in the $(\Phi,O)$ plane (state
response to the realized Hamiltonian). The notation here is consistent with the
model introduced earlier:
\begin{eqnarray}
    \hat H(t) &=& \hat H_A + \Phi(t)\,\hat M, \nonumber \\
    \Phi(t)  &=& \int_{-\infty}^{t} K(t-s)\,u(s)\,ds,
    \label{eq:hyst_model_scalar}
\end{eqnarray}
with real-valued command $u(t)$ and causal kernel $K$.\cite{OppenheimWillsky1997,Kailath1980}
Throughout, the observable $O(t)$ is always computed from the evolution under
$\hat H(t)$ with the \emph{realized} field $\Phi(t)$.

Assume a periodic protocol with period $T$ after transients have died out, so that
\begin{equation}
    u(t+T)=u(t).
    \label{eq:steady_cycle}
\end{equation}
The drive channel determines a closed parametric curve in the $(u,\Phi)$ plane,
\begin{equation}
    \Gamma_{u\Phi}:\ t\in[0,T]\mapsto \big(u(t),\Phi(t)\big).
\end{equation}
If the channel is memoryless, $\Phi(t)$ is (approximately) an instantaneous
function of $u(t)$ and $\Gamma_{u\Phi}$ collapses to a single-valued curve. With
memory, the same value of $u$ can occur at two different times in the cycle with
different values of $\Phi$, producing a loop, in direct analogy with $B$--$H$
hysteresis loops in ferromagnets or in capacitors/inductors.\cite{Bertotti1998,Mayergoyz2003}

We quantify this \emph{control-channel hysteresis} by the oriented area enclosed
by $\Gamma_{u\Phi}$,
\begin{equation}
    \mathcal{A}_{u\Phi}
    :=
    \oint_{\Gamma_{u\Phi}} \Phi\,du
    =
    \int_0^T \Phi(t)\,\dot u(t)\,dt.
    \label{eq:A_uPhi_def}
\end{equation}
A nonzero $\mathcal{A}_{u\Phi}$ is an intrinsic signature of memory in the map
$u\mapsto \Phi$: it vanishes whenever $\Phi$ is a single-valued function of $u$
along the cycle. In the embedding of Sec.~\ref{sec:embedding_scalar}, this area
measures the extent to which the auxiliary filter variables $\{\Phi_k(t)\}$ lag
the command $u(t)$, in the same way that magnetic hysteresis-loop area measures
lag between $B$ and $H$ in classical Preisach-type descriptions.\cite{Preisach1935,Mayergoyz2003,BoydChua1985}

To probe hysteresis originating from the \emph{state response} to the realized
field, we analogously consider the closed curve in the $(\Phi,O)$ plane,
\begin{equation}
    \Gamma_{\Phi O}:\ t\in[0,T]\mapsto \big(\Phi(t),O(t)\big),
\end{equation}
and define the corresponding oriented area
\begin{equation}
    \mathcal{A}_{\Phi O}
    :=
    \oint_{\Gamma_{\Phi O}} O\,d\Phi
    =
    \int_0^T O(t)\,\dot \Phi(t)\,dt.
    \label{eq:A_PhiO_def}
\end{equation}
In the quasi-static limit for a closed system, $O(t)$ becomes (approximately) a
single-valued function of $\Phi(t)$ and $\mathcal{A}_{\Phi O}\to 0$, whereas
finite-rate (non-adiabatic) driving can produce $\mathcal{A}_{\Phi O}\neq 0$ as a
form of dynamical hysteresis.

Finally, it is useful to note that an ``apparent'' loop can arise if one plots
$O$ against the \emph{command} $u$ rather than against the realized field $\Phi$.
Define
\begin{equation}
    \mathcal{A}_{uO}
    :=
    \oint O\,du
    =
    \int_0^T O(t)\,\dot u(t)\,dt.
    \label{eq:A_uO_def}
\end{equation}
Even when the state has no hysteresis with respect to the realized field (so that
$\mathcal{A}_{\Phi O}=0$), one may still obtain $\mathcal{A}_{uO}\neq 0$ purely
from control-channel memory.

A minimal example is given by pure control-lag produces an apparent loop.
Consider the lag kernel
\begin{equation}
    K(t)=\delta(t-t_{\rm lag}),
    \label{eq:lag_kernel}
\end{equation}
so that
\begin{equation}
    \Phi(t)=\int_{-\infty}^{t}\delta(t-s-t_{\rm lag})\,u(s)\,ds = u(t-t_{\rm lag}).
    \label{eq:lag_realized}
\end{equation}
Assume the system response is strictly memoryless with respect to the realized
field, i.e., $O(t)=f(\Phi(t))$ for a single-valued function $f$. Then the curve
$\Gamma_{\Phi O}$ is single-valued and $\mathcal{A}_{\Phi O}=0$ identically.
Nevertheless, the parametric curve $t\mapsto (u(t),O(t))=(u(t),f(u(t-t_{\rm lag})))$
can enclose a nonzero area $\mathcal{A}_{uO}$ and thus display a seemingly
hysteretic loop in the $(u,O)$ plane. For concreteness, one may take, e.g.,
\begin{equation}
    f(\Phi)=\tanh(a\Phi),
    \label{eq:example_f_tanh}
\end{equation}
which yields a pronounced loop in $(u,O)$ for cyclic drives $u(t)$ despite the
absence of any state memory.\footnote{The author is indebted to P. Sathe for suggesting this example.}

More generally, it is easy to see that the exponential kernel
\begin{equation}
    K(\tau)=\alpha e^{-\alpha\tau}\Theta(\tau),
\end{equation}
the realized field obeys $\dot\Phi=\alpha\big(u-\Phi\big)$. For a sinusoidal
command $u(t)=u_0\sin(\omega t)$, the steady-state response takes the form
\begin{equation}
    \Phi(t)=u_0\frac{\alpha}{\sqrt{\alpha^2+\omega^2}}
    \sin(\omega t-\delta),
    \qquad
    \delta=\arctan\!\frac{\omega}{\alpha}.
    \label{eq:Phi_sine_response}
\end{equation}
The resulting curve $\Gamma_{u\Phi}$ is an ellipse, and the loop area is
\begin{equation}
    |\mathcal{A}_{u\Phi}|
    =
    \pi u_0^2\,\frac{\alpha\omega}{\alpha^2+\omega^2}.
    \label{eq:A_uPhi_explicit}
\end{equation}
Thus control hysteresis is largest when $\omega$ is comparable to the filter
rate $\alpha$, and it vanishes both for $\omega\ll \alpha$ (negligible lag)
and for $\omega\gg \alpha$ (strong attenuation), exactly as in classical
single-pole low-pass responses in electronics and control.\cite{OppenheimWillsky1997,Kailath1980}

Let $\hat O$ be a time-independent observable and define
\begin{equation}
    O(t):=\langle \hat O\rangle_t.
\end{equation}
Plotting the observable against the command produces a closed loop
\begin{equation}
    \Gamma_{uO}:\ t\in[0,T]\mapsto \big(u(t),O(t)\big),
\end{equation}
and we define its oriented area as in eqn. (\ref{eq:A_uO_def}).
In general, $\mathcal{A}_{uO}$ can be nonzero for two reasons: because the
control channel is hysteretic ($u\mapsto\Phi$ is history dependent), or
because the quantum state itself exhibits dynamical lag relative to the
realized Hamiltonian, as in generic nonadiabatic or non-Markovian
evolution.\cite{BornFock1928,Kato1950,BreuerPetruccione2002,RivasHuelgaPlenio2014}

To isolate the quantum contribution, it is natural to ``factor out'' the
control channel by plotting the same observable against the realized field
$\Phi(t)$, defining
\begin{equation}
    \Gamma_{\Phi O}:\ t\in[0,T]\mapsto \big(\Phi(t),O(t)\big),
\end{equation}
and the corresponding loop area as in eqn. (\ref{eq:A_PhiO_def}).
By construction, $\mathcal{A}_{\Phi O}$ is insensitive to purely classical
filtering when the quantum response is effectively instantaneous in $\Phi$.
In that regime, residual hysteresis in the $(\Phi,O)$ plane reflects genuine
quantum dynamical effects rather than distortions in the control channel.

A few results can be proved in this setting that are well known, but that we report here for completeness.
For a closed cycle $\Phi(0)=\Phi(T)$ we write the loop area as the 
integral $A_{uO}:=\oint O\,d\Phi$; when $\Phi$ is absolutely continuous this
reduces to $A_{uO}=\int_0^T O(u)\,\Phi'(u)\,du$.
A key consequence is that if the response is \emph{single-valued} in the realized
field, $O(u)=f(\Phi(u))$, then $A_{uO}=\oint f(\Phi)\,d\Phi=0$, i.e.\ a nonzero
$A_{uO}$ necessarily implies multibranch (history-dependent) behavior.
Independently of uniqueness, boundedness of $O$ yields an a priori variation
bound
\begin{equation}
|A_{uO}|\le \frac{O_{\max}-O_{\min}}{2}\int_0^T |\Phi'(u)|\,du,
\end{equation}
and, under a single turning-point sweep, a two-branch representation
$A_{uO}=\int_{\Phi_{\min}}^{\Phi_{\max}}(O_\uparrow-O_\downarrow)\,d\Phi$ that
immediately gives lower bounds when the branch gap has a definite sign.
All proofs and precise assumptions are collected in
Appendix~\ref{app:integralsbounds}.

\subsection{Classical memory fingerprint in the adiabatic regime}
\label{sec:adiabatic_regime}

In this section we address the regime in which the quantum dynamics is adiabatic
with respect to the \emph{realized} Hamiltonian path generated by the filtered
control $\Phi(t)$. In this limit the quantum response carries essentially no
additional ``memory'' beyond the instantaneous value of $\Phi$, and any
hysteresis observed in the $(u,O)$ plane can be understood as the image, under
a static nonlinear map $O=f(\Phi)$, of the purely classical hysteresis already
present in the control channel $(u,\Phi)$.\cite{BornFock1928,Kato1950,Teufel2003}

We use the scalar filtered-control model,
\begin{equation}
    \hat H(t)=\hat H_A+\Phi(t)\,\hat M,
    \qquad
    \Phi(t)=\int_{-\infty}^{t}K(t-s)\,u(s)\,ds,
\end{equation}
and consider protocols for which transients have decayed so that $u$ and
$\Phi$ are periodic with period $T$.

It is convenient to view the Hamiltonian as a one-parameter family
\begin{equation}
    H(\Phi)=\hat H_A+\Phi\,\hat M,
\end{equation}
and to interpret the drive as prescribing a path $\Phi(t)$ in parameter space.
Assume that along the relevant range of $\Phi$ the spectrum splits into
 energy levels separated by a strictly positive gap. Denote by $\Pi(\Phi)$
the spectral projector onto a chosen band and assume $\Phi\mapsto \Pi(\Phi)$
is smooth.\cite{Teufel2003}

A convenient formulation of adiabatic following is the Kato form: there
exists an intertwiner $W(t)$ solving
\begin{equation}
    \dot W(t)=\big[\dot \Pi(t),\Pi(t)\big]\,W(t),
    \qquad
    W(0)=\mathbb 1,
    \label{eq:kato_intertwiner}
\end{equation}
where $\Pi(t):=\Pi(\Phi(t))$.\cite{Kato1950,Teufel2003,AvronElgart1999}
The operator $W(t)$ transports the instantaneous
spectral subspace along the path, in the sense that
\begin{equation}
    \Pi(t)=W(t)\,\Pi(0)\,W^\dagger(t).
    \label{eq:projector_transport}
\end{equation}
In the adiabatic regime, the exact unitary $\hat U(t)$ generated by $\hat H(t)$ remains
close to a product of the dynamical phase within each band and the
intertwiner $W(t)$, and transitions between distinct levels are suppressed.
Rigorous adiabatic theorems provide quantitative bounds on this approximation
in terms of the gap and time scales of the protocol.\cite{Teufel2003,JansenRuskaiSeiler2007}
For the present purposes, this means that expectation values of observables
are well approximated by functions of the instantaneous spectral data of
$H(\Phi(t))$, rather than by functionals of the entire past history of $\Phi$.
In other words, in the adiabatic limit $\mathcal{A}_{\Phi O}$ becomes a purely
geometric and spectral quantity, while $\mathcal{A}_{uO}$ retains the imprint
of classical memory in the control channel via the map $u\mapsto \Phi$.

The relevant small parameter controlling adiabaticity is set by the speed of
the realized protocol compared to the instantaneous gap.\cite{Kato1950,Teufel2003,JansenRuskaiSeiler2007}
Since $\Phi(t)$ is itself produced by a filter, the adiabatic condition should
be formulated in terms of $\dot\Phi(t)$ (not $\dot u(t)$): memory in the
control can slow down or phase-shift $\Phi$ relative to $u$, thereby modifying
adiabaticity even at fixed command frequency.

\subsection{Adiabatic response function $f(\Phi)$ and vanishing of $\mathcal A_{\Phi O}$}
\label{subsec:fPhi_and_AphiO}

Let $\hat O$ be a time-independent observable and define $O(t)=\langle \hat
O\rangle_t$. In the adiabatic regime with respect to the realized Hamiltonian
path, the state remains (to a good approximation) confined to the
instantaneous spectral decomposition of $H(\Phi)$ with negligible inter-band
transitions. A convenient representation that fixes notation is the adiabatic
diagonal ensemble associated with the initial state, closely related to the
diagonal ensembles used to describe relaxation in isolated quantum systems.\cite{RigolDunjkoOlshanii2008,Polkovnikov2011}

Let $H(\Phi)$ have instantaneous spectral projectors $\Pi_i(\Phi)$ associated
with eigenvalues $\varepsilon_i(\Phi)$, with constant ranks along
the path. Define the initial band weights
\begin{equation}
    p_i=\mathrm{Tr}\!\big(\Pi_i(\Phi(0))\,\rho(0)\big),
    \qquad
    \sum_i p_i=1,
\end{equation}
and introduce the adiabatic reference state
\begin{equation}
    \rho_{\mathrm{ad}}(\Phi)=\sum_i p_i\,\frac{\Pi_i(\Phi)}{\mathrm{Tr}\,\Pi_i(\Phi)}.
    \label{eq:rho_ad_def_main}
\end{equation}
For nondegenerate energy levels, this reduces to frozen populations in instantaneous
eigenstates, consistent with standard adiabatic theorems.\cite{BornFock1928,Kato1950,Teufel2003}
The corresponding adiabatic response function is
\begin{equation}
    f(\Phi)=\mathrm{Tr}\!\big(\rho_{\mathrm{ad}}(\Phi)\,\hat O\big)
    =\sum_i \frac{p_i}{\mathrm{Tr}\,\Pi_i(\Phi)}\,\mathrm{Tr}\!\big(\Pi_i(\Phi)\,\hat O\big).
    \label{eq:fPhi_main}
\end{equation}
Along an adiabatic trajectory $\Phi(t)$ one then has
\begin{equation}
    O(t)\approx f(\Phi(t)).
    \label{eq:O_following}
\end{equation}

Since $f(\Phi)$ is single-valued, the oriented loop area in the $(\Phi,O)$
plane vanishes,
\begin{equation}
    \mathcal A_{\Phi O}
    =\oint O\,d\Phi
    \approx \oint f(\Phi)\,d\Phi
    =0.
\end{equation}
Thus, once $\Phi$ is taken as the independent variable, adiabatic quantum
evolution does not generate additional hysteresis beyond what is already
present in the control channel.

Even though $\mathcal A_{\Phi O}$ vanishes in the adiabatic regime, the loop
in the $(u,O)$ plane is generically nontrivial whenever the \emph{control
channel} is hysteretic. Combining the loop-area definitions with the
adiabatic approximation \eqref{eq:O_following} yields
\begin{equation}
    \mathcal A_{uO}
    =\oint O\,du
    \approx \int_0^T f(\Phi(t))\,\dot u(t)\,dt.
    \label{eq:AuO_adiabatic_start}
\end{equation}
Thus, in the adiabatic regime the observable loop is the image of the
classical loop $t\mapsto(u(t),\Phi(t))$ under the nonlinear transformation
$\Phi\mapsto f(\Phi)$.

A transparent consequence is obtained in the weak-modulation limit. Suppose
the command is scaled as $u(t)=\varepsilon u_1(t)$ with $\varepsilon\ll 1$.
By linearity of the filter, $\Phi(t)=\varepsilon \Phi_1(t)$, and expanding
$f(\Phi)$ about $\Phi=0$ gives
\begin{equation}
    f(\Phi)=f(0)+f'(0)\,\Phi+O(\Phi^2).
\end{equation}
The constant term drops out of \eqref{eq:AuO_adiabatic_start} because the
cycle is closed, $\int_0^T \dot u(t)\,dt=u(T)-u(0)=0$, leaving
\begin{equation}
    \mathcal A_{uO}
    \approx f'(0)\int_0^T \Phi(t)\,\dot u(t)\,dt
    = f'(0)\,\mathcal A_{u\Phi}
    +O(\varepsilon^3).
    \label{eq:AuO_AuPhi_linear_main}
\end{equation}
Equation \eqref{eq:AuO_AuPhi_linear_main} separates roles: the control memory
enters only through the classical loop area $\mathcal A_{u\Phi}$, while the
quantum system contributes a static factor $f'(0)$ determined by spectral
properties of $H(\Phi)$ and by the choice of observable and initial state.

Differentiating \eqref{eq:fPhi_main} yields a projector expression for
$f'(0)$ (assuming $\hat O$ has no explicit $\Phi$ dependence),
\begin{equation}
    f'(0)
    =\sum_i \frac{p_i}{\mathrm{Tr}\,\Pi_i(0)}\,
    \mathrm{Tr}\!\Big(\big(\partial_\Phi \Pi_i(\Phi)\big)_{\Phi=0}\,\hat O\Big),
    \label{eq:fprime_projector_main}
\end{equation}
which can be made fully explicit using standard resolvent identities for
$\partial_\Phi\Pi_i$ when the gap is nonzero.\cite{Teufel2003}
As we show in App.~\ref{app:expressionfprime0},
introducing the reduced resolvent on the complement of the $i$th band,
$$
R_i(0)=(\mathbb 1-\Pi_i(0))\,(H(0)-\varepsilon_i(0))^{-1}\,(\mathbb 1-\Pi_i(0)),
$$
one has the Kato-type identity\cite{Kato1950}
$$
\big(\partial_\Phi\Pi_i(\Phi)\big)_{\Phi=0}
=
R_i(0)\,\hat M\,\Pi_i(0)+\Pi_i(0)\,\hat M\,R_i(0),
$$
and therefore
$$
f'(0)=2\sum_i \frac{p_i}{\mathrm{Tr}\,\Pi_i(0)}\,
\mathsf{Re}\,\mathrm{Tr}\!\Big(\Pi_i(0)\,\hat O\,R_i(0)\,\hat M\,\Pi_i(0)\Big).
$$
Equivalently, expanding $R_i(0)$ over the remaining  energy levels yields the
sum-over-levels (first-order perturbation) form\cite{SakuraiNapolitano2017}
$$
f'(0)=2\sum_i \frac{p_i}{\mathrm{Tr}\,\Pi_i(0)}\sum_{j\neq i}
\frac{\mathsf{Re}\,\mathrm{Tr}\!\Big(\Pi_i(0)\,\hat O\,\Pi_j(0)\,\hat M\,\Pi_i(0)\Big)}
{\varepsilon_i(0)-\varepsilon_j(0)},
$$
which makes the dependence of the proportionality factor $f'(0)$ on the
spectrum of $H(0)$ and on the matrix elements of $\hat O$ and $\hat M$ explicit.

\subsection{Nonadiabatic response: separating control-channel memory from quantum information backflow}
\label{sec:nonadiabatic_response}

Section~\ref{sec:adiabatic_regime} treated the regime in which the observable
response can be expressed as a single-valued function of the realized control
field, $O(t)\simeq f(\Phi(t)),$ so that the loop in the $(\Phi,O)$ plane
collapses and $\mathcal A_{\Phi O}\simeq 0.$ We now move beyond this limit.

Outside the adiabatic regime, the state does not remain confined to an
instantaneous eigenspace of $H(\Phi(t))$. Even though the qubit is still closed
and the control memory is still classical, the observable acquires dependence
on the history of $\Phi$ through coherent transitions and phase accumulation.
Operationally, the loop area $\mathcal A_{\Phi O}$ becomes the relevant
diagnostic: it vanishes when $O$ is (approximately) single-valued in $\Phi$,
and it becomes nonzero when the same realized field value $\Phi$ can correspond
to different quantum states depending on where one is on the cycle.

A systematic analytic treatment can be built from an adiabatic perturbation
theory in the instantaneous eigenbasis of $H(\Phi)$ that, however, goes beyond the scope of this work. Nonetheless, for an initial ground-state
preparation, the leading nonadiabatic correction is controlled by the
off-diagonal coupling $\langle e(\Phi)|\partial_t g(\Phi)\rangle$, which is
proportional to $\dot\Phi$ and suppressed by the gap. This suggests the scaling
structure
\begin{equation}
    \mathcal A_{\Phi O}
    \;\sim\;
    \ell(\hat O)\times
    \int_0^T \frac{\dot \Phi(t)^2}{\Delta(\Phi(t))^3}\,dt,
\end{equation}
where $\ell(\hat O)$ is a prefactor depending on the choice of observable, so
that, at fixed realized amplitude $A$, nonadiabatic loop effects grow with
frequency and are suppressed by a large gap. The precise prefactor and
functional form depend on the chosen observable and on the initial state, and
can be evaluated explicitly for $\hat O=\sigma_x,\sigma_z$ once the
instantaneous eigenvectors are fixed.

In the exponential kernel model, increasing $\omega$ simultaneously (i) increases the
classical phase lag and (ii) decreases the realized amplitude $A$. These
effects compete:
\begin{equation}
    \text{classical hysteresis} \;\; |\mathcal A_{u\Phi}| \;\;
    \text{peaks at} \;\; \omega\tau_c\approx 1,
\end{equation}
while
\begin{equation}
    \text{nonadiabaticity (at fixed $A$)} \;\; \text{typically grows with} \;\; \omega.
\end{equation}
Because $A=A(\omega)$ decreases with $\omega$ for a fixed command amplitude
$u_0$, the net nonadiabatic signature can exhibit a crossover: as $\omega$
increases from zero, classical hysteresis first grows; at larger $\omega$,
classical hysteresis decays, but nonadiabatic corrections may remain visible
if the realized sweep rate $\omega A(\omega)$ is still large enough relative
to the gap scale. This is exactly the regime where $\mathcal A_{\Phi O}$
becomes essential: it isolates state-history effects that cannot be attributed
to the control filter alone.

In the nonadiabatic regime the state no longer follows instantaneous spectral
subspaces of $\hat H(t)=\hat H_A+\Phi(t)\hat M,$ and the mapping $\Phi \mapsto O$ can
become history dependent even when the only source of \emph{physical} memory
is the classical filtering relation $\Phi = K*u.$ Consequently,
$\mathcal A_{\Phi O}$ can be nonzero purely because of coherent unitary
dynamics, as in standard studies of nonadiabatic work and dissipation in
driven closed systems~\cite{Campisi2011,Talkner2007}.

Beyond these general comments, this section has two aims. First, we state an exact identity that rewrites
cyclic ``work-like'' integrals as commutator-weighted functionals of the
control, without any adiabatic approximation. Second, we explain why such
nonadiabatic hysteresis measures, while operationally useful, do not by
themselves certify \emph{quantum memory} in the open-system sense of
information backflow~\cite{Breuer2016,Rivas2014}. The technical derivations
and bounds are deferred to
Appendix~\ref{app:nonadiabatic_cyclic_integrals}.

We consider a cyclic protocol of duration $T$ in steady periodic response,
$$u(T)=u(0),\qquad \rho(T)=\rho(0),$$
and define the loop functional
\begin{equation}
    I \;=\; \oint dt\;\mathrm{Tr}\!\big(\rho(t)\hat O\big)\,\dot u(t);
    \label{eq:I_def_main_clean}
\end{equation}
for simplicity, we consider a time-independent observable $\hat O$. For unitary
evolution, $\dot\rho(t)=-i[\hat H(t),\rho(t)],$ integration by parts yields the exact
identity
\begin{equation}
    I \;=\; i\oint dt\;u(t)\,\big\langle[\hat O,\hat H(t)]\big\rangle_t.
    \label{eq:I_commutator_main_clean}
\end{equation}
We specialize to the scalar filtered-control model used throughout the paper,
\begin{equation}
    \hat H(t)=\hat H_A+\Phi(t)\hat M,
    \qquad
    \Phi(t)=\int_{0}^{t}K(t-s)\,u(s)\,ds,
    \label{eq:scalar_filtered_model_main_clean}
\end{equation}
with real causal kernel $K$ and real command $u$. Introducing the
time-dependent commutator weights
\begin{equation}
    a(t)=\big\langle[\hat O,\hat H_A]\big\rangle_t,
    \qquad
    b(t)=\big\langle[\hat O,\hat M]\big\rangle_t,
    \label{eq:ab_def_main_clean}
\end{equation}
one obtains an exact decomposition
\begin{equation}
    I
    \;=\;
    i\oint_0^T dt\;u(t)\,a(t)
    \;+\;
    i\oint_0^T dt\;u(t)\,\Phi(t)\,b(t).
    \label{eq:I_split_main_clean}
\end{equation}
The first term is present even in the instantaneous-control limit and depends
on correlations between the command waveform and the internal ``bias''
quantity $a(t)$. The second term is the contribution in which the control
history enters explicitly through the convolution $\Phi=K*u$. When $b(t)$
is slowly varying or approximately constant over a cycle, the second term
reduces to a quadratic functional of $u$ (as in the adiabatic discussion),
but Eq.~\eqref{eq:I_split_main_clean} remains exact in the fully nonadiabatic
unitary regime.

A compact estimate that makes the kernel dependence explicit is obtained by
bounding the two contributions separately. Writing
$\|u\|_2=\|u\|_{L^2(0,T)}$ and
$\|K\|_1=\|K(\tau)\Theta(\tau)\|_{L^1(\mathbb R_+)},$
Appendix~\ref{app:nonadiabatic_cyclic_integrals} shows that
\begin{equation}
    |I|
    \;\le\;
    \|u\|_{2}\,\|a\|_{2}
    \;+\;
    \|b\|_{\infty}\,\|K\|_{1}\,\|u\|_{2}^{2},
    \label{eq:I_total_bound_main}
\end{equation}
where $\|b\|_\infty=\mathrm{sup}_{t\in[0,T]}|b(t)|.$
The first term is independent of the kernel and scales linearly with the
drive amplitude, while the second term is the kernel-mediated contribution
and scales quadratically, with explicit dependence on the control-channel
memory strength through $\|K\|_1$.

\section{Results}
\subsection{Properties of the decay channels}
\label{subsec:single_channel}

We begin with the minimal control-channel model: a single passive first-order
low-pass element relating the commanded waveform $u(t)$ to the realized field
$\Phi(t)$ at the device node,
\begin{equation}
    \tau_c\,\dot \Phi(t)+\Phi(t)=u(t),
    \qquad
    \tau_c>0.
    \label{eq:single_RC}
\end{equation}
Equivalently, $\Phi=(K*u)$ with the causal kernel
\begin{equation}
    K(\tau)=\frac{1}{\tau_c}e^{-\tau/\tau_c}\Theta(\tau),
    \label{eq:single_RC_kernel}
\end{equation}
so $\tau_c$ sets the classical memory time of the control line.

In steady periodic operation, a sinusoidal command $u(t)=u_0\sin(\omega t)$
produces a single-harmonic realized response,
\begin{eqnarray}
    \Phi(t) &=& A\,\sin(\omega t-\delta), \nonumber \\
    A &=& \frac{u_0}{\sqrt{1+(\omega\tau_c)^2}},
    \qquad
    \delta=\arctan(\omega\tau_c),
    \label{eq:single_RC_sine_response}
\end{eqnarray}
as follows from the transfer function $G(i\omega)=1/(1+i\omega\tau_c)$.
Thus the channel memory has two experimentally accessible signatures: amplitude
roll-off $A/u_0$ and a phase lag $\delta$. Both will reappear below as the
frequency dependence of loop areas.

The ideal instantaneous-channel limit corresponds to the kernel concentrating at
$\tau=0$ so that $\Phi(t)\to u(t)$. For the single-pole model this is achieved by
$\tau_c\to 0$ at fixed static gain. Operationally, for band-limited commands with
characteristic frequency $\omega$, the condition $\omega\tau_c\ll 1$ implies
\begin{equation}
    \Phi(t)=u(t)+\mathcal O(\omega\tau_c),
    \label{eq:Phi_approx_u_small_omega_tau}
\end{equation}
and the classical control hysteresis vanishes in this limit.

After transients die out, the pair $(u(t),\Phi(t))$ traces a closed curve in the
$(u,\Phi)$ plane over one period $T=2\pi/\omega$. We quantify the resulting
classical control hysteresis by the oriented loop area
\begin{equation}
    \mathcal A_{u\Phi}
    :=
    \oint \Phi\,du
    =
    \int_0^T \Phi(t)\,\dot u(t)\,dt.
    \label{eq:A_uPhi_def_single_channel}
\end{equation}
For the single-pole response \eqref{eq:single_RC_sine_response}, direct
evaluation gives
\begin{equation}
    \mathcal A_{u\Phi}
    =
    -\pi u_0^2\,\frac{\omega\tau_c}{1+(\omega\tau_c)^2},
    \label{eq:A_uPhi_single_pole}
\end{equation}
i.e.\ $|\mathcal A_{u\Phi}|$ is maximized at $\omega\tau_c=1$ and decays for both
$\omega\tau_c\ll 1$ (negligible lag) and $\omega\tau_c\gg 1$ (strong attenuation).
This nonmonotone dependence is the classical ``memory resonance'' of a first-order
low-pass element.

For a measured observable $O(t)=\langle\psi(t)|\hat O|\psi(t)\rangle$ we define
\begin{eqnarray}
    \mathcal A_{uO}
    &:=&
    \oint O\,du
    =
    \int_0^T O(t)\,\dot u(t)\,dt,
    \nonumber \\
    \mathcal A_{\Phi O}
    &:=&
    \oint O\,d\Phi
    =
    \int_0^T O(t)\,\dot\Phi(t)\,dt.
    \label{eq:A_uO_A_PhiO_def_single_channel}
\end{eqnarray}
In an adiabatic regime with respect to the realized Hamiltonian path
$H(\Phi)=\sigma_z+\Phi\sigma_x$, and for preparations that follow an
instantaneous eigenspace (e.g.\ ground-state following), $O(t)$ becomes a
single-valued function of the realized field,
\begin{equation}
    O(t)\approx f(\Phi(t)).
    \label{eq:O_fPhi_single_channel}
\end{equation}
Consequently the $(\Phi,O)$ loop collapses to a curve and
\begin{equation}
    \mathcal A_{\Phi O}\approx 0.
    \label{eq:A_PhiO_zero_single_channel}
\end{equation}
This provides a clean operational separation: in the adiabatic regime any
hysteresis observed in the commanded plane $(u,O)$ is not evidence of intrinsic
quantum memory, but rather a deterministic consequence of the classical filter
$u\mapsto\Phi$ together with the static nonlinear map $\Phi\mapsto f(\Phi)$.

If the drive amplitude is small enough that $\Phi(t)$ remains near $0$, then
\begin{equation}
    f(\Phi)=f(0)+f'(0)\Phi+\mathcal O(\Phi^2).
    \label{eq:f_Taylor_single_channel}
\end{equation}
Since $\int_0^T \dot u(t)\,dt=0$, the constant term does not contribute to
$\mathcal A_{uO}$, and we obtain
\begin{equation}
    \mathcal A_{uO}
    =
    f'(0)\,\mathcal A_{u\Phi}
    +\mathcal O(\text{amplitude}^3).
    \label{eq:A_uO_fprime_A_uPhi_single_channel}
\end{equation}
In particular, for ground-state response with $\hat O=\sigma_x$ in the Hamiltonian of eqn. (\ref{eq:H_qubit_control_channel_section}), one has $f'(0)=-1$, so to leading order
\begin{equation}
    \mathcal A_{uO}\approx -\,\mathcal A_{u\Phi},
    \label{eq:A_uO_minus_A_uPhi_single_channel}
\end{equation}
and hence inherits the characteristic single-pole frequency dependence
\begin{equation}
    |\mathcal A_{uO}|
    \propto
    \frac{\omega\tau_c}{1+(\omega\tau_c)^2}
    \qquad
    \text{(adiabatic, weak-drive regime)}.
    \label{eq:A_uO_freq_fingerprint_single_channel}
\end{equation}

A convenient sufficient adiabaticity criterion compares the rate of change of
the Hamiltonian to the instantaneous gap. For $H(\Phi)=\sigma_z+\Phi\sigma_x$ the
gap is
\begin{equation}
    \Delta(\Phi)=2\sqrt{1+\Phi^2},
    \label{eq:gap_single_channel}
\end{equation}
and a typical condition has the form
\begin{equation}
    \frac{|\dot\Phi(t)|}{\Delta(\Phi(t))^2}\ll 1
    \quad\text{uniformly on the cycle}.
    \label{eq:adiabaticity_criterion_single_channel}
\end{equation}
For the steady-state filtered sinusoid \eqref{eq:single_RC_sine_response},
$|\dot\Phi|\le \omega A$, so low frequency and/or strong filtering (small $A$)
both promote adiabaticity. This observation will be used to organize the
adiabatic versus nonadiabatic regimes in later sections.

The single-exponential kernel \eqref{eq:single_RC_kernel} corresponds to the
impulse response of one lumped RC element, i.e.\ a single relaxational degree of
freedom in the control line. In realistic cryogenic wiring, however, the command
generated at room temperature traverses a \emph{distributed} passive network with
many internal charge-storage and dissipation mechanisms: distributed capacitance
to ground, resistive losses (including thermalized attenuators), bias tees and
low-pass sections, connectors and packaging parasitics, and on-chip filtering.
A standard circuit idealization of such a channel is an \emph{RC ladder} (a lossy
line discretized into many RC sections), which is still linear and passive but
typically high order.

For passive RC networks the transfer function $G(s)=\Phi(s)/u(s)$ is stable and,
for ladder families of interest, has poles on the negative real axis; in
particular, for a finite ladder $G$ is rational with real negative poles.
Consequently the impulse response is a superposition of decaying exponentials.
Appendix~\ref{app:rc_realizability} derives this explicitly for an $N$-stage ladder
and relates it to classical RC realizability/synthesis results
\cite{FialkowGerst1951,Magos1970}. In the time domain this motivates representing
the control kernel as a finite exponential-mode expansion,
\begin{equation}
    K(\tau)=\sum_{k=1}^{K_{\max}} c_k\,e^{-\nu_k\tau}\Theta(\tau),
    \qquad
    \nu_k>0,
    \label{eq:multi_exp_kernel_single_channel}
\end{equation}
which is \emph{exact} whenever the channel admits a finite RC-ladder realization
(and remains a controlled approximation for distributed lines).

Introducing auxiliary mode variables
\begin{eqnarray}
    \Phi_k(t):&=&\int_{-\infty}^{t} c_k e^{-\nu_k(t-s)}u(s)\,ds,
    \nonumber \\
\Phi(t)&=&\sum_{k=1}^{K_{\max}}\Phi_k(t),
\label{eq:Phi_modes_def_single_channel}
\end{eqnarray}
turns the convolutional memory into the time-local realization
\begin{equation}
    \dot\Phi_k(t)=-\nu_k\Phi_k(t)+c_k u(t),
    \qquad
    k=1,\dots,K_{\max}.
    \label{eq:Phi_modes_ODE_single_channel}
\end{equation}
For an actual finite ladder, the rates $\{\nu_k\}$ are precisely the relaxation
rates of the circuit modes (the poles of $G$), while the weights $\{c_k\}$
encode the input--output coupling of each mode (Appendix~\ref{app:rc_realizability}).

This representation also yields an explicit expression for the classical
control-channel hysteresis area under harmonic drive. For
$u(t)=u_0\sin(\omega t)$ in steady state, linearity implies that
$\Phi(t)$ is again purely sinusoidal at frequency $\omega$, with complex gain
\begin{equation}
    G(i\omega)=\frac{\Phi(i\omega)}{u(i\omega)}
    =\sum_{k=1}^{K_{\max}}\frac{c_k}{\nu_k+i\omega}.
    \label{eq:G_multi_modes}
\end{equation}
Equivalently, writing $G(i\omega)=|G(i\omega)|e^{-i\delta(\omega)}$ with
$\delta(\omega)\in[0,\pi/2)$, one has
$\Phi(t)=u_0|G(i\omega)|\sin(\omega t-\delta(\omega))$.
The oriented loop area in the $(u,\Phi)$ plane,
\begin{equation}
    \mathcal A_{u\Phi}=\oint \Phi\,du=\int_0^T \Phi(t)\,\dot u(t)\,dt,
    \qquad T=\frac{2\pi}{\omega},
\end{equation}
can then be evaluated in closed form:
\begin{equation}
    \mathcal A_{u\Phi}
    =
    -\pi u_0^2\,|G(i\omega)|\sin\delta(\omega)
    =
    \pi u_0^2\,\Im\,G(i\omega).
    \label{eq:AuPhi_ImG_general}
\end{equation}
We anticipate that the formula above, derived from a classical approximation to the drive channel, has a quantum analog derived in a later section.

Using \eqref{eq:G_multi_modes} this becomes an explicit sum over relaxation
channels,
\begin{equation}
    \mathcal A^{K_{max}}_{u\Phi}
    =
    -\pi u_0^2\,\omega\sum_{k=1}^{K_{\max}}\frac{c_k}{\nu_k^2+\omega^2}.
    \label{eq:AuPhi_multi_modes_explicit}
\end{equation}
In particular, each mode contributes a negative (lag-induced) area of magnitude
$\propto \omega/(\nu_k^2+\omega^2)$, peaking near $\omega\simeq \nu_k$ and
vanishing both for $\omega\ll \nu_k$ (negligible lag) and for $\omega\gg \nu_k$
(strong attenuation). Thus a multi-mode RC channel generically produces a
superposition of ``memory resonances'' across the hierarchy of relaxation rates. For the hysteresis of the quantum observable, we obtain then that in the adiabatic regime we have
\begin{eqnarray}
    \mathcal A_{uO}\approx f'(0) \mathcal A^{K_{max}}_{u\Phi}.
\end{eqnarray}

\subsection{Single-qubit case}
\label{sec:qubit_illustrations}

This section collects two analytically controlled qubit examples that we use
as reference points before turning to numerics. The first illustrates the
adiabatic regime for the transverse-drive model used throughout the paper,
where the qubit response becomes an (approximately) single-valued function of
the \emph{realized} field $\Phi(t)$. The second is an exactly solvable
(commuting) model in which the control enters longitudinally; it provides a
closed-form benchmark for how a classical control filter can generate a
hysteretic $(u,\Phi)$ loop even when the unitary evolution requires no time
ordering.

Throughout we adopt the notation of Secs.~\ref{sec:model} and
\ref{sec:hysteresis_definitions}: the commanded waveform is $u(t)$, the realized
field is $\Phi(t)=(K*u)(t)$ (or its time-local embedding in
Sec.~\ref{sec:embedding_scalar}), and hysteresis is quantified by loop areas
such as $\mathcal A_{u\Phi}=\oint \Phi\,du$ and
$\mathcal A_{uO}=\oint O\,du$.
\subsubsection{Single-qubit illustration and the exponential kernel}
\label{subsec:adiabatic_qubit_exp_kernel}

Let us compute the adiabatic map $f(\Phi)$ in a simple qubit example. Fix
$\Phi\in\mathbb R$ and consider the instantaneous Hamiltonian
\begin{equation}
    H(\Phi)=\sigma_z+\Phi\sigma_x=\bm h(\Phi)\cdot\bm\sigma,
    \qquad 
    \bm h(\Phi)=(\Phi,0,1),
\end{equation}
with eigenvalues $\pm\|\bm h(\Phi)\|$ and gap $2\|\bm h(\Phi)\|$, where
\begin{equation}
    \|\bm h(\Phi)\|=\sqrt{1+\Phi^2}.
\end{equation}
For any Pauli-vector Hamiltonian $H=\bm h\cdot\bm\sigma$, the ground state has
Bloch vector antiparallel to $\bm h$,
\begin{equation}
    \bm n_g(\Phi):=\big(\langle\sigma_x\rangle_{g(\Phi)},\langle\sigma_y\rangle_{g(\Phi)},\langle\sigma_z\rangle_{g(\Phi)}\big)
    =-\frac{\bm h(\Phi)}{\|\bm h(\Phi)\|}.
\end{equation}
Therefore
\begin{equation}
    \langle\sigma_x\rangle_{g(\Phi)}
    = -\frac{h_x(\Phi)}{\|\bm h(\Phi)\|}
    = -\frac{\Phi}{\sqrt{1+\Phi^2}}.
    \label{eq:sigmax_ground_Hz_plus_PhiSx}
\end{equation}

In the adiabatic-following regime (slow variation of $\Phi(t)$ relative to the
instantaneous gap), a state prepared in $\ket{g(\Phi(0))}$ remains close to
$\ket{g(\Phi(t))}$ up to phases, hence
\begin{equation}
    O(t)=\langle \sigma_x\rangle_t \approx \langle\sigma_x\rangle_{g(\Phi(t))}
    = -\frac{\Phi(t)}{\sqrt{1+\Phi(t)^2}}
    \equiv f(\Phi(t)),
\end{equation}
in agreement with the standard Bloch-sphere description of driven two-level
systems.\cite{NielsenChuang}
If $O(t)\approx f(\Phi(t))$ with single-valued $f$, then the $(\Phi,O)$ loop
collapses to a curve and
\begin{equation}
    \mathcal A_{\Phi O}
    =\oint O\,d\Phi
    \approx \oint f(\Phi)\,d\Phi
    =\oint dF(\Phi)
    =0,
\end{equation}
for $F'(\Phi)=f(\Phi)$ and cyclic protocols.

\subsubsection{Transverse filtered drive: instantaneous eigenbasis and adiabatic response}
\label{subsec:qubit_transverse_adiabatic}

We consider the transverse-drive qubit used in the main text,
\begin{equation}
    \hat H(t)=\Omega_z\,\sigma_z+\Phi(t)\,\sigma_x,
    \qquad
    \Phi(t)=(K*u)(t),
    \label{eq:qubit_transverse}
\end{equation}
with $\Omega_z>0$. Writing $\hat H(t)=\bm \hat H(t)\cdot\bm\sigma$ gives
\begin{equation}
    \bm \hat H(t)=\big(\Phi(t),\,0,\,\Omega_z\big),
    \qquad
    E(t)=\|\bm \hat H(t)\|=\sqrt{\Omega_z^2+\Phi(t)^2}.
\end{equation}
The instantaneous eigenvalues are $\pm E(t)$, and the instantaneous ground-state
Bloch direction is
\begin{equation}
    \bm n_g(t)=-\frac{\bm \hat H(t)}{\|\bm \hat H(t)\|}
    =
    -\frac{1}{\sqrt{\Omega_z^2+\Phi(t)^2}}\big(\Phi(t),\,0,\,\Omega_z\big).
    \label{eq:ng_def}
\end{equation}

\subsubsection{Transverse filtered drive: adiabatic reduction and small-amplitude loop areas}
\label{subsec:qubit_transverse_adiabatic_expanded}

Assume periodic steady state, $u(t+T)=u(t)$ and $\Phi(t+T)=\Phi(t)$. The loop
areas introduced in Sec.~\ref{sec:hysteresis_definitions} are
\begin{eqnarray}
    \mathcal A_{u\Phi}&=&\oint \Phi\,du=\int_0^T \Phi(t)\,\dot u(t)\,dt,\nonumber \\
    \mathcal A_{uO}&=&\oint O\,du=\int_0^T O(t)\,\dot u(t)\,dt,\nonumber \\
    \mathcal A_{\Phi O}&=&\oint O\,d\Phi=\int_0^T O(t)\,\dot\Phi(t)\,dt.
    \label{eq:areas_recalled}
\end{eqnarray}

In the adiabatic regime with respect to the realized trajectory $\Phi(t)$, and
for preparations diagonal in the instantaneous eigenbasis at $\Phi(0)$, the
expectation of any fixed observable $\hat O$ is (to leading adiabatic order) a
single-valued function of $\Phi$,
\begin{equation}
    O(t)=\langle \hat O\rangle_t \approx f(\Phi(t)).
    \label{eq:O_fPhi_again}
\end{equation}
This is the standard structure of the adiabatic theorem in gapped systems
\cite{Kato1950,AvronElgart1999,Teufel2003}. For ground-state following,
\begin{equation}
    \langle\sigma_x\rangle_t \approx -\frac{\Phi(t)}{\sqrt{\Omega_z^2+\Phi(t)^2}},
    \qquad
    \langle\sigma_z\rangle_t \approx -\frac{\Omega_z}{\sqrt{\Omega_z^2+\Phi(t)^2}}.
    \label{eq:adiabatic_sigmas_closed}
\end{equation}
In particular, \eqref{eq:O_fPhi_again} implies
\begin{equation}
    \mathcal A_{\Phi O}=\oint O\,d\Phi \approx \oint f(\Phi)\,d\Phi = 0,
    \label{eq:APhiO_zero_again}
\end{equation}
so any nonzero $\mathcal A_{\Phi O}$ indicates nonadiabaticity (or additional
degrees of freedom), not the classical filter alone.

\paragraph{Weak-drive expansion.}
Assume an amplitude scaling
\begin{equation}
    u(t)=\varepsilon\,u_1(t),\qquad 0<\varepsilon\ll 1,
    \label{eq:u_eps}
\end{equation}
with $u_1$ fixed, $T$-periodic, and not identically constant. By linearity,
\begin{equation}
    \Phi(t)=(K*u)(t)=\varepsilon\,\Phi_1(t),\qquad \Phi_1(t)=(K*u_1)(t),
    \label{eq:Phi_eps}
\end{equation}
so $\mathcal A_{u\Phi}=\varepsilon^2\,\mathcal A_{u_1\Phi_1}$. If $f$ is smooth at
$\Phi=0$,
\begin{equation}
    f(\Phi)=f(0)+f'(0)\Phi+\frac{1}{2}f''(0)\Phi^2+\mathcal O(\Phi^3).
    \label{eq:f_taylor}
\end{equation}
Using $\int_0^T \dot u(t)\,dt=0$, inserting into $\mathcal A_{uO}$ yields
\begin{eqnarray}
    \mathcal A_{uO}
    &=&
    f'(0)\int_0^T \Phi(t)\,\dot u(t)\,dt \nonumber \\
    &&+\frac{1}{2}f''(0)\int_0^T \Phi(t)^2\,\dot u(t)\,dt
    +\mathcal O(\varepsilon^4),
    \label{eq:AuO_expand_general}
\end{eqnarray}
hence
\begin{eqnarray}
    \mathcal A_{uO}
    &=&
    f'(0)\,\mathcal A_{u\Phi}
    +\mathcal O(\varepsilon^3).
    \label{eq:AuO_fprime_relation_expanded}
\end{eqnarray}

For ground-state following, define
\begin{equation}
    f_x(\Phi)=-\frac{\Phi}{\sqrt{\Omega_z^2+\Phi^2}},
    \qquad
    f_z(\Phi)=-\frac{\Omega_z}{\sqrt{\Omega_z^2+\Phi^2}},
\end{equation}
with small-$\Phi$ expansions
\begin{eqnarray}
    f_x(\Phi)&&= -\frac{1}{\Omega_z}\Phi + \frac{1}{2\Omega_z^3}\Phi^3+\mathcal O(\Phi^5),\nonumber \\
    f_z(\Phi)&&= -1 + \frac{1}{2\Omega_z^2}\Phi^2 + \mathcal O(\Phi^4).
    \label{eq:fx_fz_series}
\end{eqnarray}

\smallskip
\noindent
\emph{(i) Observable loop for $O=\sigma_x$.}
Since $f_x'(0)=-1/\Omega_z\neq 0$,
\begin{equation}
    \mathcal A_{u,\sigma_x}
    =
    -\frac{1}{\Omega_z}\,\mathcal A_{u\Phi}
    +\mathcal O(\varepsilon^3).
    \label{eq:AuOx_leading}
\end{equation}

\smallskip
\noindent
\emph{(ii) Observable loop for $O=\sigma_z$.}
Here $f_z'(0)=0$, so
\begin{equation}
    \mathcal A_{u,\sigma_z}
    =
    \frac{1}{2\Omega_z^2}\int_0^T \Phi(t)^2\,\dot u(t)\,dt
    +\mathcal O(\varepsilon^4)
    =
    \mathcal O(\varepsilon^3).
    \label{eq:AuOz_scaling}
\end{equation}

\paragraph{Single-pole kernel under harmonic command.}
Take the kernel
\begin{equation}
    K(\tau)=\alpha e^{-\alpha \tau}\,\Theta(\tau),
    \qquad \alpha>0,
    \label{eq:K_single_pole_again}
\end{equation}
and harmonic command
\begin{equation}
    u(t)=u_0\sin(\omega t),
    \qquad T=\frac{2\pi}{\omega}.
    \label{eq:u_sine_again}
\end{equation}
In steady state,
\begin{eqnarray}
    \Phi(t)&&=A\sin(\omega t-\delta),
    \qquad\nonumber \\
    A&&=u_0\frac{\alpha}{\sqrt{\alpha^2+\omega^2}},
    \qquad
    \delta=\arctan\!\Big(\frac{\omega}{\alpha}\Big).
    \label{eq:Phi_sine_again}
\end{eqnarray}
Then
\begin{eqnarray}
    \mathcal A_{u\Phi}
    &&=\int_0^T \Phi(t)\,\dot u(t)\,dt
    =-\pi u_0 A \sin\delta,
    \label{eq:AuPhi_ellipse_general}
\end{eqnarray}
so
\begin{equation}
    |\mathcal A_{u\Phi}|
    =
    \pi u_0^2\,\frac{\alpha\omega}{\alpha^2+\omega^2}.
    \label{eq:AuPhi_singlepole_closed}
\end{equation}
Combining \eqref{eq:AuOx_leading} and \eqref{eq:AuPhi_singlepole_closed} gives
\begin{eqnarray}
    |\mathcal A_{u,\sigma_x}|
    =
    \frac{\pi u_0^2}{\Omega_z}\,\frac{\alpha\omega}{\alpha^2+\omega^2}
    +\mathcal O(u_0^3).
    \label{eq:AuOx_explicit_singlepole}
\end{eqnarray}

These relations quantify how the classical memory scale (here $\alpha^{-1}$)
imprints itself onto commanded loop areas even when $\mathcal A_{\Phi O}\approx 0$.\cite{Teufel2003,DeGrandiPolkovnikov2010}
We next depart from the adiabatic reduction and treat the nonadiabatic regime,
where $\mathcal A_{\Phi O}$ becomes a nontrivial diagnostic of coherent
state-history dependence.

To contrast with the transverse-drive model~\eqref{eq:qubit_transverse}, consider
the commuting (longitudinal) control case
\begin{equation}
\hat H(t)=\big(\Omega_z+\Phi(t)\big)\sigma_z,
\qquad
\Phi(t)=(K*u)(t).
\label{eq:qubit_longitudinal}
\end{equation}
Since $[\hat H(t),H(t')]=0$, time ordering is unnecessary and
\begin{equation}
\hat U(t)=\exp\!\big(-i\sigma_z\,\theta(t)\big),
\qquad
\theta(t)=\int_0^t \big(\Omega_z+\Phi(\tau)\big)\,d\tau.
\label{eq:U_longitudinal}
\end{equation}
This is a standard exactly solvable setting.\cite{NielsenChuang}

\paragraph{Closed-form expectations from a simple preparation.}
Prepare $\ket{+}$ (the $+1$ eigenstate of $\sigma_x$). The Bloch vector rotates
about the $z$ axis by angle $2\theta(t)$, giving
\begin{equation}
\langle\sigma_x\rangle_t = \cos\!\big(2\theta(t)\big),
\qquad
\langle\sigma_y\rangle_t = -\sin\!\big(2\theta(t)\big),
\qquad
\langle\sigma_z\rangle_t = 0.
\label{eq:longitudinal_expectations}
\end{equation}
Thus, once the classical realized signal $\Phi(t)$ is known, the unitary
response is explicit for arbitrary protocols.

\paragraph{Single-pole kernel under harmonic command.}
For the single exponential kernel and $u(t)=u_0\sin(\omega t)$, the steady-cycle
realized field is
\begin{equation}
\Phi(t)=u_0\frac{\alpha}{\sqrt{\alpha^2+\omega^2}}\sin(\omega t-\delta),
\qquad
\delta=\arctan\!\Big(\frac{\omega}{\alpha}\Big),
\label{eq:Phi_harmonic_response}
\end{equation}
so
\begin{equation}
\theta(t)=\Omega_z t
-\frac{u_0\alpha}{\omega\sqrt{\alpha^2+\omega^2}}
\Big[\cos(\omega t-\delta)-\cos\delta\Big],
\label{eq:theta_closedform}
\end{equation}
and $\langle\sigma_x\rangle_t=\cos(2\theta(t))$ follows in closed form.

The transverse model~\eqref{eq:qubit_transverse} is the relevant setting for
separating control-channel memory from coherent nonadiabaticity using loop
measures: in the adiabatic regime $\mathcal A_{\Phi O}\approx 0$, while a nonzero
$\mathcal A_{uO}$ can persist as a consequence of $\mathcal A_{u\Phi}\neq 0$.
The longitudinal model~\eqref{eq:qubit_longitudinal} provides a clean analytic
benchmark where $(u,\Phi)$ hysteresis can coexist with fully explicit unitary
dynamics.

We now turn to numerical simulations (Sec.~\ref{sec:numerics}) that interpolate
between adiabatic and nonadiabatic regimes for the transverse model while
holding the same classical control filter fixed.

We now turn to numerical simulations (Sec.~\ref{sec:numerics}) that interpolate
between adiabatic and nonadiabatic regimes for the transverse model while
holding the same classical control filter fixed.
\begin{figure*}
    \centering
    \includegraphics[width=0.99\linewidth]{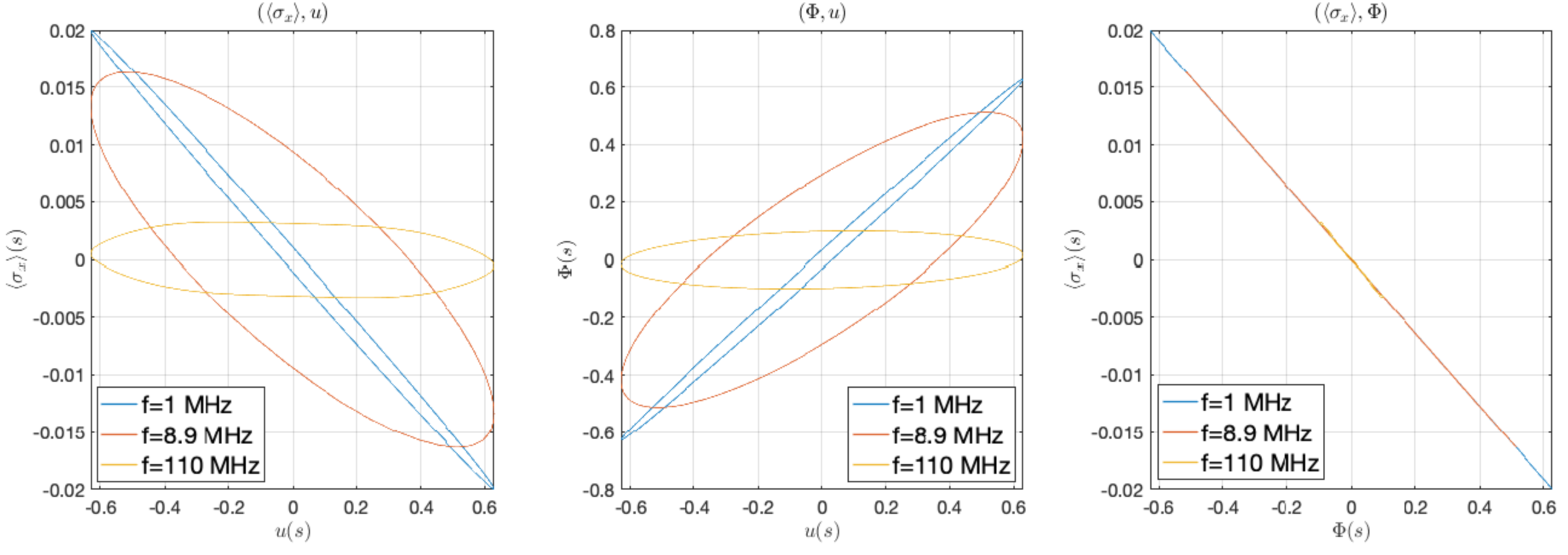}
    \caption{\textbf{Representative steady-cycle parametric loops for a filtered transverse drive.}
    A periodic command $u(t)$ (triangle protocol in the example shown) is applied for many cycles;
    after discarding transients, the final cycle is plotted parametrically as three loops:
    (left) the \emph{control-channel loop} $(u,\Phi)$ quantifying classical filtering and phase lag;
    (middle) the \emph{commanded observable loop} $(u,O)$ with $O(t)=\langle\sigma_z\rangle_t$;
    (right) the \emph{realized-drive loop} $(\Phi,O)$ which isolates state history dependence at fixed
    realized field. The $(u,\Phi)$ loop is nontrivial whenever the control channel has memory
    ($\mathcal A_{u\Phi}\neq 0$). In contrast, a nontrivial $(\Phi,O)$ loop indicates nonadiabatic
    quantum response under the realized drive; in the adiabatic-following regime it collapses toward
    a single-valued curve and $\mathcal A_{\Phi O}\approx 0$.
    }
    \label{fig:loops_threeplanes}
\end{figure*}

\begin{figure}
    \centering
    \includegraphics[width=0.99\linewidth]{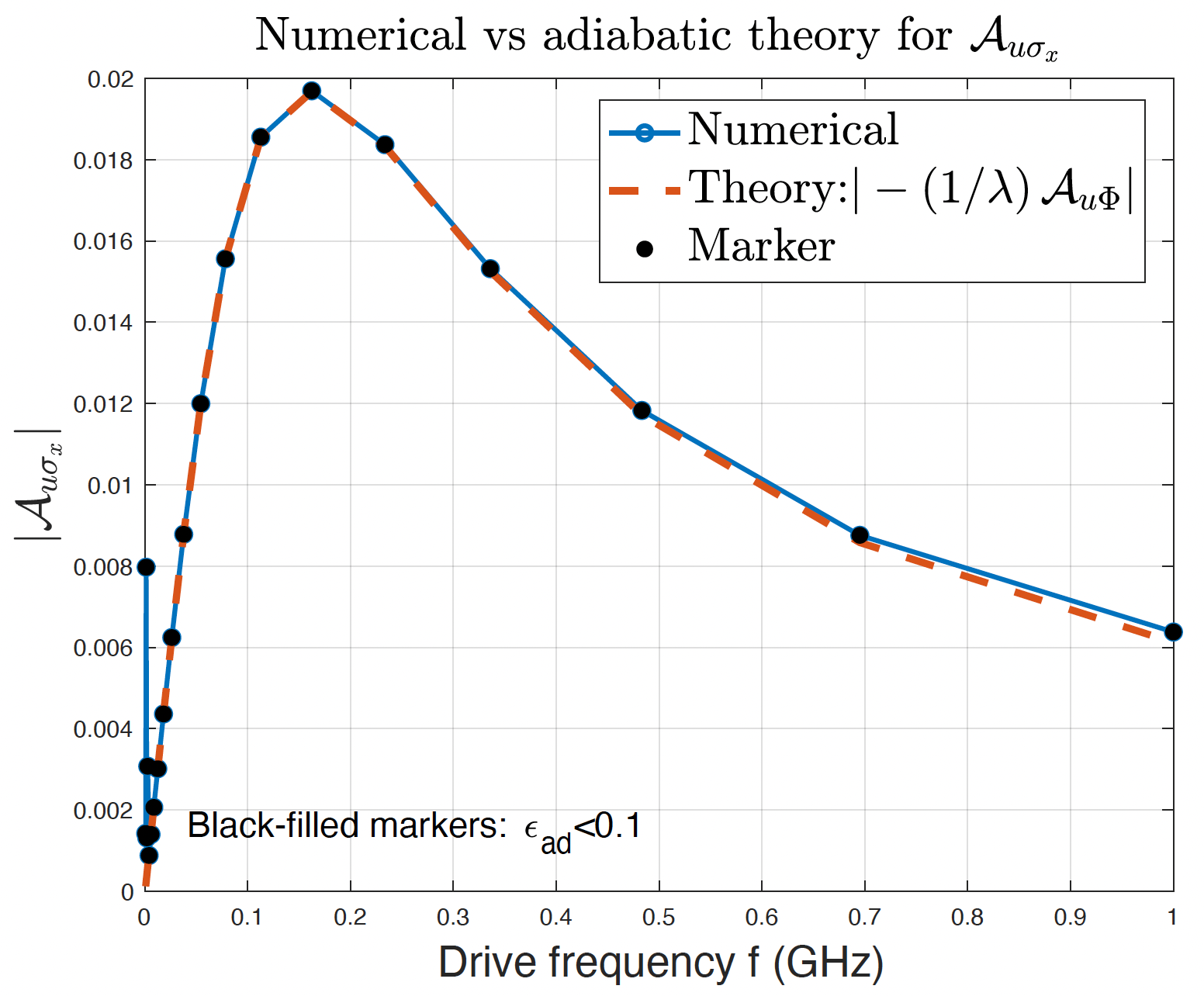}
    \caption{\textbf{Testing the adiabatic ``classical-memory imprint'' prediction.}
    The numerically extracted commanded-area $\tilde{\mathcal A}_{uO}$ (markers) is compared to the
    adiabatic small-amplitude prediction obtained from the control-channel area $\tilde{\mathcal A}_{u\Phi}$
    via $\tilde{\mathcal A}_{uO}\approx f'(0)\,\tilde{\mathcal A}_{u\Phi}$ (dashed curve), where
    $f(\Phi)$ is the adiabatic response function of the qubit observable.
    Points satisfying the adiabaticity diagnostic $\epsilon_{\rm ad}^{\max}<\epsilon_0$ are highlighted.
    Agreement holds in the regime where both (i) the realized field varies slowly compared to the instantaneous
    gap and (ii) the amplitude is sufficiently small for linearization of $f(\Phi)$ about $\Phi=0$.
    }
    \label{fig:adiabatic_area_test}
\end{figure}

\subsection{Numerical experiments}
\label{sec:numerics}

In this section we report time-domain simulations designed to separate the two
mechanisms that can generate cyclic hysteresis in commanded measurements.
The first is \emph{classical control-channel memory}: a finite-bandwidth drive
chain maps the commanded waveform $u(t)$ to a realized field $\Phi(t)$ with a
frequency-dependent lag, yielding a nonzero loop area $\mathcal A_{u\Phi}=\oint\Phi\,du$.
The second is \emph{coherent nonadiabatic response} of the qubit under the realized
drive, which can produce genuine state-history dependence at fixed $\Phi(t)$ and
hence a nonzero $\mathcal A_{\Phi O}=\oint O\,d\Phi$.
Our numerical observables are the oriented areas in the $(u,\Phi)$, $(u,O)$, and
$(\Phi,O)$ planes, evaluated after the dynamics reaches a steady cycle.

The protocol follows the standard experimental logic used in cyclic-drive studies
of superconducting qubits: a periodic control waveform is applied for many cycles,
initial transients are discarded, and the final cycle is analyzed through parametric
loops and their oriented areas.\cite{Krantz2019,Blais2021}
The distinctive modeling assumption here is that \emph{all} nonlocality is confined
to the classical drive chain, while the qubit itself evolves unitarily under the
realized field $\Phi(t)$.

We simulate a transverse-driven two-level system,
\begin{equation}
    \hat H(t)=\Omega_z\,\sigma_z+\Phi(t)\,\sigma_x,
\end{equation}
where $\Omega_z$ is the qubit splitting and $\Phi(t)$ is the realized control amplitude
delivered at the device node. The commanded waveform $u(t)$ is filtered by a causal,
dissipative control channel. For the numerical experiments we adopt the single-pole
(single-RC) embedding,
\begin{equation}
    \dot\Phi(t)=\frac{1}{\tau_c}\big(u(t)-\Phi(t)\big),
    \label{eq:numerics_rc_filter}
\end{equation}
which is the minimal time-local realization of a causal exponential kernel and makes the
control memory scale explicit (Sec.~\ref{sec:embedding_scalar}).

The qubit state is represented by its Bloch vector $r(t)\in\mathbb R^3$,
$r_j(t)=\langle\sigma_j\rangle_t$. For unitary evolution one has the precession equation
\begin{equation}
    \dot r(t)=\Omega(t)\times r(t),
    \qquad
    \Omega(t)=\big(\Phi(t),\,0,\,\Omega_z\big),
    \label{eq:numerics_bloch}
\end{equation}
which is equivalent to the Schr\"odinger equation for pure states.\cite{Abragam1961,Slichter1990}
Unless otherwise noted we initialize in the $|+\rangle$ state, $r(0)=(1,0,0)$, so that
$\langle\sigma_z\rangle$ is generated dynamically by the transverse drive.

To separate the filter timescale from qubit frequencies we work in dimensionless variables
\begin{eqnarray}
    s&=&\frac{t}{\tau_c},\qquad
    \tilde u(s)=\tau_c u(t),\nonumber \\ \tilde\Phi(s)&=&\tau_c\Phi(t),
    \qquad\lambda=\Omega_z\tau_c .
\end{eqnarray}
The coupled dynamics becomes
\begin{eqnarray}
    \frac{d\tilde\Phi}{ds}&=&-\tilde\Phi+\tilde u(s),\qquad
    \frac{dr}{ds}=\tilde\Omega(s)\times r,\nonumber \\
\tilde\Omega(s)&=&\big(\tilde\Phi(s),\,0,\,\lambda\big).
    \label{eq:numerics_dimless_odes}
\end{eqnarray}
The dimensionless drive frequency is $\tilde\omega=2\pi f\tau_c$.
In these units the control-channel response depends primarily on $\tilde\omega$,
while the degree of adiabaticity is controlled by $(\lambda,\tilde\Phi)$ through
the instantaneous gap and the realized slew rate.

We integrate \eqref{eq:numerics_dimless_odes} using a fixed-step fourth-order Runge--Kutta method
on a uniform grid in $s$, with an integer number of steps per drive period.
For each drive frequency we run many cycles, discard an initial transient window,
and compute loop areas from the final cycle. Because the filter equation is strictly stable,
this ``last-cycle'' procedure isolates the frequency-dependent steady response of the
classical channel and the coherent qubit dynamics.

We consider periodic commands $u(t)$ with zero mean, emphasizing sinusoidal and
piecewise-linear (triangle) protocols. For each run we record the commanded waveform
$u(t)$, the realized field $\Phi(t)$, and the qubit observable $O(t)$ (typically
$O(t)=\langle\sigma_z\rangle_t$). From the final cycle we compute the oriented areas
\begin{equation}
    \mathcal A_{u\Phi}=\oint \Phi\,du,\qquad
    \mathcal A_{uO}=\oint O\,du,\qquad
    \mathcal A_{\Phi O}=\oint O\,d\Phi,
\end{equation}
which quantify, respectively, hysteresis internal to the control channel, the experimentally
accessible commanded hysteresis, and the residual state-history dependence at fixed realized drive.
Numerically these are evaluated as discrete line integrals along the parametric curve using a
trapezoidal rule with explicit loop closure,
\begin{equation}
    \oint y\,dx \;\approx\; \sum_{n=1}^{N-1}\frac{y_{n+1}+y_n}{2}\,(x_{n+1}-x_n),
    \label{eq:numerics_trap_area}
\end{equation}
applied to $(x,y)=(u,\Phi)$, $(u,O)$, or $(\Phi,O)$ over one steady cycle. In the dimensionless
implementation we report $\tilde{\mathcal A}_{u\Phi}=\oint\tilde\Phi\,d\tilde u$ and
$\tilde{\mathcal A}_{uO}=\oint O\,d\tilde u$, with straightforward conversion to physical units
via $\tilde u=\tau_c u$ and $\tilde\Phi=\tau_c\Phi$.

For the transverse Hamiltonian $\hat H(t)=\Omega_z\sigma_z+\Phi(t)\sigma_x$ the instantaneous gap is
$\Delta(t)=2\sqrt{\Omega_z^2+\Phi(t)^2}$. A standard adiabatic parameter is
\begin{equation}
    \epsilon_{\rm ad}(t)    :=\frac{\big|\langle e(t)|\dot {\hat H}(t)|g(t)\rangle\big|}{\Delta(t)^2},
\end{equation}
with $\dot \hat H(t)=\dot\Phi(t)\sigma_x$.\cite{Kato1950,Teufel2003}
Evaluating the matrix element yields
\begin{equation}
    \epsilon_{\rm ad}(t)
    =
    \frac{|\dot\Phi(t)|\,\Omega_z}{4\big(\Omega_z^2+\Phi(t)^2\big)^{3/2}}.
    \label{eq:eps_ad_exact_main}
\end{equation}
In dimensionless variables, using $d\tilde\Phi/ds=-\tilde\Phi+\tilde u(s)$, this becomes
\begin{equation}
    \epsilon_{\rm ad}(s)
    =
    \frac{\big|\lambda\,d\tilde\Phi/ds\big|}{4\big(\lambda^2+\tilde\Phi(s)^2\big)^{3/2}}.
    \label{eq:eps_ad_dimless_used}
\end{equation}
We summarize adiabaticity for a given run by $\epsilon_{\rm ad}^{\max}$, the maximum over the
final cycle, and use a threshold $\epsilon_{\rm ad}^{\max}<\epsilon_0$ (with $\epsilon_0=0.1$)
to highlight the region where adiabatic-following predictions are expected to hold.

Figure~\ref{fig:loops_threeplanes} visualizes the operational separation between control-channel
memory and intrinsic state-history dependence. The $(u,\Phi)$ loop directly reports the phase lag
generated by the filter and therefore persists even when the qubit response is adiabatic.
In contrast, the $(\Phi,O)$ loop isolates history dependence of the observable at fixed realized
drive: in the adiabatic-following regime $O(t)\approx f(\Phi(t))$ is (approximately) single-valued,
so the loop collapses and $\mathcal A_{\Phi O}\approx 0$, whereas departures from this collapse
signal coherent nonadiabatic response under $\Phi(t)$.
The commanded loop $(u,O)$ combines both effects and is therefore generically nontrivial whenever
either $\mathcal A_{u\Phi}\neq 0$ or $\mathcal A_{\Phi O}\neq 0$.

A key analytic consequence of adiabatic following developed in Sec.~\ref{sec:adiabatic_regime} is
that, for small amplitudes, the commanded observable area is proportional to the control-channel area.
Figure~\ref{fig:adiabatic_area_test} compares the numerically extracted $\tilde{\mathcal A}_{uO}$
(markers) with the prediction obtained from $\tilde{\mathcal A}_{u\Phi}$ (dashed curve).
Agreement is observed precisely in the region labeled adiabatic by the independent diagnostic
$\epsilon_{\rm ad}^{\max}<\epsilon_0$ and breaks down once the realized field varies rapidly compared
to the instantaneous gap, consistent with the interpretation that deviations originate from coherent
nonadiabatic dynamics rather than from additional (open-system) memory.

To connect with a realistic control line, we take $\tau_c$ on the order of a nanosecond,
$\tau_c\sim 1~\mathrm{ns}$, representative of a well-engineered drive chain bandwidth.
The qubit splitting is chosen in the few-GHz range, e.g.\ $\Omega_z/2\pi\sim 5~\mathrm{GHz}$,
so that $\lambda=\Omega_z\tau_c=\mathcal O(10)$. Sweeping the drive frequency across several decades
therefore spans $\tilde\omega\ll 1$ (quasi-static tracking), $\tilde\omega\sim 1$ (maximal control lag),
and $\tilde\omega\gg 1$ (attenuated realized drive), enabling a direct visualization of the crossover
from classical memory-dominated hysteresis to regimes where coherent nonadiabatic response is prominent.

\begin{figure*}
    \centering
\includegraphics[width=0.9\linewidth]{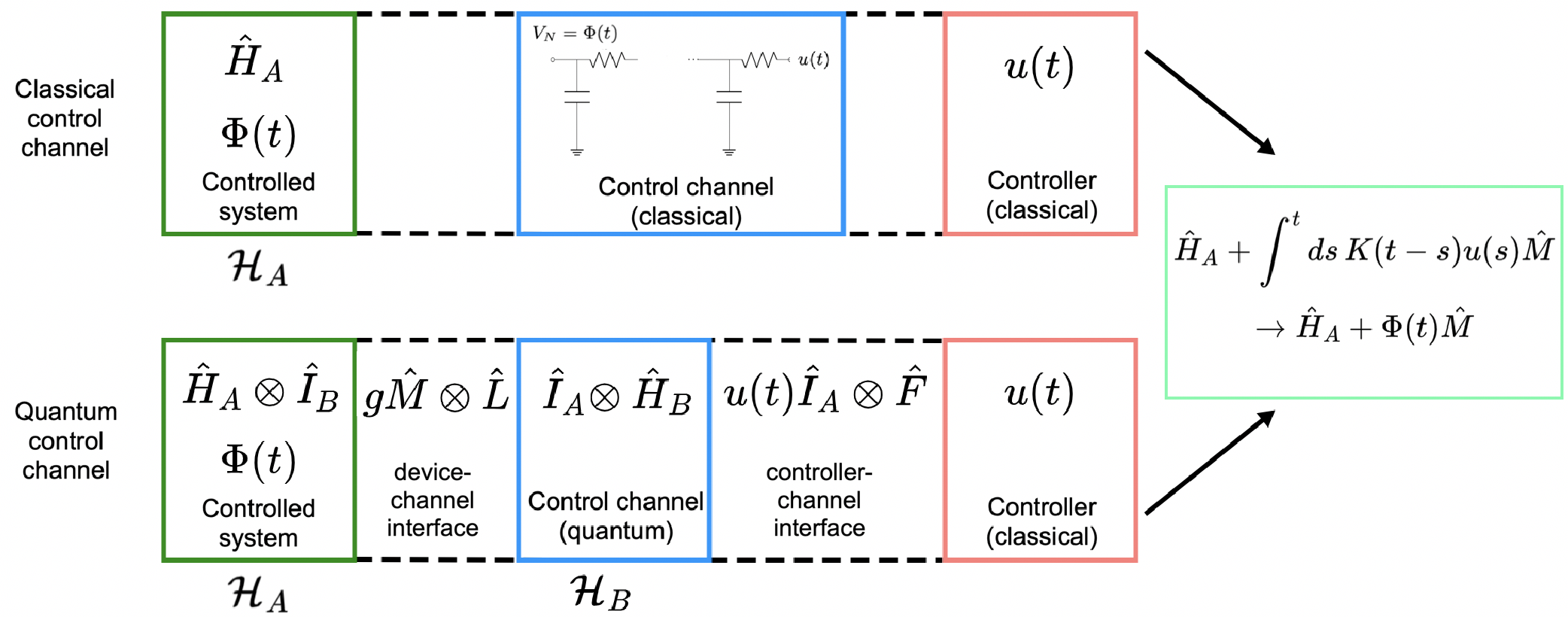}
    \caption{Schematic connection between a phenomenological \emph{classical} control filter and a microscopic \emph{quantum} control-channel model. \textbf{Top:} the room-temperature controller outputs a command $u(t)$, which is distorted by a passive control line (e.g.\ an RC ladder) into a realized in-situ field $\Phi(t)=(K*u)(t)$ that drives the device $A$ via $\hat H_A+\Phi(t)\hat M$. \textbf{Bottom:} the same architecture is modeled microscopically by a channel $B$ with Hamiltonian $\hat H_B$, driven at the input port by the classical source $u(t)$ through an operator $\hat F$ and coupled at the device port through $g\,\hat M\otimes\hat L$. In the weak-coupling regime (Born/Oppenheimer) limit and linear response (Kubo), the delivered field is the channel expectation $\Phi(t)=\langle \hat L\rangle_t=\int_{-\infty}^t \chi_{LF}(t-s)\,u(s)\,ds$, yielding the effective filtered Hamiltonian $\hat H_A+\int^t K(t-s)u(s)\,ds\,\hat M \equiv \hat H_A+\Phi(t)\hat M$ with $K$ identified as the retarded susceptibility of the control.}

    \label{fig:scheme}
\end{figure*}
\section{Physical basis for the memory in the control channel}
\label{sec:physics}
In this second part of the manuscript, we provide a more physical rationale why the kernel approach is grounded both classically and quantum mechanically, as shown in the diagram of Fig. \ref{fig:scheme}.
Specifically, this section motivates the exponential-mode control kernel introduced in
Sec.~\ref{sec:embedding_scalar} from the physics of a typical superconducting-qubit
control wiring and bias circuitry. The key point is that the waveform $u(t)$
generated by room-temperature electronics is not applied directly to the device.
Instead, it propagates through a passive, dissipative, approximately linear
network (for instance, attenuators, bias tees, wiring, packaging, and often deliberately
inserted low-pass elements), so that the device experiences a \emph{realized}
field $\Phi(t)$ that is a filtered version of the command.  This type of
finite-bandwidth and distortion modeling is standard in superconducting-qubit
hardware stacks and calibration workflows~\cite{Krantz2019,Kjaergaard2020,Roth2024}. This point of view can also be relaxed, as we show later, with the quantum treatment of the kernel $K$.

We consider, as an example, a single-qubit effective description (e.g.\ a two-level truncation of
a transmon or a flux qubit) driven by one near-resonant control quadrature~\cite{Koch2007,You2011}:
\begin{equation}
\hat H(t)=\hat H_A+\Phi(t)\hat M,
\qquad
\hat H_A=\sigma_z,
\qquad
\hat M=\sigma_x,
\label{eq:H_qubit_control_channel_section}
\end{equation}
where $\Phi(t)$ denotes the in situ drive amplitude at the device node. (For
flux-type control one may have $\hat M=\sigma_z$; nothing below depends on this
choice.) The classical filtering $u\mapsto\Phi$ is treated as an ordinary signal
processing stage. In particular, we assume it does not induce open-system
quantum memory in the sense of a non-Markovian bath: in our model the qubit
evolution remains unitary once $\Phi(t)$ is specified. This is an idealization,
but it is appropriate for isolating the role of \emph{classical} channel memory
in the hysteresis measures defined in Sec.~\ref{sec:hysteresis_definitions}.

\subsection{Classical treatment of $\Phi(t)$.} In the small-signal regime relevant for waveform transfer (and after selecting a
single effective quadrature), the control line can be modeled as a passive,
stable, linear time-invariant (LTI) two-port mapping the commanded waveform
$u(t)$ to the device-node field $\Phi(t)$. Causality motivates the convolution form
\begin{equation}
\Phi(t)=\int_{-\infty}^{t}K(t-s)\,u(s)\,ds,
\qquad
K(\tau)=0\ \ \text{for }\tau<0,
\label{eq:Phi_convolution_kernel_section}
\end{equation}
equivalently $\Phi(s)=G(s)u(s)$ in the Laplace domain, with transfer function
$G(s):=\Phi(s)/u(s)$ analytic for $\Re(s)>0$.

The distinctive feature of the experimental setting is that much of the relevant
filtering is well described by \emph{relaxational} (RC-type) dynamics rather than
high-$Q$ resonances: thermalized attenuators and resistive wiring introduce loss,
while distributed capacitances to ground (and intentional shunt capacitances in
bias tees and low-pass sections) provide storage. In such circumstances an RC
network is a natural effective description at the frequencies that shape the
envelope of typical control pulses.

Classical network-synthesis theory then severely restricts the analytic
structure of $G(s)$. For passive RC networks the transfer function is real,
stable, and (for ladder families of interest) has poles on the negative real
axis, generically simple; realizability/synthesis criteria and proofs for RC
ladders are classical~\cite{FialkowGerst1951,Magos1970}. Appendix~\ref{app:rc_realizability}
summarizes the relevant statements and derives the exponential-mode
representation directly for an RC ladder model.

These pole restrictions translate immediately into an exponential-mode impulse
response. If $G(s)$ is rational (finite ladder approximation), one can write
\begin{equation}
G(s)=g_\infty+\sum_{k=1}^{K}\frac{c_k}{s+\nu_k},
\qquad
\nu_k>0,
\label{eq:G_partial_fraction_section}
\end{equation}
so that the causal kernel is
\begin{equation}
K(\tau)=g_\infty\,\delta(\tau)+\sum_{k=1}^{K}c_k e^{-\nu_k\tau}\Theta(\tau).
\label{eq:K_sum_exp_section}
\end{equation}
For a distributed (diffusive) RC line, $G(s)$ is no longer rational, but it
remains passive and stable and admits an integral mixture
of exponentials; Appendix~\ref{app:rc_realizability} gives an explicit example
and the corresponding long-tailed kernel. In either case, the essential
consequence is that $K(\tau)$ is non-oscillatory and generated by a hierarchy of
relaxation rates $\{\nu_k\}$.

The exponential representation \eqref{eq:K_sum_exp_section} is equivalent to the
finite-dimensional time-local embedding used in Sec.~\ref{sec:embedding_scalar}.
Defining auxiliary modes
\begin{eqnarray}
\Phi_k(t):&=&\int_{-\infty}^{t}c_k e^{-\nu_k(t-s)}u(s)\,ds,\nonumber \\
\Phi(t)&=&g_\infty u(t)+\sum_{k=1}^{K}\Phi_k(t),
\label{eq:Phi_modes_def_section}
\end{eqnarray}
one obtains
\begin{equation}
\dot\Phi_k(t)=-\nu_k\Phi_k(t)+c_k u(t),
\qquad
k=1,\dots,K.
\label{eq:Phi_modes_ODE_section}
\end{equation}
In the ODE and time-local representation, the initial conditions are associated to the initial state of the shunted capacitances.
For a finite RC ladder this representation is not an approximation: it is simply
the modal form of the ladder state-space dynamics (Appendix~\ref{app:rc_realizability}).
Physically, the modes correspond to internal charge-storage degrees of freedom
of the line/bias network, and the slowest rate $\nu_{\min}$ sets the dominant
memory time $\tau_{\rm mem}\sim \nu_{\min}^{-1}$.

The simplest nontrivial instance is a single series resistance $R$ feeding a
device-side shunt capacitance $C$. Writing the device-node voltage as $\Phi(t)$,
Kirchhoff’s laws give
\begin{eqnarray}
&&C\,\dot\Phi(t)=\frac{u(t)-\Phi(t)}{R}
\hspace{0.1cm}\Longleftrightarrow\hspace{0.1cm}
\tau_c\dot\Phi(t)+\Phi(t)=u(t),
\nonumber \\
&&\tau_c:=RC,
\label{eq:single_RC_ODE_section}
\end{eqnarray}
with kernel
\begin{equation}
K(\tau)=\frac{1}{\tau_c}e^{-\tau/\tau_c}\Theta(\tau).
\label{eq:single_RC_kernel_section}
\end{equation}
This is precisely the single-pole truncation of \eqref{eq:K_sum_exp_section},
corresponding to retaining only the slowest relaxational channel. In the limit
$\tau_c\to 0$, the kernel concentrates at $\tau=0$ and $\Phi(t)\to u(t)$ for
protocols that do not vary on vanishing time scales, recovering the memoryless
control idealization.

The exponential-mode description is particularly convenient for the hysteresis
framework developed earlier in the manuscript. It provides an explicit classical memory channel
with a controlled hierarchy of time scales, and also a clean separation between
hysteresis inherited from the filter (visible in $\mathcal A_{u\Phi}$ and thus
potentially in $\mathcal A_{uO}$) and genuinely nonadiabatic/state-history
effects in the quantum dynamics (diagnosed by $\mathcal A_{\Phi O}$). This
separation was exploited in Secs.~\ref{subsec:adiabatic_qubit_exp_kernel}--%
\ref{subsec:qubit_transverse_adiabatic_expanded} and in the numerical study of
Sec.~\ref{sec:numerics}.

\subsection{Quantum Kubo treatment of $\Phi(t)$.} Throughout the paper we modelld the \emph{realized} control field $\Phi(t)$ seen by the
device as a causal convolution of the room-temperature command $u(t)$,
$\Phi(t)=(K*u)(t)$.  This might seem unjustified from a quantum perspective. Appendix~\ref{app:kubo_born_filtered_control} provides a
microscopic quantum derivation of this structure and clarifies when the kernel $K$
admits the exponential-mode forms used in Sec.~\ref{sec:embedding_scalar} and in the RC
examples (Appendix~\ref{app:rc_realizability}). Here we summarize the key steps and
consequences.
First, we treat the control hardware (wiring, attenuators, bias tees, filters, packaging)
as a \emph{physical dynamical system}, in the most microscopic description, a quantum
many-body ``channel'' $B$ with Hilbert space $\mathcal H_B$ and Hamiltonian $\hat H_B$.
The room-temperature electronics provide a prescribed waveform $u(t)$, which we model
as a classical (c-number) \emph{generalized force} applied at an \emph{input port} of the
channel. Microscopically, such forcing is represented by a Hamiltonian term linear in a
channel operator $\hat F$, i.e.\ a work term $-u(t)\hat F$, exactly as in the standard
Kubo formulation of linear response and the fluctuation--dissipation framework
\cite{Kubo1957Irreversible1,Kubo1966FDT,CallenWelton1951}. The device $A$ couples to the
channel at an \emph{output port} through a (generally distinct) channel operator $\hat L$,
which represents the local field/coordinate delivered to the device node. This separation
between input forcing and output readout is also the natural language of quantum network
and input--output theory for driven electromagnetic environments
\cite{YurkeDenker1984,GardinerCollett1985,Clerk2010QuantumNoise,Devoret1997LesHouches,VoolDevoret2017}.

We aim to show that a minimal bipartite Hamiltonian capturing these roles is
\begin{equation}
    \hat H(t)
    =
    \hat H_A \otimes \mathbb I_B
    +\mathbb I_A\otimes\hat H_B
    + g\,\hat M\otimes \hat L
    - u(t)\,\mathbb I_A\otimes \hat F,
    \label{eq:microscopic_H_recap}
\end{equation}
where $\hat M$ is the device operator that couples to the delivered field, and $g$ is a (an assumed)
small coupling parameter quantifying how strongly the device perturbs the control
channel. The structure \eqref{eq:microscopic_H_recap} is the direct analogue of standard
system--bath models in which a distinguished system couples to a large environment via a
bilinear interaction, with the environment itself subject to external driving \cite{
GardinerCollett1985InputOutput,
CollettGardiner1984SqueezingInputOutput,
Clerk2010QuantumNoiseRMP}. In
particular, eliminating (or coarse-graining) channel degrees of freedom in such models
generically produces retarded response functions and memory kernels, as in the classic
derivations of generalized Langevin dynamics for assemblies of coupled oscillators
\cite{FordKacMazur1965} and in the Caldeira--Leggett description of dissipative quantum
environments \cite{CaldeiraLeggett1981}. In our setting, however, we emphasize that it is related to the transmission conductance, closer to Kubo's \cite{Kubo1957Irreversible1}
\emph{two-port} viewpoint, i.e. $\hat F$ encodes how the commanded signal injects energy into
the control, while $\hat L$ encodes the field that is actually delivered at the
device node and therefore enters the effective device Hamiltonian.

This microscopic setup is precisely tailored to the kernel description used in the main
text: in linear response, the realized field $\Phi(t):=\langle \hat L(t)\rangle$ is a
causal convolution of the command $u(t)$ with the retarded susceptibility
$\chi_{LF}(\tau)$ \cite{Kubo1957Irreversible1,Kubo1966FDT}. Moreover, when $\hat F$ and
$\hat L$ are chosen as standard circuit port variables (e.g.\ charge/flux at a port and
current/voltage at a readout node), $\chi_{LF}(\omega)$ coincides with familiar circuit
response functions such as admittances or transfer admittances, and its dissipative
component is constrained by passivity and linked to equilibrium noise via
fluctuation--dissipation \cite{CallenWelton1951,Clerk2010QuantumNoise,Devoret1997LesHouches,VoolDevoret2017}.

As we show in the Appendix, the realized classical field is identified with the channel mean at the output port,
\begin{equation}
    \Phi(t):=\langle \hat L\rangle_t=\mathrm{Tr}_B\!\big(\hat\rho_B(t)\hat L\big),
    \label{eq:Phi_def_maintext}
\end{equation}
so $\Phi(t)$ is a c-number functional of the applied waveform $u(\cdot)$.
Differentiating $\hat\rho_A(t)=\mathrm{Tr}_B\hat\rho(t)$ yields an exact identity in which the
device couples to the operator-valued field
$\mathrm{Tr}_B[(\mathbb I_A\otimes\hat L)\hat\rho(t)]$
(Appendix~\ref{app:kubo_born_filtered_control}). In the weak-coupling regime one may neglect
device-induced disturbance of the channel to leading order, so that
$\hat\rho(t)\approx \hat\rho_A(t)\otimes \hat\rho_B^{(u)}(t)$ and the field factorizes as
$\mathrm{Tr}_B[(\mathbb I_A\otimes\hat L)\hat\rho(t)]\approx \Phi(t)\,\hat\rho_A(t)$.
This yields the effective reduced generator
\begin{equation}
    \dot{\hat\rho}_A(t)
    =
    -i\big[\hat H_A + g\,\Phi(t)\,\hat M,\ \hat\rho_A(t)\big]
    +\mathcal O(g^2),
    \label{eq:Heff_filtered_maintext}
\end{equation}
i.e.\ the device is driven \emph{coherently} by the classical realized field $\Phi(t)$,
while channel fluctuations and backaction enter only at higher order.

For a stationary reference state of the unforced channel, the response of $\Phi(t)$ to a weak
source $u(t)$ is given by a Kubo retarded susceptibility. In the closed (unitary) case,
\begin{eqnarray}
    \Phi(t)&=&\Phi_0+\int_{-\infty}^{t}\chi_{LF}(t-s)\,u(s)\,ds+\mathcal O(u^2),\nonumber \text{where}\\
    \chi_{LF}(\tau)&=& i\,\Theta(\tau)\,\langle[\hat L_I(\tau),\hat F_I(0)]\rangle_B,
    \label{eq:Phi_Kubo_maintext}
\end{eqnarray}
so the kernel used in the main text is a Kubo-like causal response function $K(\tau)\equiv \chi_{LF}(\tau)$ to leading order
(Appendix~\ref{app:kubo_born_filtered_control}), between the input and output port operators on $\mathcal H_B$. This identifies the classical filter as a
two-port quantum response function of the control.
We see that in this regime (and up to order $g^2)$, the evolution of the reduced density matrix is consistent with the time evolution of eqn.  (\ref{eq:Hkernel_general_rewrite}). Implicitly, this structure was also partly suggested in \cite{barrows} (see App. D) where, however, it was not identified as a convolution kernel \textit{a l\'a} Kubo as in here.

The structure of the kernel function $K$ depends on the physical assumptions.
First, we show that the time-domain structure of
$\chi_{LF}(\tau)$ depends qualitatively on whether the channel is lossless or dissipative.
For a finite-dimensional unitary channel, $\chi_{LF}(\tau)$ admits a Lehmann expansion as a
finite sum of oscillatory terms $e^{i\omega_{nm}\tau}$ and does not generically decay.
Thus, without dissipation (or a macroscopic continuum limit), one should not expect a
relaxational sum of decaying exponentials.
However, if the channel is an open system with a stable GKLS generator $\mathcal L_0$ (dissipation
\emph{on $B$}), the Kubo formula is obtained by replacing unitary evolution with the adjoint
semigroup $e^{\mathcal L_0^\dag\tau}$. This yields a mode expansion
$\chi_{LF}(\tau)=\Theta(\tau)\sum_k c_k e^{\lambda_k\tau}$ with $\mathrm{Re}(\lambda_k)\le 0$.
In the overdamped (RC-like) regime relevant for passive wiring and filters, the dominant
$\lambda_k$ are real and negative, $\lambda_k=-\nu_k$, giving
\begin{equation}
    K(\tau)\equiv \chi_{LF}(\tau)\;\approx\;\sum_{k=1}^{K_{\max}} c_k e^{-\nu_k\tau}\,\Theta(\tau),
    \qquad \nu_k>0,
    \label{eq:K_exponential_modes_maintext}
\end{equation}
which is consistent with the exponential-mode embedding used in Sec.~\ref{sec:embedding_scalar} and earlier in this section.
Thus, the same exponential structure follows from classical network synthesis: passive RC ladders have
transfer functions with poles on the negative real axis, implying impulse responses that are sums
(or mixtures) of decaying exponentials. Appendix~\ref{app:rc_realizability} makes this connection
explicit and provides the circuit-level interpretation of the rates $\nu_k$ as relaxation modes of
the line. 

A standard subtlety in Kubo linear response is that the assumption that the channel is thermal in the remote past
is inconsistent with applying a finite perturbation for all times: if the source $u(t)$ were nonzero forever,
the state at $t\to-\infty$ would not be an equilibrium state of $\hat H_B$.
The conventional resolution is to assume that the perturbation is switched on adiabatically from the far past,
either by sending the initial time $t_0\to-\infty$ together with a convergence factor, or equivalently by
replacing the perturbation by $-u(t)e^{\eta t}\hat F$ and taking $\eta\downarrow 0^+$ at the end
(the ``adiabatic switching'' or ``$i0^+$ retarded prescription'').\cite{Kubo1957Irreversible1,Mahan2000,Forster1975,FetterWalecka1971}
Concretely, in the derivation of Appendix~\ref{app:kubo_phi_derivation} one may keep the factor explicitly:
\begin{eqnarray}
\Phi(t)-\Phi_0
&=&
\int_{-\infty}^{t} ds\;\chi_{LF}^{(\eta)}(t-s)\,u(s),
\nonumber \\
\chi_{LF}^{(\eta)}(\tau)
&=&
i\,\Theta(\tau)\,e^{-\eta \tau}\,\langle[\hat L_I(\tau),\hat F]\rangle_B,
\qquad \eta>0,\nonumber \\
\end{eqnarray}
so that $\chi_{LF}(\tau)=\lim_{\eta\downarrow 0^+}\chi_{LF}^{(\eta)}(\tau)$ in the usual retarded sense.
In general, the representation above is well known to not lead to an RC-like kernel as it is quasi-periodic (see for instance App. \ref{app:closed_vs_open_kernel}).
In the fully closed case, the strictly closed thermal Lehmann sum representation for the kernel
\begin{equation}
    \chi_{LF}(\tau)
    =
    i\,\Theta(\tau)\sum_{m,n}(p_n-p_m)\,L_{nm}F_{mn}\,e^{i(E_n-E_m)\tau}.
    \label{eq:chi_lehmann_closed_main}
\end{equation}
is quasi-periodic in time and,
for a finite-dimensional $B$, does not generically decay as $\tau\to\infty$ (Appendix~\ref{app:thermal_lehmann_kernel}).

In frequency space, the same $\eta$ appears as the infinitesimal imaginary part selecting the retarded
analytic continuation:
\begin{eqnarray}
\chi_{LF}^R(\omega)
&=&
-i\int_0^\infty d\tau\;e^{i(\omega+i0^+)\tau}\,\langle[\hat L_I(\tau),\hat F]\rangle_B, \nonumber \\
\chi_{LF}^{(\eta)}(\omega)
&=&
-i\int_0^\infty d\tau\;e^{i(\omega+i\eta)\tau}\,\langle[\hat L_I(\tau),\hat F]\rangle_B.
\end{eqnarray}

Equivalently, in the Lehmann representation (Appendix~\ref{app:thermal_lehmann_kernel}), the regulator
shifts poles into the lower half-plane,
\begin{eqnarray}
\chi_{LF}^{(\eta)}(\omega)
&=&
\sum_{m,n}(p_n-p_m)\,L_{nm}F_{mn}\;
\frac{1}{\omega-\omega_{nm}+i\eta},\nonumber \\
&&\omega_{nm}:=E_n-E_m,
\label{eq:chi_lehmann_eta_maintext}
\end{eqnarray}
so that the spectral lines which are $\delta$-sharp in a strictly closed, finite-dimensional channel
are broadened into Lorentzians of width $\eta$:
$$
\Im\,\chi_{LF}^{(\eta)}(\omega)
=
-\sum_{m,n}(p_n-p_m)\,L_{nm}F_{mn}\;
\frac{\eta}{(\omega-\omega_{nm})^2+\eta^2}.
$$
From this viewpoint, the Kubo's $\eta$ is both (i) a mathematical convergence device that makes
$t_0\to-\infty$ well-defined and enforces causality, and (ii) a stand-in for physical level broadening
associated with finite lifetimes once the channel is coupled (even weakly) to uncontrolled degrees of freedom.
The same object also appears when one derives real-time response from imaginary-time (Matsubara) correlators:
one computes $\chi(i\omega_n)$ at discrete Matsubara frequencies $\omega_n$ and then analytically continues
$i\omega_n\mapsto \omega+i0^+$ to obtain the retarded $\chi^R(\omega)$.\cite{Mahan2000,FetterWalecka1971,Forster1975,Kamenev2011}

Introducing $\eta>0$ multiplies the time-domain commutator by $e^{-\eta\tau}$, producing an \emph{effective}
exponential envelope. However, a single constant $\eta$ only yields a single exponential scale and should be read
as a coarse phenomenology for dissipation.

A more physical route to genuine multi-timescale relaxational decay is precisely the open-channel description:
dissipation shifts the relevant poles by \emph{mode-dependent} decay rates. In the Markovian setting discussed above,
the adjoint generator $\mathcal L_0^\dag$ has eigenvalues $\lambda_k=-\nu_k+i\omega_k$ with $\nu_k\ge 0$, so that
the Kubo-like commutator form \eqref{eq:chi_open_commutator} produces
$$
\chi_{LF}(\tau)=\Theta(\tau)\sum_k c_k e^{\lambda_k \tau},
$$
and in the overdamped regime (relevant for passive RC stacks in-band) one has $\omega_k\approx 0$ and
$\chi_{LF}(\tau)\approx \Theta(\tau)\sum_k c_k e^{-\nu_k\tau}$, matching the RC-ladder impulse response.
In this sense, the classical RC picture and the ``$\eta$-broadened Kubo'' picture are two complementary
ways to encode the same physical idea: losses in the control stack move the response poles off the real axis,
broadening discrete spectral lines into continuous, dissipative features and producing decaying memory kernels.
This is exactly the regime in which the c-number filter model for $\Phi(t)$ is justified and in which
$\mathcal A_{u\Phi}$ isolates the out-of-phase (dissipative) quadrature via \eqref{eq:AuPhi_ImG_general},
thereby connecting our geometric hysteresis measures to standard conductance-like Kubo response functions
and their fluctuation--dissipation constraints (Appendix~\ref{app:thermal_lehmann_kernel}).\cite{Kubo1957Irreversible1,Forster1975,Mahan2000,BreuerPetruccione2002}

Taken together, these results justify modeling the control channel as a classical
memory kernel $K$ (generated either by an RC ladder or by an overdamped open-channel response),
while keeping the device dynamics unitary to leading order as in \eqref{eq:Heff_filtered_maintext}. 
Of course, a more careful treatment of the channel's properties is paramount to model the memory appropriately. Here, we simply focus on the simplest case in which the channel is stationary and thermal.
\section{Discussion and conclusions}
\label{sec:conclusions}

This work isolates a pragmatic and experimentally relevant source of apparent
``memory'' in driven quantum platforms: the field that enters the device
Hamiltonian is often not the commanded waveform $u(t)$, but a \emph{filtered}
realized signal $\Phi(t)$ generated by the control stack. The filtering is
classical, it reflects finite bandwidth, internal relaxation, and distortion of
the control path, yet it can generate pronounced hysteresis in drive-response
plots and can strongly shape the ``work-like'' cyclic integrals considered in
this paper.

The paper has two complementary parts. The first part develops a compact
\emph{hysteresis language} for separating control-channel memory from intrinsic
device dynamics using loop measures in the $(u,\Phi)$, $(u,O)$, and
$(\Phi,O)$ planes. The second part provides a \emph{modeling basis} for the
control channel itself, connecting the kernel picture $\Phi=(K*u)$ to passive
RC-type circuitry (and, more generally, to retarded susceptibilities of a driven
control), thereby justifying the exponential-mode embeddings used in the
main text.

The loop areas introduced in Sec.~\ref{sec:hysteresis_definitions} provide a
minimal operational vocabulary.
The control-channel area $\mathcal A_{u\Phi}=\oint \Phi\,du$ quantifies hysteresis
internal to the control path. The realized-drive area
$\mathcal A_{\Phi O}=\oint O\,d\Phi$ isolates history dependence of the observable
at fixed realized drive. Finally, the commanded area
$\mathcal A_{uO}=\oint O\,du$ is what is typically observed experimentally when
one plots an observable against the programmed waveform, and it conflates
classical control memory with device response.

A central message is that these loop measures, by themselves, do \emph{not}
certify quantum memory in the open-system sense (information backflow). In the
kernel-filtered model studied here, the only nonlocality resides in the
\emph{classical} map $u(\cdot)\mapsto \Phi(\cdot)$. Once the realized trajectory
$\Phi(t)$ is specified, the device evolves unitarily on $\mathcal H_A$. As a
consequence, for any two initial device states $\rho_1(0)$ and $\rho_2(0)$
evolved under the same Hamiltonian trajectory, the trace distance
\[
D(t)=\tfrac12\|\rho_1(t)-\rho_2(t)\|_1
\]
cannot exhibit revivals, since $\|\cdot\|_1$ is invariant under unitary
conjugation. \cite{BreuerLainePiilo2009,RivasHuelgaPlenio2014,deVegaAlonso2017}
This separation is experimentally actionable: if one reconstructs (or otherwise
controls) the realized $\Phi(t)$ and still observes trace-distance revivals for
some pair of preparations, then the reduced dynamics cannot be explained as
a classical filter and must involve additional
degrees of freedom.\cite{BreuerLainePiilo2009}

On the modeling side, the single-pole (single-RC) kernel provides a convenient
first parametrization of control memory: it is the minimal causal model that
turns the nonlocal relation $u(\cdot)\mapsto \Phi(\cdot)$ into a single auxiliary
state variable and introduces one time scale $\tau_c$ that can be compared
across devices and wiring stacks. Importantly, $\tau_c$ need not be introduced
as an abstract fit parameter. As summarized in
Appendix~\ref{app:bandwidth_tau_risetime}, when the small-signal transfer function
from $u$ to $\Phi$ is well approximated by a first-order low-pass response,
the cutoff frequency $f_{\mathrm{3dB}}$ and step-response rise time $t_r$
translate directly into $\tau_c$ via
$\tau_c \approx 1/2\pi f_{\mathrm{3dB}}$, leading to
$t_r \approx 0.3/B_W$,
with $B_W$ the (approximately) single-pole bandwidth in hertz. This connects
quantities routinely measured in situ (bandwidth, step response, impulse
response) to the effective memory scale used in the kernel phenomenology. This was also discussed in \cite{BornemanCory2012BandwidthLimited}, in which, for a bandwidth in MHz, they estimated a kernel memory of the order of $\mu$s. 

As a rough estimate with the modern control stack, an envelope bandwidth in the $0.1$--$1\,\mathrm{GHz}$
range corresponds to $\tau_c\sim 0.16$--$1.6\,\mathrm{ns}$, placing the most
pronounced kernel-induced hysteresis in the regime $\omega\tau_c\sim 1$, i.e.\
modulation frequencies from hundreds of MHz to a few GHz. Depending on the bandwidth of the control stack, the memory can be as small as $\approx 1 ns$ to a few $\mu$s. 

Second, the kernel model suggests a direct route to improved pulse design. If
the control path is parameterized by a small set of modes (single RC or a
multi-mode realization), one can \emph{embed} the filter dynamics as auxiliary
classical state variables and optimize pulses against the \emph{realized}
Hamiltonian rather than the commanded waveform. This viewpoint aligns with
distortion-aware optimal-control approaches in which hardware response is
explicitly incorporated into the forward model used for synthesis and
calibration.\cite{PhysRevApplied.4.024012,BornemanCory2012BandwidthLimited,Wittler2021C3Toolset,Glaser2015}
In gradient-based optimal control, this amounts to augmenting the forward model
by the filter ODE(s) and differentiating through them, in the spirit of
established quantum-control frameworks.\cite{ViolaKnillLloyd1999,Glaser2015}

Third, the kernel at the device is the result of of various filters of the stack, including both a classical and a quantum induced kernel, as in Fig. \ref{fig:classquant}. The action of these could be at different timescales. This implies that the filter on the $A$ system is the result of both a quantum $K_q$ and classical $K_c$ filter, e.g. $\mathcal K_{A}=K_{q}*K_{c}$, or alternatively
$\Phi(t)=(K_{q}*(K_c*u))(t)$, and thus the stages are modeled using different methods.  In practice, realistic wiring stacks
are rarely captured by a single pole: connectors, bias tees, impedance
mismatches, and distributed dissipation can produce multiple decays and long
tails. We provided a principled justification
for multi-exponential kernels based on passive RC ladder realizations and for
viewing the single-RC model as the leading (slowest-pole) truncation, while we also provided a microscopic justification in which the kernel is identified
with a retarded susceptibility of a driven control in linear response.
\begin{figure}
    \centering
\includegraphics[width=0.99\linewidth]{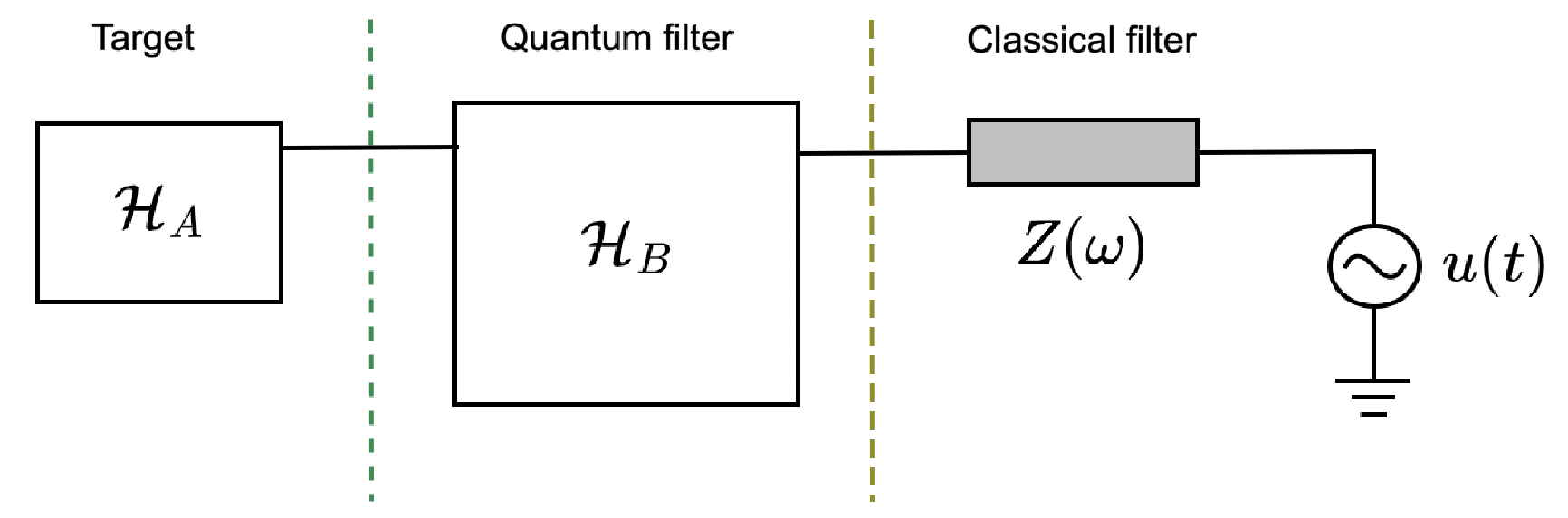}
    \caption{In practice, the filter at the target device is the result of various stages of the stack. }
    \label{fig:classquant}
\end{figure}

Two practical consequences follow.
First, hysteresis loops in the commanded plane $(u,O)$ should not be
over-interpreted: even for a closed, unitary device, classical filtering can
generate sizable loop areas through the lag $u\mapsto\Phi$, and coherent
nonadiabaticity can generate $\mathcal A_{\Phi O}\neq 0$ without invoking any
open-system memory. This motivates reporting control-channel diagnostics (at
minimum $\mathcal A_{u\Phi}$, or an equivalent transfer-function estimate)
alongside observable hysteresis plots.

Several extensions are natural. On the diagnostics side, the most informative
next step is a systematic numerical sweep of frequency and amplitude across the
boundary between adiabatic following and strongly nonadiabatic response, while
holding the \emph{control} transfer function fixed \cite{viola0,viola1,viola2,mont0,mont1,mont2,mont3}. This would map, within a
unified set of plots, how classical control memory (through $K$) and coherent
dynamics (through $H_A$) jointly determine $\mathcal A_{u\Phi}$,
$\mathcal A_{uO}$, and $\mathcal A_{\Phi O}$. On the modeling side, pushing
the microscopic derivation beyond leading order would clarify how channel
fluctuations and backaction modify the effective drive and when truly open-system
signatures become unavoidable. We expect the combined hysteresis diagnostics and
channel modeling developed here to be useful both for interpreting hysteretic
features in experimental data and for designing control strategies that
explicitly separate, diagnose, and mitigate classical control memory.

\begin{acknowledgements}
The author is indebted to F. Barrows, C. Nisoli, P. Sathe, A. Adebi, M. Polini, and G. Menichetti for comments regarding this manuscript.
This work was partly supported by
the U.S. Department of Energy through the Los Alamos National Laboratory. Los Alamos National Laboratory is
operated by Triad National Security, LLC, for the National Nuclear Security Administration of U.S. Department of
Energy (Contract No. 89233218CNA000001). The research presented in this article was supported by the Laboratory
Directed Research and Development program of Los Alamos National Laboratory under project number 20240032DR. The author is currently an employee of Planckian.
\end{acknowledgements}
\bibliographystyle{unsrt}
\bibliography{refs}

@article{ViolaKnillLloyd1999,
  title        = {Dynamical decoupling of open quantum systems},
  author       = {Viola, Lorenza and Knill, Emanuel and Lloyd, Seth},
  journal      = {Physical Review Letters},
  volume       = {82},
  number       = {12},
  pages        = {2417--2421},
  year         = {1999},
  doi          = {10.1103/PhysRevLett.82.2417}
}

@article{Glaser2015,
  title        = {Training Schr{\"o}dinger's cat: quantum optimal control},
  author       = {Glaser, Steffen J. and Boscain, Ugo and Calarco, Tommaso
                  and Koch, Christiane P. and K{\"o}ckenberger, Walter
                  and Kosloff, Ronnie and Kuprov, Ilya and Luy, Burkhard
                  and Schirmer, Sophie and Schulte-Herbr{\"u}ggen, Thomas
                  and Sugny, Dominique and Wilhelm, Frank K.},
  journal      = {The European Physical Journal D},
  volume       = {69},
  number       = {12},
  pages        = {279},
  year         = {2015},
  doi          = {10.1140/epjd/e2015-60464-1}
}

@article{Temme2017,
  title        = {Error mitigation for short-depth quantum circuits},
  author       = {Temme, Kristan and Bravyi, Sergey and Gambetta, Jay M.},
  journal      = {Physical Review Letters},
  volume       = {119},
  number       = {18},
  pages        = {180509},
  year         = {2017},
  doi          = {10.1103/PhysRevLett.119.180509}
}

@article{Krantz2019,
  title        = {A quantum engineer's guide to superconducting qubits},
  author       = {Krantz, Philip and Kjaergaard, Morten and Yan, Fei
                  and Orlando, Terry P. and Gustavsson, Simon
                  and Oliver, William D.},
  journal      = {Applied Physics Reviews},
  volume       = {6},
  number       = {2},
  pages        = {021318},
  year         = {2019},
  doi          = {10.1063/1.5089550}
}

@article{GardinerCollett1985InputOutput,
  title = {Input and output in damped quantum systems: Quantum stochastic differential equations and the master equation},
  author = {Gardiner, C. W. and Collett, M. J.},
  journal = {Physical Review A},
  volume = {31},
  pages = {3761--3774},
  year = {1985},
  doi = {10.1103/PhysRevA.31.3761}
}

@article{CollettGardiner1984SqueezingInputOutput,
  title = {Squeezing of intracavity and traveling-wave light fields produced in parametric amplification},
  author = {Collett, M. J. and Gardiner, C. W.},
  journal = {Physical Review A},
  volume = {30},
  pages = {1386--1391},
  year = {1984},
  doi = {10.1103/PhysRevA.30.1386}
}

@article{Clerk2010QuantumNoiseRMP,
  title = {Introduction to quantum noise, measurement, and amplification},
  author = {Clerk, A. A. and Devoret, M. H. and Girvin, S. M. and Marquardt, F. and Schoelkopf, R. J.},
  journal = {Reviews of Modern Physics},
  volume = {82},
  pages = {1155--1208},
  year = {2010},
  doi = {10.1103/RevModPhys.82.1155}
}

@article{Motzoi2009,
  title        = {Simple pulses for elimination of leakage in weakly anharmonic qubits},
  author       = {Motzoi, Felix and Gambetta, Jay M. and Rebentrost, Patrick
                  and Wilhelm, Frank K.},
  journal      = {Physical Review Letters},
  volume       = {103},
  number       = {11},
  pages        = {110501},
  year         = {2009},
  doi          = {10.1103/PhysRevLett.103.110501}
}

@article{BreuerLainePiilo2009,
  title        = {Measure for the Degree of Non-Markovian Behavior of Quantum Processes in Open Systems},
  author       = {Breuer, Heinz-Peter and Laine, E.-M. and Piilo, J.},
  journal      = {Physical Review Letters},
  volume       = {103},
  number       = {21},
  pages        = {210401},
  year         = {2009},
  doi          = {10.1103/PhysRevLett.103.210401}
}

@article{RivasHuelgaPlenio2014,
  title        = {Quantum Non-Markovianity: Characterization, Quantification and Detection},
  author       = {Rivas, {\'A}ngel and Huelga, Susana F. and Plenio, Martin B.},
  journal      = {Reports on Progress in Physics},
  volume       = {77},
  number       = {9},
  pages        = {094001},
  year         = {2014},
  doi          = {10.1088/0034-4885/77/9/094001}
}

@article{deVegaAlonso2017,
  title        = {Dynamics of Non-Markovian Open Quantum Systems},
  author       = {de Vega, In{\'e}s and Alonso, Daniel},
  journal      = {Reviews of Modern Physics},
  volume       = {89},
  number       = {1},
  pages        = {015001},
  year         = {2017},
  doi          = {10.1103/RevModPhys.89.015001}
}

@book{OppenheimWillsky1997,
  title        = {Signals and Systems},
  author       = {Oppenheim, Alan V. and Willsky, Alan S. and Nawab, S. Hamid},
  publisher    = {Prentice Hall},
  address      = {Upper Saddle River},
  year         = {1997},
  edition      = {2nd}
}

@book{AgarwalLang2005,
  title        = {Foundations of Analog and Digital Electronic Circuits},
  author       = {Agarwal, Anant and Lang, Jeffrey H.},
  publisher    = {Morgan Kaufmann},
  address      = {San Francisco},
  year         = {2005}
}

@book{Kato1995,
  title        = {Perturbation Theory for Linear Operators},
  author       = {Kato, Tosio},
  publisher    = {Springer},
  address      = {Berlin},
  year         = {1995},
  edition      = {2nd}
}

@book{ReedSimon1972,
  title        = {Methods of Modern Mathematical Physics, Vol.~I: Functional Analysis},
  author       = {Reed, Michael and Simon, Barry},
  publisher    = {Academic Press},
  address      = {New York},
  year         = {1972}
}

@article{Magos1970,
  title        = {Realization of Driving-Point and Transfer Functions by R--C Ladder Networks},
  author       = {Magos, L.},
  journal      = {International Journal of Electronics},
  volume       = {29},
  number       = {3},
  pages        = {263--276},
  year         = {1970},
  doi          = {10.1080/00207217008901268}
}

@book{Kailath1980,
  title        = {Linear Systems},
  author       = {Kailath, Thomas},
  publisher    = {Prentice Hall},
  address      = {Englewood Cliffs, NJ},
  year         = {1980}
}

@article{FialkowGerst1951,
  title        = {The Routh-Hurwitz Problem for Rational Functions of the Complex Variable},
  author       = {Fialkow, A. and Gerst, S.},
  journal      = {Transactions of the American Mathematical Society},
  volume       = {70},
  number       = {3},
  pages        = {381--420},
  year         = {1951},
  doi          = {10.1090/S0002-9947-1951-0043475-3}
}

@book{Lighthill1958,
  title        = {Introduction to Fourier Analysis and Generalized Functions},
  author       = {Lighthill, M. J.},
  publisher    = {Cambridge University Press},
  address      = {Cambridge},
  year         = {1958}
}

@book{Rudin1987,
  title        = {Real and Complex Analysis},
  author       = {Rudin, Walter},
  publisher    = {McGraw--Hill},
  address      = {New York},
  year         = {1987},
  edition      = {3rd}
}

@book{Bertotti1998,
  title        = {Hysteresis in Magnetism: For Physicists, Materials Scientists, and Engineers},
  author       = {Bertotti, Giorgio},
  publisher    = {Academic Press},
  address      = {San Diego},
  year         = {1998}
}

@book{Mayergoyz2003,
  title        = {Mathematical Models of Hysteresis and Their Applications},
  author       = {Mayergoyz, Isaak D.},
  publisher    = {Elsevier},
  address      = {Amsterdam},
  year         = {2003}
}

@article{Preisach1935,
  title        = {{\"U}ber die magnetische Nachwirkung},
  author       = {Preisach, F.},
  journal      = {Zeitschrift f{\"u}r Physik},
  volume       = {94},
  pages        = {277--302},
  year         = {1935}
}

@article{BoydChua1985,
  title        = {Fading Memory and the Problem of Approximating Nonlinear Operators with Volterra Series},
  author       = {Boyd, Stephen and Chua, Leon O.},
  journal      = {IEEE Transactions on Circuits and Systems},
  volume       = {32},
  number       = {11},
  pages        = {1150--1161},
  year         = {1985}
}

@article{BornFock1928,
  title        = {Beweis des Adiabatensatzes},
  author       = {Born, Max and Fock, Vladimir},
  journal      = {Zeitschrift f{\"u}r Physik},
  volume       = {51},
  pages        = {165--180},
  year         = {1928}
}

@article{AvronElgart1999,
  title        = {Adiabatic Theorem without a Gap Condition},
  author       = {Avron, J. E. and Elgart, A.},
  journal      = {Communications in Mathematical Physics},
  volume       = {203},
  number       = {2},
  pages        = {445--463},
  year         = {1999}
}

@article{JansenRuskaiSeiler2007,
  title        = {Bounds for the Adiabatic Approximation with Applications to Quantum Computation},
  author       = {Jansen, Sabine and Ruskai, Mary Beth and Seiler, Ruedi},
  journal      = {Journal of Mathematical Physics},
  volume       = {48},
  number       = {10},
  pages        = {102111},
  year         = {2007}
}

@article{RigolDunjkoOlshanii2008,
  title        = {Thermalization and Its Mechanism for Generic Isolated Quantum Systems},
  author       = {Rigol, Marcos and Dunjko, Vanja and Olshanii, Maxim},
  journal      = {Nature},
  volume       = {452},
  number       = {7189},
  pages        = {854--858},
  year         = {2008}
}

@article{Polkovnikov2011,
  title        = {Colloquium: Nonequilibrium Dynamics of Closed Interacting Quantum Systems},
  author       = {Polkovnikov, Anatoli},
  journal      = {Reviews of Modern Physics},
  volume       = {83},
  number       = {3},
  pages        = {863--883},
  year         = {2011}
}

@book{SakuraiNapolitano2017,
  title     = {Modern Quantum Mechanics},
  author    = {Sakurai, J. J. and Napolitano, Jim},
  edition   = {3},
  publisher = {Cambridge University Press},
  year      = {2017}
}

@article{Campisi2011,
  author    = {Michele Campisi and Peter H{\"a}nggi and Peter Talkner},
  title     = {Colloquium: Quantum fluctuation relations: Foundations and applications},
  journal   = {Rev. Mod. Phys.},
  volume    = {83},
  number    = {3},
  pages     = {771--791},
  year      = {2011},
  doi       = {10.1103/RevModPhys.83.771}
}

@article{Talkner2007,
  author    = {Peter Talkner and Eric Lutz and Peter H{\"a}nggi},
  title     = {Fluctuation theorems: Work is not an observable},
  journal   = {Phys. Rev. E},
  volume    = {75},
  pages     = {050102},
  year      = {2007},
  doi       = {10.1103/PhysRevE.75.050102}
}

@review{Breuer2016,
  author    = {Heinz-Peter Breuer and Elsi-Mari Laine and Jyrki Piilo and Bassano Vacchini},
  title     = {Colloquium: Non-Markovian dynamics in open quantum systems},
  journal   = {Rev. Mod. Phys.},
  volume    = {88},
  number    = {2},
  pages     = {021002},
  year      = {2016},
  doi       = {10.1103/RevModPhys.88.021002}
}

@article{Rivas2014,
  author    = {\'Angel Rivas and Susana F. Huelga and Martin B. Plenio},
  title     = {Quantum non-Markovianity: Characterization, quantification and detection},
  journal   = {Rep. Prog. Phys.},
  volume    = {77},
  number    = {9},
  pages     = {094001},
  year      = {2014},
  doi       = {10.1088/0034-4885/77/9/094001}
}

@article{Kjaergaard2020,
  author    = {Mikael Kjaergaard and Mollie E. Schwartz and Jochen Braum{\"u}ller and Philip Krantz and Joel I.-J. Wang and Simon Gustavsson and William D. Oliver},
  title     = {Superconducting qubits: Current state of play},
  journal   = {Annual Review of Condensed Matter Physics},
  volume    = {11},
  pages     = {369--395},
  year      = {2020},
  doi       = {10.1146/annurev-conmatphys-031119-050605}
}

@article{Roth2024,
  author    = {Markus Roth and Jonathan Herrmann and Markus F{\"o}rster and others},
  title     = {Cryogenic microwave engineering for scalable superconducting quantum processors},
  journal   = {PRX Quantum},
  year      = {2024},
  note      = {To appear; see also arXiv:2311.00000},
}

@article{Koch2007,
  author    = {Jens Koch and Terri M. Yu and Jay Gambetta and Alexandre A. Houck and D. I. Schuster and J. Majer and Alexandre Blais and M. H. Devoret and S. M. Girvin and R. J. Schoelkopf},
  title     = {Charge-insensitive qubit design derived from the Cooper pair box},
  journal   = {Phys. Rev. A},
  volume    = {76},
  pages     = {042319},
  year      = {2007},
  doi       = {10.1103/PhysRevA.76.042319}
}

@article{You2011,
  author    = {J. Q. You and Franco Nori},
  title     = {Atomic physics and quantum optics using superconducting circuits},
  journal   = {Nature},
  volume    = {474},
  pages     = {589--597},
  year      = {2011},
  doi       = {10.1038/nature10122}
}

@book{NielsenChuang,
  author    = {Michael A. Nielsen and Isaac L. Chuang},
  title     = {Quantum Computation and Quantum Information},
  publisher = {Cambridge University Press},
  year      = {2000}
}

@article{barrows,
  doi = {10.48550/ARXIV.2507.18079},
  url = {https://arxiv.org/abs/2507.18079},
  author = {Barrows,  Frank and Pelofske,  Elijah and Sathe,  Pratik and Caravelli,  Francesco and Nisoli,  Cristiano},
  keywords = {Quantum Physics (quant-ph),  Statistical Mechanics (cond-mat.stat-mech),  FOS: Physical sciences,  FOS: Physical sciences},
  title = {Magnetic Memory and Hysteresis from Quantum Transitions: Theory and Experiments on Quantum Annealers},
  publisher = {arXiv},
  year = {2025},
  copyright = {Creative Commons Attribution 4.0 International}
}

@article{CallenWelton1951,
  author  = {Callen, Herbert B. and Welton, Theodore A.},
  title   = {Irreversibility and Generalized Noise},
  journal = {Physical Review},
  volume  = {83},
  pages   = {34--40},
  year    = {1951},
  doi     = {10.1103/PhysRev.83.34}
}

@article{Cattaneo2021,
  title = {Engineering Dissipation with Resistive Elements in Circuit Quantum Electrodynamics},
  volume = {4},
  ISSN = {2511-9044},
  url = {http://dx.doi.org/10.1002/qute.202100054},
  DOI = {10.1002/qute.202100054},
  number = {11},
  journal = {Advanced Quantum Technologies},
  publisher = {Wiley},
  author = {Cattaneo,  Marco and Paraoanu,  Gheorghe Sorin},
  year = {2021},
  month = sep 
}

@book{FetterWalecka1971,
  author    = {Fetter, Alexander L. and Walecka, John Dirk},
  title     = {Quantum Theory of Many-Particle Systems},
  publisher = {McGraw-Hill},
  year      = {1971}
}

@book{Mahan2000,
  author    = {Mahan, Gerald D.},
  title     = {Many-Particle Physics},
  edition   = {3},
  publisher = {Springer},
  year      = {2000},
  isbn      = {9780306463389}
}

@book{Forster1975,
  author    = {Forster, Dieter},
  title     = {Hydrodynamic Fluctuations, Broken Symmetry, and Correlation Functions},
  publisher = {W. A. Benjamin},
  year      = {1975}
}

@book{Kamenev2011,
  author    = {Kamenev, Alex},
  title     = {Field Theory of Non-Equilibrium Systems},
  publisher = {Cambridge University Press},
  year      = {2011},
  isbn      = {9780521760829}
}

@book{BreuerPetruccione2002,
  author    = {Breuer, Heinz-Peter and Petruccione, Francesco},
  title     = {The Theory of Open Quantum Systems},
  publisher = {Oxford University Press},
  year      = {2002},
  isbn      = {9780198520634}
}

@article{GardinerCollett1985,
  author  = {Gardiner, C. W. and Collett, M. J.},
  title   = {Input and Output in Damped Quantum Systems: Quantum Stochastic Differential Equations and the Master Equation},
  journal = {Physical Review A},
  volume  = {31},
  number  = {6},
  pages   = {3761--3774},
  year    = {1985},
  doi     = {10.1103/PhysRevA.31.3761}
}

@article{PhysRevApplied.4.024012,
  title        = {Controlling Quantum Devices with Nonlinear Hardware},
  author       = {Hincks, I. N. and Granade, C. E. and Borneman, T. W. and Cory, D. G.},
  journal      = {Phys. Rev. Applied},
  volume       = {4},
  issue        = {2},
  pages        = {024012},
  year         = {2015},
  month        = {Aug},
  publisher    = {American Physical Society},
  doi          = {10.1103/PhysRevApplied.4.024012}
}

@article{Kubo1957Irreversible1,
  author  = {Kubo, R.},
  title   = {Statistical-Mechanical Theory of Irreversible Processes. I. General Theory and Simple Applications to Magnetic and Conduction Problems},
  journal = {Journal of the Physical Society of Japan},
  year    = {1957},
  volume  = {12},
  number  = {6},
  pages   = {570--586},
  doi     = {10.1143/JPSJ.12.570}
}

@article{Kubo1966FDT,
  author  = {Kubo, R.},
  title   = {The fluctuation-dissipation theorem},
  journal = {Reports on Progress in Physics},
  year    = {1966},
  volume  = {29},
  number  = {1},
  pages   = {255--284},
  doi     = {10.1088/0034-4885/29/1/306}
}

@article{FordKacMazur1965,
  author  = {Ford, G. W. and Kac, M. and Mazur, P.},
  title   = {Statistical Mechanics of Assemblies of Coupled Oscillators},
  journal = {Journal of Mathematical Physics},
  year    = {1965},
  volume  = {6},
  number  = {4},
  pages   = {504--515},
  doi     = {10.1063/1.1704304}
}

@article{CaldeiraLeggett1981,
  author  = {Caldeira, A. O. and Leggett, A. J.},
  title   = {Influence of Dissipation on Quantum Tunneling in Macroscopic Systems},
  journal = {Physical Review Letters},
  year    = {1981},
  volume  = {46},
  pages   = {211--214},
  doi     = {10.1103/PhysRevLett.46.211}
}

@article{YurkeDenker1984,
  author  = {Yurke, Bernard and Denker, John S.},
  title   = {Quantum network theory},
  journal = {Physical Review A},
  year    = {1984},
  volume  = {29},
  number  = {3},
  pages   = {1419--1437},
  doi     = {10.1103/PhysRevA.29.1419}
}

@article{Clerk2010QuantumNoise,
  author  = {Clerk, A. A. and Devoret, M. H. and Girvin, S. M. and Marquardt, F. and Schoelkopf, R. J.},
  title   = {Introduction to quantum noise, measurement, and amplification},
  journal = {Reviews of Modern Physics},
  year    = {2010},
  volume  = {82},
  number  = {2},
  pages   = {1155--1208},
  doi     = {10.1103/RevModPhys.82.1155}
}

@misc{caravelli2,
  doi = {10.48550/ARXIV.2509.05793},
  url = {https://arxiv.org/abs/2509.05793},
  author = {Caravelli,  Francesco},
  keywords = {Statistical Mechanics (cond-mat.stat-mech),  Machine Learning (cs.LG),  Mathematical Physics (math-ph),  FOS: Physical sciences,  FOS: Physical sciences,  FOS: Computer and information sciences,  FOS: Computer and information sciences},
  title = {Spectral Methods in Complex Systems},
  publisher = {arXiv},
  year = {2025},
  copyright = {Creative Commons Attribution 4.0 International}
}

@incollection{Devoret1997LesHouches,
  author    = {Devoret, Michel H.},
  title     = {Quantum Fluctuations in Electrical Circuits},
  booktitle = {Quantum Fluctuations},
  editor    = {Reynaud, Serge and Giacobino, Elisabeth and Zinn-Justin, Jean},
  series    = {Les Houches, Session LXIII (1995)},
  publisher = {Elsevier (North-Holland)},
  year      = {1997},
  pages     = {351--386}
}

@article{VoolDevoret2017,
  author  = {Vool, Uri and Devoret, Michel H.},
  title   = {Introduction to quantum electromagnetic circuits},
  journal = {International Journal of Circuit Theory and Applications},
  year    = {2017},
  volume  = {45},
  number  = {7},
  pages   = {897--934}
}

@article{BornemanCory2012BandwidthLimited,
  title        = {Bandwidth-Limited Control and Ringdown Suppression in High-Q Resonators},
  author       = {Borneman, Troy W. and Cory, David G.},
  journal      = {Journal of Magnetic Resonance},
  volume       = {225},
  pages        = {120--129},
  year         = {2012},
  doi          = {10.1016/j.jmr.2012.10.011},
  eprint       = {1207.1139},
  archivePrefix= {arXiv},
  primaryClass = {quant-ph}
}

@article{Wittler2021C3Toolset,
  title        = {Integrated Tool Set for Control, Calibration, and Characterization of Quantum Devices Applied to Superconducting Qubits},
  author       = {Wittler, Nicolas and Roy, Federico and Pack, Kevin and Werninghaus, Max and Roy, Anurag Saha and Egger, Daniel J. and Filipp, Stefan and Wilhelm, Frank K. and Machnes, Shai},
  journal      = {Phys. Rev. Applied},
  volume       = {15},
  pages        = {034080},
  year         = {2021},
  doi          = {10.1103/PhysRevApplied.15.034080},
  eprint       = {2009.09866},
  archivePrefix= {arXiv},
  primaryClass = {quant-ph}
}

@misc{TektronixBW035,
  author       = {{Tektronix}},
  title        = {Where does the formula {BW} = 0.35 / t\_{10\%-90\%} come from?},
  howpublished = {Tektronix Support FAQ},
  year         = {2025},
  note         = {Accessed: 2025-12-29. \url{https://www.tek.com/en/support/faqs/where-does-formula-bw-035-t10-90-come}}
}

@misc{KeysightScopeRiseTimeBandwidth,
  author       = {{Keysight Technologies}},
  title        = {Bandwidth and Rise Time Requirements for Making Accurate Oscilloscope Measurements},
  howpublished = {Application Note 5991-0662EN},
  year         = {2017},
  note         = {Published Dec. 1, 2017. Accessed: 2025-12-29. \url{https://www.acalbfi.com/media/pdf/app-note-Bandwidth-and-Rise-Time-Requirements-for-making-accurate-scope-measurements.5991-0662.pdf}}
}

@book{Mikusiski1978,
  title = {The Bochner Integral},
  ISBN = {9783034855679},
  url = {http://dx.doi.org/10.1007/978-3-0348-5567-9},
  DOI = {10.1007/978-3-0348-5567-9},
  publisher = {Birkh\"{a}user Basel},
  author = {Mikusiński,  Jan},
  year = {1978}
}

@article{pelofske,
  doi = {10.48550/ARXIV.2511.17779},
  url = {https://arxiv.org/abs/2511.17779},
  author = {Pelofske,  Elijah and Sathe,  Pratik and Nisoli,  Cristiano and Barrows,  Frank},
  keywords = {Quantum Physics (quant-ph),  Other Condensed Matter (cond-mat.other),  Statistical Mechanics (cond-mat.stat-mech),  Strongly Correlated Electrons (cond-mat.str-el),  Computational Physics (physics.comp-ph),  FOS: Physical sciences,  FOS: Physical sciences},
  title = {Probing Antiferromagnetic Hysteresis on Programmable Quantum Annealers},
  publisher = {arXiv},
  year = {2025},
  copyright = {arXiv.org perpetual,  non-exclusive license}
}

@article{Teufel2003,
  author  = {Teufel, Stefan},
  title   = {Adiabatic Perturbation Theory in Quantum Dynamics},
  journal = {Lecture Notes in Mathematics},
  volume  = {1821},
  publisher = {Springer},
  year    = {2003}
}

@article{DeGrandiPolkovnikov2010,
  author  = {De Grandi, Carlo and Polkovnikov, Anatoli},
  title   = {Adiabatic Perturbation Theory: From Landau–Zener Problem to Quenching Through a Quantum Critical Point},
  journal = {Lecture Notes in Physics},
  volume  = {802},
  pages   = {75--114},
  year    = {2010},
  publisher = {Springer}
}

@article{Blais2021,
  author  = {Blais, Alexandre and Grimsmo, Arne L. and Girvin, S. M. and Wallraff, Andreas},
  title   = {Circuit Quantum Electrodynamics},
  journal = {Reviews of Modern Physics},
  volume  = {93},
  number  = {2},
  pages   = {025005},
  year    = {2021}
}

@article{Kato1950,
  author  = {Kato, Tosio},
  title   = {On the Adiabatic Theorem of Quantum Mechanics},
  journal = {Journal of the Physical Society of Japan},
  volume  = {5},
  number  = {6},
  pages   = {435--439},
  year    = {1950}
}

@book{Abragam1961,
  author    = {Abragam, Anatole},
  title     = {The Principles of Nuclear Magnetism},
  publisher = {Oxford University Press},
  year      = {1961}
}

@book{Slichter1990,
  author    = {Slichter, Charles P.},
  title     = {Principles of Magnetic Resonance},
  series    = {Springer Series in Solid-State Sciences},
  volume    = {1},
  edition   = {3},
  publisher = {Springer},
  year      = {1990}
}

@article{viola0,
  title   = {Dynamical Decoupling of Open Quantum Systems},
  author  = {Viola, L. and Knill, E. and Lloyd, S.},
  journal = {Phys. Rev. Lett.},
  volume  = {82},
  pages   = {2417},
  year    = {1999},
  doi     = {10.1103/PhysRevLett.82.2417}
}

@article{viola1,
  title   = {Robust Dynamical Decoupling of Quantum Systems with Bounded Controls},
  author  = {Viola, L. and Knill, E.},
  journal = {Phys. Rev. Lett.},
  volume  = {90},
  pages   = {037901},
  year    = {2003},
  doi     = {10.1103/PhysRevLett.90.037901}
}

@article{viola2,
  title   = {Enhanced Convergence and Robust Performance of Randomized Dynamical Decoupling},
  author  = {Santos, L. F. and Viola, L.},
  journal = {Phys. Rev. Lett.},
  volume  = {97},
  pages   = {150501},
  year    = {2006},
  doi     = {10.1103/PhysRevLett.97.150501}
}

@article{mont0,
  title   = {Optimal Control at the Quantum Speed Limit},
  author  = {Caneva, T. and Murphy, M. and Calarco, T. and Fazio, R. and Montangero, S. and Giovannetti, V. and Santoro, G. E.},
  journal = {Phys. Rev. Lett.},
  volume  = {103},
  pages   = {240501},
  year    = {2009},
  doi     = {10.1103/PhysRevLett.103.240501}
}

@article{mont1,
  title   = {Communication at the quantum speed limit along a spin chain},
  author  = {Murphy, Michael and Montangero, Simone and Giovannetti, Vittorio and Calarco, Tommaso},
  journal = {Phys. Rev. A},
  volume  = {82},
  pages   = {022318},
  year    = {2010},
  doi     = {10.1103/PhysRevA.82.022318}
}

@article{mont2,
  title   = {Optimal Control Technique for Many-Body Quantum Dynamics},
  author  = {Doria, P. and Calarco, T. and Montangero, S.},
  journal = {Phys. Rev. Lett.},
  volume  = {106},
  pages   = {190501},
  year    = {2011},
  doi     = {10.1103/PhysRevLett.106.190501}
}

@article{mont3,
  title   = {Dressing the chopped-random-basis optimization: A bandwidth-limited access to the trap-free landscape},
  author  = {Rach, N. and M\"uller, M. M. and Calarco, T. and Montangero, S.},
  journal = {Phys. Rev. A},
  volume  = {92},
  pages   = {062343},
  year    = {2015},
  doi     = {10.1103/PhysRevA.92.062343}
}

@article{memristors,
  title = {Memristor-The missing circuit element},
  volume = {18},
  ISSN = {0018-9324},
  url = {http://dx.doi.org/10.1109/TCT.1971.1083337},
  DOI = {10.1109/tct.1971.1083337},
  number = {5},
  journal = {IEEE Transactions on Circuit Theory},
  publisher = {Institute of Electrical and Electronics Engineers (IEEE)},
  author = {Chua,  L.},
  year = {1971},
  pages = {507–519}
}

@article{caravelli,
  title = {Memristors for the Curious Outsiders},
  volume = {6},
  ISSN = {2227-7080},
  url = {http://dx.doi.org/10.3390/technologies6040118},
  DOI = {10.3390/technologies6040118},
  number = {4},
  journal = {Technologies},
  publisher = {MDPI AG},
  author = {Caravelli,  Francesco and Carbajal,  Juan Pablo},
  year = {2018},
  month = dec,
  pages = {118}
}

@article{diventra,
  title = {Circuit Elements With Memory: Memristors,  Memcapacitors,  and Meminductors},
  volume = {97},
  ISSN = {0018-9219},
  url = {http://dx.doi.org/10.1109/JPROC.2009.2021077},
  DOI = {10.1109/jproc.2009.2021077},
  number = {10},
  journal = {Proceedings of the IEEE},
  publisher = {Institute of Electrical and Electronics Engineers (IEEE)},
  author = {Di Ventra,  Massimiliano and Pershin,  Yuriy V. and Chua,  Leon O.},
  year = {2009},
  month = oct,
  pages = {1717–1724}
}
\clearpage
\onecolumngrid

\appendix
\section{Dyson expansion for kernel-filtered control with time-independent drift}
\label{app:dyson_kernel_time_independent_H0}

This appendix gives a self-contained derivation of the interaction-picture
representation and the associated Dyson series for the propagator in the
kernel-filtered control model. Throughout this appendix we assume that the
drift Hamiltonian $\hat H_A$ is time independent.

We start from the model used in the main text,
$$
\hat H(t)=\hat H_A+\hat H^\prime_c(t),
$$
where the filtered control Hamiltonian is defined by the causal convolution
$$
\hat H^\prime_c(t)=\int_{-\infty}^{t}K(t-s)\,\hat H_c(s)\,ds,
\qquad
K(\tau)=0 \ \text{for}\ \tau<0.
$$
When convenient one may take the lower limit to be $0$ by assuming that the
protocol begins at $t=0$ and that the drive is null for $t<0$.

The unitary propagator $\hat U(t)\equiv U(t,0)$ is defined by
$$
i\frac{d}{dt}\hat U(t)=\hat H(t)\hat U(t),
\qquad
U(0)=\mathbb 1,
$$
where we set $\hbar=1$.

Because $\hat H_A$ is time independent, the drift (free) propagator is
$$
\hat U_0(t)=e^{-i\hat H_A t},
\qquad
i\frac{d}{dt}\hat U_0(t)=\hat H_A \hat U_0(t),
\qquad
\hat U_0(0)=\mathbb 1.
$$
Define the interaction-picture propagator by the exact factorization
$$
\hat U(t)=\hat U_0(t)\,\hat U_I(t),
\qquad
\hat U_I(t)=\hat U_0^\dagger(t)\hat U(t)=e^{i\hat H_A t}\hat U(t).
$$
Differentiating $\hat U(t)=\hat U_0(t)\hat U_I(t)$ and using the evolution equations for
$U$ and $\hat U_0$ gives
$$
i\frac{d}{dt}\big(\hat U_0 \hat U_I\big)
=
\hat H_A \hat U_0 \hat U_I+\hat H^\prime_c \hat U_0 \hat U_I,
\qquad
i\frac{d}{dt}\big(\hat U_0 \hat U_I\big)
=
i\dot {\hat U}_0 \hat U_I+i{\hat U}_0\dot{\hat  U}_I.
$$
Since $i\dot \hat U_0=\hat H_A\hat U_0$, the $\hat H_A$ terms cancel and one obtains
$$
i\hat U_0(t)\,\dot \hat U_I(t)=\hat H^\prime_c(t)\,\hat U_0(t)\,\hat U_I(t).
$$
Multiplying on the left by $\hat U_0^\dagger(t)$ yields the interaction-picture
equation of motion
$$
i\frac{d}{dt}\hat U_I(t)=\tilde H_{1,I}(t)\,\hat U_I(t),
\qquad
\hat U_I(0)=\mathbb 1,
$$
where
$$
\tilde H_{1,I}(t)=\hat U_0^\dagger(t)\,\hat H^\prime_c(t)\,\hat U_0(t)
=
e^{i\hat H_A t}\,\hat H^\prime_c(t)\,e^{-i\hat H_A t}.
$$
This derivation shows that, under the standing assumption that $\hat H_A$ is time
independent, the interaction-picture definition used in the main text is
exactly equivalent to the Schr\"odinger-picture evolution.

The formal solution of the interaction-picture equation is
$$
\hat U_I(t)=\mathcal T\exp\!\left(-i\int_0^t \tilde H_{1,I}(\tau)\,d\tau\right),
$$
and expanding the time-ordered exponential yields the Dyson series
$$
\hat U_I(t)
=
\mathbb 1
+
\sum_{n=1}^{\infty}(-i)^n
\int_{0<\tau_n<\cdots<\tau_1<t}
d\tau_1\cdots d\tau_n\;
\tilde H_{1,I}(\tau_1)\cdots \tilde H_{1,I}(\tau_n).
$$
The full propagator is recovered as
$$
\hat U(t)=e^{-i\hat H_A t}\,\hat U_I(t).
$$

A common specialization is a scalar command multiplying a fixed operator,
$$
\hat H_c(s)=u(s)\,\hat M,
$$
with $u(s)\in\mathbb R$ and $\hat M=\hat M^\dagger$. In this case,
$$
\hat H^\prime_c(t)=\int_{-\infty}^{t}K(t-s)\,u(s)\,ds\;\hat M
=:w(t)\,\hat M,
$$
so the filter acts only on the scalar drive, producing the realized field
\begin{eqnarray}
w(t)=\int_{-\infty}^{t}K(t-s)\,u(s)\,ds.
\label{eq:w_def}   
\end{eqnarray}
The interaction-picture operator is then
$$
\tilde H_{1,I}(t)=w(t)\,\hat M_I(t),
\qquad
\hat M_I(t)=e^{i\hat H_A t}\hat M e^{-i\hat H_A t},
$$
and the Dyson series becomes
$$
\hat U_I(t)
=
\mathbb 1
+
\sum_{n=1}^{\infty}(-i)^n
\int_{0<\tau_n<\cdots<\tau_1<t}
d\tau_1\cdots d\tau_n\;
\prod_{j=1}^{n}\Big[w(\tau_j)\hat M_I(\tau_j)\Big].
$$

Because $w$ is a causal linear functional of the command $u$ through the
kernel $K$, substituting $w=K*u$ into the Dyson series expresses $\hat U_I(t)$ as
a multilinear functional expansion in the input waveform $u(\cdot)$. This is
structurally analogous to classical Volterra (Boyd--Chua) series expansions
for nonlinear dynamical systems with memory, where the output is represented
as a sum of multilinear functionals of the input with causal integration
domains. The present setting differs only in that the kernels are
operator-valued and the ordering is quantum (time ordering).

\subsection{Interaction picture and Dyson--Volterra expansion}
\label{subsec:dyson_volterra}

Let $\hat U(t)\equiv \hat U(t,0)$ solve $\dot \hat U(t) = -i \hat H(t)\hat U(t)$ (we set $\hbar=1$).
Write $\hat U(t)=e^{-i\hat H_A t}V(t)$. Then $V(0)=\mathbb{1}$ and
\begin{equation}
    \dot V(t) = -i\,w(t)\,\hat M_I(t)\,V(t),
    \qquad
    \hat M_I(t) := e^{i\hat H_A t}\hat M e^{-i\hat H_A t}.
    \label{eq:V_eqn}
\end{equation}
The formal solution is
\begin{equation}
    \hat V(t) = \mathcal{T}\exp\!\left(-i\int_0^t w(\tau)\,\hat M_I(\tau)\,d\tau\right),
    \label{eq:V_time_ordered}
\end{equation}
and expanding the time-ordered exponential gives a Dyson series in the
\emph{filtered} signal $w$:
\begin{equation}
\begin{aligned}
    \hat V(t)
    &= \mathbb{1}
    + \sum_{n=1}^{\infty}(-i)^n
    \int_{0<\tau_n<\cdots<\tau_1<t}
    d\tau_1\cdots d\tau_n\;
    \prod_{j=1}^{n}\qty[w(\tau_j)\hat M_I(\tau_j)].
\end{aligned}
\label{eq:Dyson_w}
\end{equation}

To obtain a Boyd--Chua / Volterra-type functional expansion directly in the
\emph{command} $u(\cdot)$, substitute $w(\tau_j)=\int_{-\infty}^{\tau_j}
K(\tau_j-s_j)u(s_j)\,ds_j$ into~\eqref{eq:Dyson_w}. One obtains a multilinear
functional series
\begin{equation}
    \hat V(t)
    = \mathbb{1}
    + \sum_{n=1}^{\infty} (-i)^n
    \int_{(-\infty,t]^n} ds_1\cdots ds_n\;
    \qty(\prod_{j=1}^n u(s_j))\;
    \hat G_n(t; s_1,\dots,s_n),
    \label{eq:V_volterra_u}
\end{equation}
where the \emph{operator-valued Volterra kernels} $\hat G_n$ are
\begin{equation}
    \hat G_n(t;s_1,\dots,s_n)
    :=\int_{\substack{0<\tau_n<\cdots<\tau_1<t\\ \tau_j \ge s_j}}
    d\tau_1\cdots d\tau_n\;
    \qty(\prod_{j=1}^n K(\tau_j-s_j))\;
    \hat M_I(\tau_1)\cdots \hat M_I(\tau_n).
\label{eq:Gn_general}
\end{equation}
The propagator is then
\begin{equation}
    \hat U(t) = e^{-i\hat H_A t}\hat V(t),
\end{equation}
with $\hat V(t)$ given by~\eqref{eq:V_volterra_u}.

The kernels admit a simple recursion that mirrors classical Volterra theory:
\begin{equation}
    \hat G_{n+1}(t;s_1,\dots,s_{n+1})
    =
    \int_{\max(0,s_1)}^{t} d\tau\;
    K(\tau-s_1)\,\hat M_I(\tau)\,
    \hat G_n(\tau;s_2,\dots,s_{n+1}),
\label{eq:G_recursion}
\end{equation}
with base case
\begin{equation}
    \hat G_1(t;s) = \int_{\max(0,s)}^{t} d\tau\;K(\tau-s)\,\hat M_I(\tau).
\label{eq:G1_general}
\end{equation}
For strictly causal kernels supported on $\tau\ge 0$, the integrands vanish
unless $s\le \tau$, making the causal structure explicit.

If $\comm{\hat H_A}{\hat M}=0$, then $\hat M_I(t)=\hat M$ and the time ordering
in~\eqref{eq:V_time_ordered} becomes irrelevant. In this special case,
\begin{equation}
    \hat U(t) = e^{-i\hat H_A t}\exp\!\left(-i\hat M \int_0^t w(\tau)\,d\tau\right),
    \label{eq:commuting_exact}
\end{equation}
with $w(\tau)$ given by~\eqref{eq:w_def}. The noncommuting case is the
generic situation, and is captured by the operator-valued kernels
$\hat G_n$ above.

\section{Time-local ODE embedding for scalar kernel-filtered Hamiltonians}
\label{app:embedding_derivation}

In this appendix we show that when the filtered Hamiltonian enters through a
scalar amplitude multiplying a fixed operator, and the kernel is a finite
sum of decaying exponentials, the kernel-filtered evolution is equivalent to
a time-local system consisting of the Schr\"odinger equation coupled to a
finite set of first-order ODEs for auxiliary filter variables.

Assume a drift-plus-control Hamiltonian of the form
$$
\hat H(t)=\hat H_A+\Phi(t)\,\hat M,
$$
with $\hat H_A$ time independent and $\hat M$ Hermitian. The realized control field
$\Phi(t)$ is obtained from a commanded waveform $u(t)$ by a causal kernel,
$$
\Phi(t)=\int_{-\infty}^{t}K(t-s)\,u(s)\,ds,
\qquad
K(\tau)=0\ \text{for}\ \tau<0.
$$
We take $u(t)$ real-valued. If $K(\tau)$ is real-valued, then $\Phi(t)$ is
real and $\hat H(t)$ is Hermitian.

Suppose the kernel admits a finite exponential representation
$$
K(\tau)=\sum_{k=1}^{K_{\max}} c_k e^{-\nu_k \tau}\,\Theta(\tau),
\qquad
\nu_k>0,\ \ c_k\in\mathbb R.
$$
Define auxiliary mode variables by
$$
\Phi_k(t):=\int_{-\infty}^{t} c_k e^{-\nu_k(t-s)}u(s)\,ds,
\qquad
\Phi(t)=\sum_{k=1}^{K_{\max}}\Phi_k(t).
$$
Causality is explicit: only values $s\le t$ contribute.

Differentiate $\Phi_k(t)$ under the integral sign. Write the integrand as
$c_k e^{-\nu_k(t-s)}u(s)$ and note that
$$
\frac{\partial}{\partial t}\,e^{-\nu_k(t-s)}=-\nu_k e^{-\nu_k(t-s)}.
$$
Using Leibniz' rule for differentiation of an integral with variable upper
limit gives
$$
\dot\Phi_k(t)
=
c_k e^{-\nu_k(t-t)}u(t)
+\int_{-\infty}^{t} c_k\frac{\partial}{\partial t}\Big(e^{-\nu_k(t-s)}\Big)u(s)\,ds.
$$
The boundary term is $c_k u(t)$. The remaining term evaluates to
$$
\int_{-\infty}^{t} c_k\Big(-\nu_k e^{-\nu_k(t-s)}\Big)u(s)\,ds
=
-\nu_k\int_{-\infty}^{t} c_k e^{-\nu_k(t-s)}u(s)\,ds
=
-\nu_k\Phi_k(t).
$$
Hence each mode satisfies the first-order ODE
$$
\dot\Phi_k(t)=-\nu_k\Phi_k(t)+c_k u(t).
$$
The corresponding initial condition depends on how the protocol is started.
If one assumes $u(t)=0$ for $t<0$ and begins the evolution at $t=0$, then
$$
\Phi_k(0)=\int_{-\infty}^{0} c_k e^{-\nu_k(0-s)}u(s)\,ds=0.
$$
More generally, prehistory in $u(s)$ for $s<0$ produces a nonzero $\Phi_k(0)$,
which is the correct way to encode nontrivial initial filter memory.

The quantum state obeys the Schr\"odinger equation
$$
i\frac{d}{dt}\ket{\psi(t)}=\Big(\hat H_A+\Phi(t)\hat M\Big)\ket{\psi(t)}
=\Big(\hat H_A+\Big[\sum_{k=1}^{K_{\max}}\Phi_k(t)\Big]\hat M\Big)\ket{\psi(t)}.
$$
Together with the mode equations, the kernel-filtered dynamics is therefore
equivalent to the time-local system
$$
\begin{cases}
i\dfrac{d}{dt}\ket{\psi(t)}
=\Big(\hat H_A+\Big[\sum_{k=1}^{K_{\max}}\Phi_k(t)\Big]\hat M\Big)\ket{\psi(t)},\\[8pt]
\dot\Phi_k(t)=-\nu_k\Phi_k(t)+c_k u(t),
\qquad k=1,\ldots,K_{\max}.
\end{cases}
$$
This embedding replaces the convolutional memory in the Hamiltonian by a
finite-dimensional classical state $(\Phi_1,\ldots,\Phi_{K_{\max}})$.
The resulting evolution remains unitary because the instantaneous Hamiltonian
acting on the quantum state is Hermitian for real $\Phi(t)$.

Let us now introduce the special case of a RC kernel.
For $K_{\max}=1$ one has $K(\tau)=c e^{-\nu \tau}\Theta(\tau)$ and
$\Phi(t)=\Phi_1(t)$, so the filter reduces to the familiar relaxation
equation
$$
\dot\Phi(t)=-\nu\Phi(t)+c\,u(t),
$$
coupled to the Schr\"odinger equation with $\hat H(t)=\hat H_A+\Phi(t)\hat M$.

Many kernels that arise in practice are not exactly finite sums of
exponentials but can be accurately approximated by such sums on a relevant
time window, either from physics (e.g. a Matsubara-like decompositions) or from
system-identification procedures. In that case, the above embedding provides
a controlled time-local approximation to the kernel-filtered dynamics, with
the approximation quality inherited from the kernel fit.

\section{Expression for $f'(0)$ in adiabatic case}
\label{app:expressionfprime0}
Starting from
\begin{equation}
f(\Phi)=\sum_i \frac{p_i}{\mathrm{Tr}\,\Pi_i(\Phi)}\,\mathrm{Tr}\!\big(\Pi_i(\Phi)\,\hat O\big),
\qquad
H(\Phi)=\hat H_A+\Phi\,\hat M,
\label{eq:fPhi_main_repeat}
\end{equation}
and assuming an iso-degenerate spectral path (so $\mathrm{Tr}\,\Pi_i(\Phi)$ is constant) and that $\hat O$ has no
explicit $\Phi$-dependence, one has
\begin{equation}
f'(0)=\sum_i \frac{p_i}{d_i}\,\mathrm{Tr}\!\Big( \Pi_i'(0)\,\hat O\Big),
\qquad d_i:=\mathrm{Tr}\,\Pi_i(0),
\label{eq:fprime_projector_repeat}
\end{equation}
where $\Pi_i'(\Phi)=\partial_\Phi \Pi_i(\Phi)$.

Assume the $i$-th spectral band of $H(0)$ is separated by a nonzero gap from the rest of the spectrum.
Let
\begin{equation}
Q_i(0):=\mathbb 1-\Pi_i(0),
\qquad
R_i(0):=Q_i(0)\,\big(H(0)-\varepsilon_i(0)\big)^{-1}Q_i(0),
\label{eq:reduced_resolvent_def}
\end{equation}
(the inverse is taken on the range of $Q_i(0)$). A standard Kato/analytic-perturbation identity gives
\begin{equation}
\Pi_i'(0)=R_i(0)\,\hat M\,\Pi_i(0)+\Pi_i(0)\,\hat M\,R_i(0).
\label{eq:Pi_prime_resolvent_again}
\end{equation}
Substituting into \eqref{eq:fprime_projector_repeat} yields the explicit resolvent form
\begin{equation}
f'(0)=\sum_i \frac{p_i}{d_i}\,
\mathrm{Tr}\!\Big(\big[R_i(0)\,\hat M\,\Pi_i(0)+\Pi_i(0)\,\hat M\,R_i(0)\big]\hat O\Big).
\label{eq:fprime_resolvent_form}
\end{equation}
Using cyclicity of the trace and Hermiticity ($R_i(0)=R_i(0)^\dagger$, $\hat M=\hat M^\dagger$, $\hat O=\hat O^\dagger$),
this can be written as a manifestly real expression,
\begin{equation}
f'(0)=2\sum_i \frac{p_i}{d_i}\,
\mathsf{Re}\,\mathrm{Tr}\!\Big(\Pi_i(0)\,\hat O\,R_i(0)\,\hat M\,\Pi_i(0)\Big).
\label{eq:fprime_resolvent_real}
\end{equation}
where $\mathsf{Re}$ implies the real part.

If $H(0)$ has a spectral decomposition into band projectors
\begin{equation}
H(0)=\sum_j \varepsilon_j(0)\,\Pi_j(0),
\label{eq:spectral_H0_projectors}
\end{equation}
then on the complement of band $i$ the reduced resolvent admits the expansion
\begin{equation}
R_i(0)=\sum_{j\neq i}\frac{1}{\varepsilon_j(0)-\varepsilon_i(0)}\,\Pi_j(0),
\label{eq:Ri_band_sum}
\end{equation}
which, inserted into \eqref{eq:Pi_prime_resolvent_again}, gives the familiar first-order perturbation-theory formula for
the derivative of the spectral projector:
\begin{equation}
\Pi_i'(0)=\sum_{j\neq i}\frac{\Pi_j(0)\,\hat M\,\Pi_i(0)+\Pi_i(0)\,\hat M\,\Pi_j(0)}{\varepsilon_i(0)-\varepsilon_j(0)}.
\label{eq:Pi_prime_band_sum}
\end{equation}
Substituting \eqref{eq:Pi_prime_band_sum} into \eqref{eq:fprime_projector_repeat} yields
\begin{equation}
f'(0)=\sum_i\frac{p_i}{d_i}\sum_{j\neq i}
\frac{\mathrm{Tr}\!\Big(\big[\Pi_j(0)\,\hat M\,\Pi_i(0)+\Pi_i(0)\,\hat M\,\Pi_j(0)\big]\hat O\Big)}
{\varepsilon_i(0)-\varepsilon_j(0)}.
\label{eq:fprime_projector_band_sum}
\end{equation}
A slightly cleaner (and again manifestly real) form is obtained by cyclicity of the trace:
\begin{equation}
f'(0)=2\sum_i\frac{p_i}{d_i}\sum_{j\neq i}
\frac{\mathsf{Re}\,\mathrm{Tr}\!\Big(\Pi_i(0)\,\hat O\,\Pi_j(0)\,\hat M\,\Pi_i(0)\Big)}
{\varepsilon_i(0)-\varepsilon_j(0)}.
\label{eq:fprime_projector_band_sum_real}
\end{equation}

If the spectrum is nondegenerate, $\Pi_i(0)=|i\rangle\langle i|$ and $d_i=1$,
then \eqref{eq:fprime_projector_band_sum_real} reduces to the standard first-order eigenstate perturbation formula:
\begin{equation}
f'(0)=2\sum_i p_i\sum_{j\neq i}
\frac{\mathsf{Re}\!\Big(\langle i|\hat O|j\rangle\,\langle j|\hat M|i\rangle\Big)}
{\varepsilon_i(0)-\varepsilon_j(0)}.
\label{eq:fprime_nondegenerate}
\end{equation}

Equations \eqref{eq:Pi_prime_resolvent_again}--\eqref{eq:fprime_nondegenerate} require a nonzero gap separating the chosen
band(s) at $\Phi=0$. At degeneracy crossings, $f'(0)$ need not be well-defined without specifying how the band(s) are
tracked through the crossing (and adiabaticity itself becomes subtle).
\section{Loop integrals, uniqueness, and bounds}\label{app:integralsbounds}

\subsection{Setup and definitions}

Let $u \in [0,T]$ be a cycle parameter (e.g.\ time or sweep parameter). Assume
\begin{equation}
\Phi:[0,T]\to\mathbb{R},\qquad O:[0,T]\to\mathbb{R}.
\end{equation}
We consider the (oriented) loop integral
\begin{equation}\label{eq:def_AuO}
A_{uO} \;:=\; \oint O\,d\Phi,
\end{equation}
interpreted as a Riemann--Stieltjes integral along the closed cycle. Throughout, we assume the cycle is closed in $\Phi$:
\begin{equation}\label{eq:closedPhi}
\Phi(0)=\Phi(T).
\end{equation}

A sufficient condition for reducing \eqref{eq:def_AuO} to an ordinary integral is absolute continuity of $\Phi$.

\begin{lemma}[Parameterization formula]\label{lem:param_formula}
If $\Phi$ is absolutely continuous on $[0,T]$, then $\Phi'(u)$ exists for almost every $u$ and is integrable, and
\begin{equation}\label{eq:param_integral}
A_{uO}=\oint O\,d\Phi=\int_0^T O(u)\,\Phi'(u)\,du.
\end{equation}
\end{lemma}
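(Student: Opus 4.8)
The plan is to split the claim into its two halves. The first half --- that an absolutely continuous $\Phi$ is differentiable at almost every $u\in[0,T]$, that $\Phi'\in L^1(0,T)$, and that the Lebesgue fundamental theorem of calculus holds, $\Phi(v)-\Phi(u)=\int_u^v\Phi'(s)\,ds$ for all $u\le v$ --- is the classical characterization of absolute continuity, which I would simply invoke (Lebesgue's differentiation theorem for AC functions; see e.g.\ Rudin, \emph{Real and Complex Analysis}, or Royden). This already gives meaning to the right-hand side of \eqref{eq:param_integral}. Moreover, an AC function is in particular of bounded variation on $[0,T]$, so the Riemann--Stieltjes integral $\oint O\,d\Phi$ on the left is well defined whenever $O$ is continuous; note that in our applications $O(u)=\langle\hat O\rangle_u$ is continuous because it is an expectation value under the norm-continuous unitary evolution of Sec.~\ref{sec:model}, so this hypothesis is automatic.

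The second half is the identity of the two integrals, which I would obtain directly from Riemann--Stieltjes sums. Fix a tagged partition $0=u_0<u_1<\dots<u_N=T$ with mesh $\|P\|$ and tags $\xi_i\in[u_{i-1},u_i]$. Using the fundamental theorem of calculus for $\Phi$, each increment is $\Phi(u_i)-\Phi(u_{i-1})=\int_{u_{i-1}}^{u_i}\Phi'(s)\,ds$, so the Riemann--Stieltjes sum equals $S_P=\sum_i O(\xi_i)\int_{u_{i-1}}^{u_i}\Phi'(s)\,ds$. Subtracting $\int_0^T O(s)\Phi'(s)\,ds=\sum_i\int_{u_{i-1}}^{u_i}O(s)\Phi'(s)\,ds$ gives a remainder $S_P-\int_0^T O\Phi'\,ds=\sum_i\int_{u_{i-1}}^{u_i}\bigl(O(\xi_i)-O(s)\bigr)\Phi'(s)\,ds$, whose absolute value is at most $\omega_O(\|P\|)\int_0^T|\Phi'(s)|\,ds$, where $\omega_O$ is the modulus of continuity of $O$ on the compact interval $[0,T]$ (here $|\xi_i-s|\le\|P\|$ on the $i$th subinterval). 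Since $\int_0^T|\Phi'|<\infty$ and $\omega_O(\|P\|)\to0$ as $\|P\|\to0$ by uniform continuity of $O$, the Riemann--Stieltjes sums converge to $\int_0^T O(s)\Phi'(s)\,ds$; by definition of the Riemann--Stieltjes integral they also converge to $\oint O\,d\Phi$, and \eqref{eq:param_integral} follows.

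I do not anticipate a genuine obstacle: the only point requiring care is the regularity assumed of $O$. If one wishes to allow $O$ merely bounded and measurable (so that the Riemann--Stieltjes integral need not exist), the statement should instead be read with $\oint O\,d\Phi$ interpreted as the Lebesgue--Stieltjes integral against the signed measure $d\mu_\Phi$ induced by $\Phi$; absolute continuity of $\Phi$ means $d\mu_\Phi=\Phi'(u)\,du$ by the Radon--Nikodym theorem, and \eqref{eq:param_integral} becomes the change-of-variables identity $\int O\,d\mu_\Phi=\int O\,\Phi'\,du$, proved by the standard simple-function / monotone-convergence approximation. Either route is routine; I would present the Riemann--Stieltjes version above, since $O$ is continuous in all cases of interest in this paper.
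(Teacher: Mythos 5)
Your proposal is correct and follows essentially the same route as the paper, whose proof simply asserts that for absolutely continuous $\Phi$ the Stieltjes differential $d\Phi$ is represented by $\Phi'(u)\,du$ and stops there. You supply the details the paper omits --- the Riemann--Stieltjes-sum argument with the modulus-of-continuity error bound, and the explicit continuity hypothesis on $O$ needed for the left-hand side to exist --- so your version is a more careful rendering of the same idea rather than a different one.
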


\begin{proof}
For absolutely continuous $\Phi$, the differential $d\Phi$ is represented by $\Phi'(u)\,du$ thus, the integral reduces to the integral:
\begin{equation}
\int_0^T O\,d\Phi=\int_0^T O(u)\,\Phi'(u)\,du.
\end{equation}
For a closed cycle, the loop integral $\oint O\,d\Phi$ is understood as the integral over one period, i.e.\ the same expression on $[0,T]$.
\end{proof}

We also define the \emph{total variation} (total sweep) of $\Phi$ with respect to $u$:
\begin{equation}\label{eq:variation}
V_{u\Phi} \;:=\;\int_0^T |\Phi'(u)|\,du,
\end{equation}
well-defined whenever $\Phi$ is absolutely continuous.

\begin{proposition}[Single-valued $O=f(\Phi)$ $\Rightarrow$ zero loop integral]\label{prop:unique_zero}
Assume $\Phi$ is absolutely continuous, \eqref{eq:closedPhi} holds, and there exists a \emph{single-valued} function $f:\mathbb{R}\to\mathbb{R}$ such that
\begin{equation}
O(u)=f(\Phi(u))\quad \text{for all }u\in[0,T].
\end{equation}
If $f$ is continuous (or merely integrable) so that an antiderivative $F$ exists with $F'(\Phi)=f(\Phi)$ almost everywhere, then
\begin{equation}
A_{uO}=\oint f(\Phi)\,d\Phi=0.
\end{equation}
\end{proposition}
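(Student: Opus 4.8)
The plan is to collapse the oriented loop to an ordinary integral and then recognize its integrand as an exact derivative. First I would invoke Lemma~\ref{lem:param_formula}: since $\Phi$ is absolutely continuous, the Riemann--Stieltjes loop reduces to
\[
A_{uO}=\oint f(\Phi)\,d\Phi=\int_0^T f(\Phi(u))\,\Phi'(u)\,du ,
\]
with $\Phi'$ integrable. The target identity is $f(\Phi(u))\,\Phi'(u)=\tfrac{d}{du}F(\Phi(u))$ for almost every $u$: once this holds, the fundamental theorem of calculus for absolutely continuous functions gives $A_{uO}=F(\Phi(T))-F(\Phi(0))$, which vanishes by the closed-cycle hypothesis~\eqref{eq:closedPhi}, $\Phi(0)=\Phi(T)$. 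So the whole argument is the chain rule plus FTC; everything reduces to justifying the chain rule under the stated regularity.

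When $f$ is continuous this is routine: the antiderivative $F$ is then $C^1$, so $F\circ\Phi$ is absolutely continuous (a $C^1$ map composed with an absolutely continuous one), and the classical chain rule gives $(F\circ\Phi)'(u)=F'(\Phi(u))\,\Phi'(u)=f(\Phi(u))\,\Phi'(u)$ a.e.; FTC for AC functions then finishes the argument exactly as above. This is the only case I would write out in full detail.

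The delicate point, and the step I expect to be the main obstacle, is the merely-integrable case: there $F$ is absolutely continuous but need not be $C^1$, and the naive chain rule $(F\circ\Phi)'=(f\circ\Phi)\,\Phi'$ can fail on the (null for $\Phi$, but possibly positive Lebesgue measure) set where $\Phi'$ vanishes while $\Phi$ lands on the set of non-differentiability of $F$. The clean way around this is the substitution theorem for absolutely continuous integrators (Serrin--Varberg / Banach--Zarecki type): for absolutely continuous $\Phi$ and integrable $f$, the product $f(\Phi(u))\,\Phi'(u)$ (defined as $0$ wherever $\Phi'(u)=0$) is integrable and $\int_0^T f(\Phi(u))\,\Phi'(u)\,du=\int_{\Phi(0)}^{\Phi(T)}f(y)\,dy=0$ since the endpoints coincide. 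Alternatively, one can approximate $f$ by continuous $f_n\to f$ in $L^1$ on a compact interval containing $\Phi([0,T])$, apply the continuous case to each $f_n$, and pass to the limit; the error $\int_0^T (f_n-f)(\Phi(u))\,\Phi'(u)\,du$ is best controlled by pushing forward the finite measure $|\Phi'(u)|\,du$ under $\Phi$ (its total mass is the sweep $V_{u\Phi}$ of~\eqref{eq:variation}) and using $L^1$ convergence against that image measure. Managing this image-measure bookkeeping is the part that needs genuine care.

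Finally, I would add the geometric remark that the parametric curve $u\mapsto(\Phi(u),O(u))$ lies entirely on the graph of the single-valued map $f$, a planar set of Lebesgue measure zero, so the oriented area it encloses (which equals $A_{uO}$ by Green's theorem) must vanish; this is a useful sanity check rather than part of the proof, and it also makes transparent why any nonzero $A_{uO}$ forces multibranch behavior, as claimed in the main text.
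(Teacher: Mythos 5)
Your proof is correct and follows essentially the same route as the paper's: reduce the loop to $\int_0^T f(\Phi(u))\,\Phi'(u)\,du$, recognize the integrand as $\tfrac{d}{du}F(\Phi(u))$, and apply the fundamental theorem of calculus together with $\Phi(0)=\Phi(T)$. The one place you go beyond the paper is worth noting: the paper's proof simply writes $dF=F'(\Phi)\,d\Phi$ along the trajectory, silently assuming the chain rule, whereas you correctly isolate the only delicate point --- that for merely integrable $f$ the naive chain rule for $F\circ\Phi$ can fail on the set where $\Phi'=0$ and $\Phi$ hits the non-differentiability set of $F$ --- and patch it with the Serrin--Varberg substitution theorem (or an $L^1$ approximation against the pushforward of $|\Phi'|\,du$). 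That extra care is not a different proof, but it does make rigorous a hypothesis ("or merely integrable") that the paper states without fully justifying.
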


\begin{proof}
Let $F$ be an antiderivative of $f$, i.e.\ $F'(\Phi)=f(\Phi)$ (a.e.). Along the trajectory,
\begin{equation}
dF = F'(\Phi)\,d\Phi = f(\Phi)\,d\Phi.
\end{equation}
Therefore,
\begin{equation}
\oint f(\Phi)\,d\Phi = \oint dF = F(\Phi(T)) - F(\Phi(0)).
\end{equation}
Using $\Phi(T)=\Phi(0)$ from \eqref{eq:closedPhi} yields $F(\Phi(T)) - F(\Phi(0))=0$, hence $A_{uO}=0$.
\end{proof}

\begin{remark}
Proposition~\ref{prop:unique_zero} formalizes the standard statement: a nonzero loop area $\oint O\,d\Phi$ requires that $O$ is \emph{not} a single-valued function of $\Phi$ along the cycle (i.e.\ there is hysteresis / multibranch behavior).
\end{remark}

\begin{theorem}[Variation bound]\label{thm:variation_bound}
Assume $\Phi$ is absolutely continuous, \eqref{eq:closedPhi} holds, and $O$ is bounded:
\begin{equation}
O_{\min}\le O(u)\le O_{\max}\quad \text{for all }u\in[0,T].
\end{equation}
Define $\Delta O:=O_{\max}-O_{\min}$ and $V_{u\Phi}$ as in \eqref{eq:variation}. Then
\begin{equation}\label{eq:variation_bound}
|A_{uO}|\;\le\;\frac{\Delta O}{2}\,V_{u\Phi}
\;=\;\frac{\Delta O}{2}\int_0^T |\Phi'(u)|\,du.
\end{equation}
Equivalently,
\begin{equation}
-\frac{\Delta O}{2}\,V_{u\Phi}\;\le\;A_{uO}\;\le\;\frac{\Delta O}{2}\,V_{u\Phi}.
\end{equation}
\end{theorem}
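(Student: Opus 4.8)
The plan is to exploit a simple ``gauge freedom'': the oriented loop integral $\oint O\,d\Phi$ is unchanged if one subtracts an arbitrary constant from $O$, because the cycle is closed in $\Phi$. This converts the crude estimate $|A_{uO}|\le \max|O|\cdot V_{u\Phi}$ into the sharper bound with the half-width $\Delta O/2$, by shifting $O$ to be centered on the midpoint of its range.

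First I would invoke Lemma~\ref{lem:param_formula}: since $\Phi$ is absolutely continuous, $\Phi'$ exists a.e.\ and is integrable, and $A_{uO}=\int_0^T O(u)\,\Phi'(u)\,du$. Applying the same parameterization formula to the constant function $O\equiv c$ and using the closure hypothesis $\Phi(0)=\Phi(T)$ gives $\oint c\,d\Phi = c\big(\Phi(T)-\Phi(0)\big)=0$. Hence, for every $c\in\mathbb{R}$,
\[
A_{uO} \;=\; \oint (O-c)\,d\Phi \;=\; \int_0^T \big(O(u)-c\big)\,\Phi'(u)\,du .
\]

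Second, I would choose the optimal shift $c_\star=\tfrac12\big(O_{\max}+O_{\min}\big)$, the center of the range. The boundedness hypothesis $O_{\min}\le O(u)\le O_{\max}$ then gives the uniform pointwise estimate $|O(u)-c_\star|\le \tfrac12(O_{\max}-O_{\min})=\Delta O/2$. Substituting into the identity above and bounding the integrand pointwise,
\[
|A_{uO}| \;\le\; \int_0^T |O(u)-c_\star|\,|\Phi'(u)|\,du \;\le\; \frac{\Delta O}{2}\int_0^T |\Phi'(u)|\,du \;=\; \frac{\Delta O}{2}\,V_{u\Phi},
\]
which is the stated bound; the two-sided form $-\tfrac{\Delta O}{2}V_{u\Phi}\le A_{uO}\le \tfrac{\Delta O}{2}V_{u\Phi}$ is then immediate.

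I do not expect any genuine analytic obstacle: once the centering step is in place, the triangle inequality and monotonicity of the Lebesgue integral finish the argument. The only substantive ``idea'' is recognizing the constant-shift invariance; without it one only recovers the weaker constant $\max(|O_{\min}|,|O_{\max}|)$. A minor point worth stating carefully is that here $O$ need only be bounded and measurable (so that $O\,\Phi'$ is integrable against the integrable $\Phi'$), so continuity of $O$ is not required for this bound, in contrast to Proposition~\ref{prop:unique_zero}. One could also remark that the constant $\Delta O/2$ is sharp, attained in the limit by an $O$ that alternates between $O_{\min}$ and $O_{\max}$ in phase with the sign of $\Phi'$.
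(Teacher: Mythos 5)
Your proof is correct and is essentially identical to the paper's: both center $O$ at the midrange value $\tfrac12(O_{\max}+O_{\min})$, use the closure condition $\Phi(0)=\Phi(T)$ to kill the constant contribution, and finish with the triangle inequality. The "gauge freedom" framing is just a restatement of the paper's subtraction of $O_c$, so there is nothing substantively different to compare.
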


\begin{proof}
We set the midrange constant $O_c:=\tfrac{1}{2}(O_{\max}+O_{\min})$ and define $\widetilde O(u):=O(u)-O_c$.
Then $|\widetilde O(u)|\le \Delta O/2$ for all $u$.

Using Lemma~\ref{lem:param_formula},
\begin{equation}
A_{uO}=\int_0^T O(u)\,\Phi'(u)\,du
=\int_0^T (O_c+\widetilde O(u))\,\Phi'(u)\,du.
\end{equation}
The constant part vanishes because
\begin{equation}
\int_0^T O_c\,\Phi'(u)\,du
= O_c\big(\Phi(T)-\Phi(0)\big)=0
\end{equation}
by \eqref{eq:closedPhi}. Thus
\begin{equation}
A_{uO}=\int_0^T \widetilde O(u)\,\Phi'(u)\,du.
\end{equation}
Taking absolute values and applying the triangle inequality,
\begin{equation}
|A_{uO}|
\le \int_0^T |\widetilde O(u)|\,|\Phi'(u)|\,du
\le \frac{\Delta O}{2}\int_0^T |\Phi'(u)|\,du
=\frac{\Delta O}{2}\,V_{u\Phi}.
\end{equation}
This proves \eqref{eq:variation_bound}. The two-sided bound follows immediately.
\end{proof}

\begin{corollary}[Sup-norm and Cauchy--Schwarz bounds]\label{cor:norm_bounds}
Under the assumptions of Lemma~\ref{lem:param_formula},
\begin{equation}
|A_{uO}|\le \|O\|_{\infty}\int_0^T |\Phi'(u)|\,du,
\end{equation}
and, if additionally $O\in L^2(0,T)$ and $\Phi'\in L^2(0,T)$,
\begin{equation}
|A_{uO}|\le \|O\|_{2}\,\|\Phi'\|_{2}.
\end{equation}
\end{corollary}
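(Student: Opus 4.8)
The plan is to reduce the statement to the parameterization formula of Lemma~\ref{lem:param_formula} and then apply two elementary integral inequalities; no dynamical input or time-ordering enters, so the argument is short. First I would invoke Lemma~\ref{lem:param_formula}: since $\Phi$ is absolutely continuous on $[0,T]$, $\Phi'$ exists almost everywhere and is integrable, and
\[
A_{uO}=\oint O\,d\Phi=\int_0^T O(u)\,\Phi'(u)\,du .
\]
Before manipulating this integral I would check that the integrand lies in $L^1(0,T)$: for the first (sup-norm) claim, boundedness of $O$ together with $\Phi'\in L^1$ already gives $O\Phi'\in L^1$; for the second claim, $O\in L^2$ and $\Phi'\in L^2$ imply $O\Phi'\in L^1$ by Cauchy--Schwarz, so in both cases $A_{uO}$ is a well-defined finite number.

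Next I would take absolute values and move them inside the integral using monotonicity of the integral (the triangle inequality for integrals),
\[
|A_{uO}|\le\int_0^T |O(u)|\,|\Phi'(u)|\,du .
\]
For the sup-norm bound I would replace $|O(u)|$ by its essential supremum $\|O\|_\infty$ pointwise a.e.\ and pull the constant out of the integral, obtaining $|A_{uO}|\le\|O\|_\infty\int_0^T|\Phi'(u)|\,du$. For the $L^2$ bound I would instead apply the Cauchy--Schwarz inequality on $L^2(0,T)$ to the pair of nonnegative functions $|O|$ and $|\Phi'|$, giving $|A_{uO}|\le\|O\|_2\,\|\Phi'\|_2$. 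This is structurally the same computation as in Theorem~\ref{thm:variation_bound}, minus the midrange-centering trick that sharpens the constant there.

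The only point requiring a little care — and the closest thing to an obstacle — is bookkeeping about which norm is meant and why the products are integrable: $\|O\|_\infty$ must be understood as the essential supremum, so the pointwise bound $|O(u)|\le\|O\|_\infty$ holds only almost everywhere, which is enough since a null set does not affect the value of the integral; and in the $L^2$ case one must confirm $O\Phi'\in L^1$ (exactly what Cauchy--Schwarz provides) before the above chain of inequalities is legitimate. Everything else is a direct consequence of Lemma~\ref{lem:param_formula} and standard real-analysis inequalities, so I would present it as a two-line corollary.
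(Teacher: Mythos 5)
Your proposal is correct and follows essentially the same route as the paper's own proof: reduce via Lemma~\ref{lem:param_formula} to $\int_0^T O\,\Phi'\,du$, apply the triangle inequality for integrals, then bound with the essential supremum for the first inequality and Cauchy--Schwarz for the second. The extra remarks on integrability of $O\Phi'$ and the a.e.\ interpretation of $\|O\|_\infty$ are sound housekeeping that the paper leaves implicit.
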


\begin{proof}
From \eqref{eq:param_integral},
\begin{equation}
|A_{uO}| = \left|\int_0^T O\,\Phi'\,du\right|
\le \int_0^T |O|\,|\Phi'|\,du
\le \|O\|_{\infty}\int_0^T |\Phi'|\,du,
\end{equation}
proving the sup-norm bound. The $L^2$ bound is Cauchy--Schwarz:
\begin{equation}
\left|\int_0^T O\,\Phi'\,du\right|
\le \left(\int_0^T |O|^2\,du\right)^{1/2}
     \left(\int_0^T |\Phi'|^2\,du\right)^{1/2}.
\end{equation}
\end{proof}

The loop-area becomes particularly transparent when the cycle consists of an ``up-sweep'' in $\Phi$ followed by a ``down-sweep''.

\begin{lemma}[Two-branch formula for a single turning point]\label{lem:branch_formula}
Assume $\Phi$ is absolutely continuous and there exists $u_\ast\in(0,T)$ such that:
\begin{equation}
\Phi \text{ is strictly increasing on } [0,u_\ast],\qquad
\Phi \text{ is strictly decreasing on } [u_\ast,T],
\end{equation}
and $\Phi(0)=\Phi(T)=\Phi_{\min}$, $\Phi(u_\ast)=\Phi_{\max}$.
Define the two branches as single-valued functions of $\Phi$ on $[\Phi_{\min},\Phi_{\max}]$ by
\begin{equation}
O_\uparrow(\varphi) := O\big(u_\uparrow(\varphi)\big),
\quad
O_\downarrow(\varphi) := O\big(u_\downarrow(\varphi)\big),
\end{equation}
where $u_\uparrow$ is the inverse of $\Phi|_{[0,u_\ast]}$ and $u_\downarrow$ is the inverse of $\Phi|_{[u_\ast,T]}$.
Then
\begin{equation}\label{eq:branch_area}
A_{uO}=\oint O\,d\Phi
=\int_{\Phi_{\min}}^{\Phi_{\max}}\bigl(O_\uparrow(\varphi)-O_\downarrow(\varphi)\bigr)\,d\varphi.
\end{equation}
\end{lemma}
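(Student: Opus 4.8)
The plan is to split the closed cycle at the turning point $u_\ast$ and reduce each monotone piece to an ordinary integral over $\varphi\in[\Phi_{\min},\Phi_{\max}]$ via the substitution $\varphi=\Phi(u)$. First, because $\Phi$ is absolutely continuous, Lemma~\ref{lem:param_formula} gives $A_{uO}=\int_0^T O(u)\,\Phi'(u)\,du$, and additivity of the integral over $[0,u_\ast]\cup[u_\ast,T]$ yields
\begin{equation}
A_{uO}=\int_0^{u_\ast} O(u)\,\Phi'(u)\,du+\int_{u_\ast}^{T} O(u)\,\Phi'(u)\,du .
\end{equation}

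Next I would handle the two pieces separately. On $[0,u_\ast]$ the map $u\mapsto\Phi(u)$ is a strictly increasing, absolutely continuous bijection onto $[\Phi_{\min},\Phi_{\max}]$ with continuous inverse $u_\uparrow$; the change-of-variables theorem for the Lebesgue integral (valid for an absolutely continuous monotone substitution and a bounded measurable integrand) gives $\int_0^{u_\ast} O(u)\,\Phi'(u)\,du=\int_{\Phi_{\min}}^{\Phi_{\max}} O(u_\uparrow(\varphi))\,d\varphi=\int_{\Phi_{\min}}^{\Phi_{\max}} O_\uparrow(\varphi)\,d\varphi$. On $[u_\ast,T]$, $\Phi$ decreases strictly from $\Phi_{\max}$ to $\Phi_{\min}$ with inverse $u_\downarrow$; the same substitution now reverses orientation, so $\int_{u_\ast}^{T} O(u)\,\Phi'(u)\,du=\int_{\Phi_{\max}}^{\Phi_{\min}} O_\downarrow(\varphi)\,d\varphi=-\int_{\Phi_{\min}}^{\Phi_{\max}} O_\downarrow(\varphi)\,d\varphi$. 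Adding the two contributions gives exactly \eqref{eq:branch_area}. Note that the strict monotonicity on each branch is what makes $u_\uparrow,u_\downarrow$ genuine single-valued inverses, hence $O_\uparrow,O_\downarrow$ well defined; since $O$ is bounded (the hypothesis of Theorem~\ref{thm:variation_bound}, or simply continuity of $O$ in the physical setting) and $u_\uparrow,u_\downarrow$ are continuous, the branch functions are bounded and measurable on $[\Phi_{\min},\Phi_{\max}]$, so all integrals are finite.

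The step I expect to be the main obstacle — really the only point requiring care — is justifying the substitution $\varphi=\Phi(u)$ under \emph{mere} absolute continuity of $\Phi$, without assuming $\Phi\in C^1$ or $\Phi'>0$ a.e. The clean route is to cite the general change-of-variables theorem for the Lebesgue integral: if $\Phi:[a,b]\to[c,d]$ is absolutely continuous and monotone and $g$ is bounded measurable on $[c,d]$, then $\int_a^b g(\Phi(u))\,\Phi'(u)\,du=\pm\int_c^d g(\varphi)\,d\varphi$ with the sign fixed by the monotonicity; this holds even though $\Phi^{-1}$ need not be absolutely continuous. If one prefers to avoid invoking that result, an alternative is to approximate $\Phi$ on each branch by strictly monotone $C^1$ functions $\Phi_n$ with $\Phi_n'\to\Phi'$ in $L^1$, apply the classical substitution to each $\Phi_n$, and pass to the limit using the absolute continuity of $\Phi$ together with boundedness of $O$; but the direct citation is shorter and cleaner. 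For completeness I would add a one-line remark that if $\Phi$ had a flat stretch (violating strict monotonicity) that stretch contributes $0$ to $\int O\,\Phi'\,du$ and may be excised, so the formula persists with the obvious reading, though under the stated hypotheses this does not occur.
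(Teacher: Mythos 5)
Your proof is correct and follows essentially the same route as the paper's: split the cycle at $u_\ast$, apply Lemma~\ref{lem:param_formula}, and perform the monotone substitution $\varphi=\Phi(u)$ on each branch, with the orientation reversal on the down-sweep producing the minus sign. The only difference is your added care in justifying the change of variables under mere absolute continuity, which strengthens rather than alters the argument.
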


\begin{proof}
Split the loop integral into two parts and use Lemma~\ref{lem:param_formula}:
\begin{equation}
A_{uO}=\int_0^{u_\ast} O(u)\,\Phi'(u)\,du + \int_{u_\ast}^{T} O(u)\,\Phi'(u)\,du.
\end{equation}
On $[0,u_\ast]$, $\Phi$ is strictly increasing, so the change of variables $\varphi=\Phi(u)$ is valid and $d\varphi=\Phi'(u)\,du$:
\begin{equation}
\int_0^{u_\ast} O(u)\,\Phi'(u)\,du
=\int_{\Phi_{\min}}^{\Phi_{\max}} O\big(u_\uparrow(\varphi)\big)\,d\varphi
=\int_{\Phi_{\min}}^{\Phi_{\max}} O_\uparrow(\varphi)\,d\varphi.
\end{equation}
On $[u_\ast,T]$, $\Phi$ is strictly decreasing, so the same substitution yields reversed limits:
\begin{equation}
\int_{u_\ast}^{T} O(u)\,\Phi'(u)\,du
=\int_{\Phi_{\max}}^{\Phi_{\min}} O\big(u_\downarrow(\varphi)\big)\,d\varphi
=-\int_{\Phi_{\min}}^{\Phi_{\max}} O_\downarrow(\varphi)\,d\varphi.
\end{equation}
Adding both expressions gives \eqref{eq:branch_area}.
\end{proof}

\begin{proposition}[Lower bound from uniform branch separation]\label{prop:lower_bound_gap}
Under the assumptions of Lemma~\ref{lem:branch_formula}, suppose there exists $\delta>0$ such that
\begin{equation}
O_\uparrow(\varphi)-O_\downarrow(\varphi)\ge \delta
\quad \text{for all }\varphi\in[\Phi_{\min},\Phi_{\max}].
\end{equation}
Then
\begin{equation}\label{eq:lower_bound_gap}
A_{uO}\ge \delta\,(\Phi_{\max}-\Phi_{\min}).
\end{equation}
If instead $O_\downarrow(\varphi)-O_\uparrow(\varphi)\ge \delta$, then
$A_{uO}\le -\delta(\Phi_{\max}-\Phi_{\min})$.
\end{proposition}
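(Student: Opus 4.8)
The plan is to deduce the bound directly from the two-branch representation established in Lemma~\ref{lem:branch_formula}. The hypotheses of the present proposition are exactly the turning-point hypotheses of that lemma, so I may invoke \eqref{eq:branch_area} without further work, which expresses the loop area as the integral of the branch gap,
\begin{equation}
A_{uO} = \int_{\Phi_{\min}}^{\Phi_{\max}} \bigl(O_\uparrow(\varphi) - O_\downarrow(\varphi)\bigr)\, d\varphi .
\end{equation}
Once this is in hand, the entire statement reduces to monotonicity of the integral, so the body of the proof is essentially a single line.

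First I would note that the integrand $\varphi \mapsto O_\uparrow(\varphi) - O_\downarrow(\varphi)$ is integrable on $[\Phi_{\min},\Phi_{\max}]$; this is already part of what Lemma~\ref{lem:branch_formula} supplies, since the change of variables there produced exactly this integral starting from the finite loop integral $\oint O\,d\Phi$. Then, under the pointwise hypothesis $O_\uparrow(\varphi) - O_\downarrow(\varphi) \ge \delta$ on $[\Phi_{\min},\Phi_{\max}]$, I integrate the inequality over that interval:
\begin{equation}
A_{uO} = \int_{\Phi_{\min}}^{\Phi_{\max}} \bigl(O_\uparrow - O_\downarrow\bigr)\, d\varphi \;\ge\; \int_{\Phi_{\min}}^{\Phi_{\max}} \delta\, d\varphi \;=\; \delta\,(\Phi_{\max}-\Phi_{\min}),
\end{equation}
which is precisely \eqref{eq:lower_bound_gap}.

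For the second assertion I would apply the first to the reversed branch gap. If $O_\downarrow(\varphi) - O_\uparrow(\varphi) \ge \delta$, the same monotonicity argument applied to $-(O_\uparrow - O_\downarrow)$ gives $-A_{uO} \ge \delta\,(\Phi_{\max}-\Phi_{\min})$, i.e.\ $A_{uO} \le -\delta\,(\Phi_{\max}-\Phi_{\min})$. Equivalently, one may observe that interchanging the up- and down-sweeps reverses the orientation of the cycle and hence flips the sign of $A_{uO}$, reducing this case to the one already established.

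There is no genuine obstacle here: the only points requiring care are not re-proving the two-branch formula (that work lives in Lemma~\ref{lem:branch_formula}) and making explicit that the strict-monotonicity turning-point assumptions are inherited verbatim from that lemma. If desired, one could further remark that equality in \eqref{eq:lower_bound_gap} forces $O_\uparrow - O_\downarrow = \delta$ for almost every $\varphi$, but this refinement is not needed for the stated conclusion.
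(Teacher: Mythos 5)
Your proposal is correct and follows essentially the same route as the paper: invoke the two-branch formula \eqref{eq:branch_area} from Lemma~\ref{lem:branch_formula}, integrate the pointwise inequality $O_\uparrow-O_\downarrow\ge\delta$ over $[\Phi_{\min},\Phi_{\max}]$, and handle the second case by swapping the branches. No gaps.
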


\begin{proof}
From \eqref{eq:branch_area},
\begin{equation}
A_{uO}=\int_{\Phi_{\min}}^{\Phi_{\max}}\bigl(O_\uparrow(\varphi)-O_\downarrow(\varphi)\bigr)\,d\varphi
\ge \int_{\Phi_{\min}}^{\Phi_{\max}} \delta\,d\varphi
= \delta(\Phi_{\max}-\Phi_{\min}),
\end{equation}
which proves \eqref{eq:lower_bound_gap}. The second case is identical after swapping the branches.
\end{proof}

\begin{remark}[Why lower bounds need extra assumptions]
Without a condition such as a sign-definite branch difference $O_\uparrow-O_\downarrow$,
cancellations may occur and one cannot guarantee any strictly positive lower bound on $|A_{uO}|$
from bounds on $|O|$ and $|\Phi'|$ alone; the universal lower bound is then $0$.
\end{remark}

\section{Nonadiabatic cyclic integrals for filtered control: identities and bounds}
\label{app:nonadiabatic_cyclic_integrals}

This appendix provides the derivations underlying the bounds for $I$ in the main text.
The only assumption is unitary evolution
generated by the time-dependent Hamiltonian.

Let $\hat O$ be time independent and define
\begin{equation}
    I \;=\; \oint dt\;\mathrm{Tr}\!\big(\rho(t)\hat O\big)\,\dot u(t)
\end{equation}
over one cycle of duration $T$ in steady periodic response,
$u(T)=u(0)$ and $\rho(T)=\rho(0).$ Integration by parts gives
\begin{equation}
    I
    =\Big[u(t)\,\mathrm{Tr}\!\big(\rho(t)\hat O\big)\Big]_0^T
    -\int_0^T dt\;u(t)\,\frac{d}{dt}\mathrm{Tr}\!\big(\rho(t)\hat O\big)
    =-\oint dt\;u(t)\,\mathrm{Tr}\!\big(\dot\rho(t)\hat O\big).
\end{equation}
Under unitary dynamics, $$\dot\rho(t)=-i[\hat H(t),\rho(t)],$$ and by cyclicity of
the trace,
\begin{equation}
    \mathrm{Tr}\!\big(\dot\rho(t)\hat O\big)
    =-i\,\mathrm{Tr}\!\big(\rho(t)[\hat O,\hat H(t)]\big)
    =-i\,\big\langle[\hat O,\hat H(t)]\big\rangle_t.
\end{equation}
Substituting yields
\begin{equation}
    I \;=\; i\oint dt\;u(t)\,\big\langle[\hat O,\hat H(t)]\big\rangle_t.
\end{equation}

Let us now assume the scalar filtered-control model
\begin{equation}
    \hat H(t)=\hat H_A+\Phi(t)\hat M,
    \qquad
    \Phi(t)=\int_0^t K(t-s)\,u(s)\,ds,
\end{equation}
with real causal kernel $K$. Define the commutator weights
\begin{equation}
    a(t)=\big\langle[\hat O,\hat H_A]\big\rangle_t,
    \qquad
    b(t)=\big\langle[\hat O,\hat M]\big\rangle_t.
\end{equation}
Then
$$\big\langle[\hat O,\hat H(t)]\big\rangle_t=a(t)+\Phi(t)b(t),$$
and the cyclic functional splits exactly as
\begin{equation}
    I
    =
    i\oint_0^T dt\;u(t)\,a(t)
    +
    i\oint_0^T dt\;u(t)\,\Phi(t)\,b(t).
\end{equation}
The first term depends on correlations between the drive and $a(t)$ and is
present even when $K(t)$ approaches a delta function. The second term is the
contribution in which the control history enters explicitly through the
convolution; it is the term that reduces to the quadratic functional studied
in the main text, when $b(t)$ may be treated
as constant.

To be concrete, we write the memory contribution as
\begin{equation}
    I_{\rm mem}
    = i\int_0^T dt\;b(t)\,u(t)\,\Phi(t)
    = i\int_0^T dt\;b(t)\,u(t)\int_0^t ds\;K(t-s)\,u(s).
\end{equation}
Introducing $$v(t)=b(t)u(t)$$ gives the triangular bilinear form
\begin{equation}
    I_{\rm mem}
    = i\int_0^T dt\int_0^t ds\; v(t)\,K(t-s)\,u(s),
\end{equation}
which makes the temporal nonlocality explicit: values of the drive at earlier
times $s<t$ contribute to the response at time $t$ with weight $K(t-s).$
When $b(t)$ is constant, $v(t)\propto u(t)$ and the expression becomes a
genuine quadratic functional of the command waveform.

We now bound the norm of the memory content.
The bias term obeys Cauchy--Schwarz:
\begin{equation}
    |I_{\rm bias}|
    =\left|\int_0^T dt\;u(t)\,a(t)\right|
    \le \|u\|_{L^2(0,T)}\,\|a\|_{L^2(0,T)}.
\end{equation}
For the memory term, first bound $$|b(t)|\le \|b\|_\infty$$ to obtain
\begin{equation}
    |I_{\rm mem}|
    \le \|b\|_\infty \int_0^T dt\;|u(t)\Phi(t)|
    \le \|b\|_\infty\,\|u\|_{L^2(0,T)}\,\|\Phi\|_{L^2(0,T)}.
\end{equation}
Since $\Phi=K_c*u$ with $K_c(\tau)=K(\tau)\Theta(\tau)\in L^1(\mathbb R_+)$,
Young's inequality yields
\begin{equation}
    \|\Phi\|_{L^2(0,T)}
    \le \|K_c\|_{L^1(\mathbb R_+)}\,\|u\|_{L^2(0,T)}.
\end{equation}
Combining these two gives
\begin{equation}
    |I_{\rm mem}|
    \le \|b\|_\infty\,\|K_c\|_{L^1(\mathbb R_+)}\,\|u\|_{L^2(0,T)}^2,
\end{equation}
and therefore
\begin{equation}
    |I|
    \le \|u\|_{L^2(0,T)}\,\|a\|_{L^2(0,T)}
    +  \|b\|_\infty\,\|K_c\|_{L^1(\mathbb R_+)}\,\|u\|_{L^2(0,T)}^2,
\end{equation}
which is the inequality we report in the paper.

In the instantaneous-control limit $K(\tau)\to\delta(\tau)$,
$\Phi(t)\to u(t)$ and the memory contribution becomes local,
\begin{equation}
    I_{\rm mem}\to i\int_0^T dt\;b(t)\,u(t)^2.
\end{equation}
Departures from this local reduction quantify the role of the control-channel
memory kernel.

For completeness, we provide exact formulae in the case of sinusoidal control and an exponential kernel function; consider the  kernel
\begin{equation}
    K(\tau)=\frac{1}{\tau_c}\,e^{-\tau/\tau_c}\,\Theta(\tau),
\end{equation}
so that $\Phi$ satisfies $\tau_c\,\dot\Phi+\Phi=u$. For $u(t)=A\cos(\omega t)$,
the solution with $\Phi(0)=0$ is
\begin{equation}
    \Phi(t)
    =
    \frac{A}{1+(\omega\tau_c)^2}\Big[\cos(\omega t)+\omega\tau_c\sin(\omega t)-e^{-t/\tau_c}\Big].
\end{equation}
The quadratic functional $J(T)=\int_0^T u(t)\Phi(t)\,dt$ and its special cases
(integer numbers of periods and long-time limit) follow by direct integration,
recovering the expressions stated in the main text when $b(t)$ is treated as
constant and the bias term is suppressed by symmetry or design.

\section{Classical control channel: RC synthesis, pole structure, and exponential-mode kernels}
\label{app:rc_realizability}

\begin{figure}[t]
\centering
\begin{circuitikz}[american]
  \draw (0,0) node[left] {$u(t)$} to[short, o-] (0,0);

  \draw (0,0) to[R=$R_1$] (2,0) coordinate (v1);
  \draw (v1) node[above] {$V_1$} to[C=$C_1$] (2,-2) node[ground]{};

  \draw (2,0) to[short] (3,0);
  \draw (3.2,0) node {$\cdots$};

  \draw (4,0) node[above] {$V_j$};

  \draw (5,0) to[R=$R_N$] (7,0) coordinate (vN);
  \draw (vN) node[above] {\hspace{1cm}$V_N=\Phi(t)$} to[C=$C_N$] (7,-2) node[ground]{};
  \draw (7,0) to[short, -o] (7.8,0);
\end{circuitikz}
\caption{Discrete RC ladder (lossy line) model of the control channel. The commanded input voltage $u(t)$ drives a chain of series resistors with shunt capacitors to ground. The delivered device-node voltage is $\Phi(t)=V_N(t)$.}
\label{fig:rc_ladder}
\end{figure}
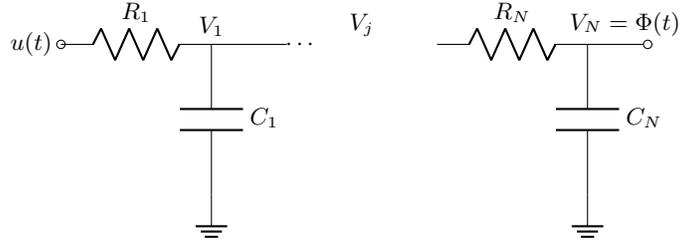

This appendix motivates the use of RC ladder models for the control channel and
connects the associated transfer function to the exponential-mode representation
used in the main text. The essential point is that the class of transfer
functions produced by passive RC networks has a highly constrained analytic
structure (stability, real coefficients, and pole/zero restrictions). That same
structure implies that the corresponding impulse response is a nonnegative
mixture of decaying exponentials (discrete for finite ladders, continuous in the
distributed limit). This is precisely the statement that the realized field
$
\Phi(t) = (K * u)(t)
$
can be represented (exactly or to controlled accuracy) by a finite set of
auxiliary modes obeying first-order ODEs.

We model the control line and bias circuitry as a passive linear time-invariant
two-port that maps an input voltage waveform $u(t)$ to the voltage $\Phi(t)$ at
the device node. In the Laplace domain (with $s$ in the right half-plane),
this defines a transfer function
$$
G(s) := \frac{\Phi(s)}{u(s)}.
$$
Passivity and causality imply that the network is stable (no poles in
$\Re(s) > 0$) and that $G(s)$ is analytic there with real coefficients. For RC
networks (resistors and capacitors only), the pole structure is even more
restricted: the transfer function of a passive RC two-port is a real rational
function whose poles lie on the negative real axis (and, in the generic case,
are simple). A detailed characterization of the transfer functions of general
RC ladder topologies is classical; for the ladder class relevant here, these
restrictions are derived and discussed in \cite{FialkowGerst1951}.

For synthesis, one often specifies a desired voltage-ratio function $G(s)$ and
asks when it is realizable by an RC ladder. A standard criterion is that
voltage-ratio functions with negative poles and nonpositive zeros fall into the
RC ladder realizability class; one constructive route proceeds by selecting an
appropriate driving-point function (an impedance or admittance) whose numerator
and denominator are matched to the prescribed transfer denominator and then
performing a ladder extraction step-by-step \cite{Magos1970}. In other words,
the analytic restrictions are not merely consequences of RC ladders; they are
also the design constraints that ensure a ladder realization exists.

In the present work we use these results in the forward direction: if the
control channel is well described by a passive RC network (a common and
physically motivated assumption for wiring, filters, bias tees, and distributed
lossy lines at the relevant frequencies), then $G(s)$ belongs to the above
class, and therefore its time-domain impulse response necessarily has an
exponential-mode representation.

Assume first that $G(s)$ is rational and strictly proper or proper. Under the
RC restrictions, all poles lie on the negative real axis. For a finite ladder,
there are finitely many poles, and one may write a partial-fraction expansion
of the form
$$
G(s) = g_\infty + \sum_{k=1}^{K} \frac{c_k}{s+\nu_k},
\qquad
\nu_k > 0,
$$
where $g_\infty$ accounts for any instantaneous feedthrough (if present). The
corresponding causal impulse response is
$$
K(t) = g_\infty\,\delta(t) + \sum_{k=1}^{K} c_k e^{-\nu_k t}\,\Theta(t).
$$
Consequently,
$$
\Phi(t) = (K*u)(t)
       = g_\infty\,u(t) + \sum_{k=1}^{K} c_k \int_{-\infty}^{t} e^{-\nu_k(t-s)}u(s)\,ds.
$$
Introducing auxiliary coordinates
$$
\Phi_k(t) := \int_{-\infty}^{t} c_k e^{-\nu_k(t-s)}u(s)\,ds,
\qquad
\Phi(t) = g_\infty\,u(t) + \sum_{k=1}^{K}\Phi_k(t),
$$
one obtains the time-local realization
$$
\dot \Phi_k(t) = -\nu_k \Phi_k(t) + c_k u(t),
\qquad
k=1,\dots,K.
$$
This is the ODE embedding used in the main text. For a finite RC ladder, it is
not an approximation: it is simply the state-space realization implied by the
pole expansion of $G(s)$.

The memoryless (instantaneous) limit corresponds to $K(t) \propto \delta(t)$, i.e.
$G(s) \to const$ (up to a static gain convention). In the exponential-mode picture,
a convenient sufficient condition for this limit is:
(i) the total weight satisfies $g_\infty + \sum_k c_k = 1$, and
(ii) all time scales collapse, in the sense that $\max_k \nu_k^{-1} \to 0$.
Then $K$ concentrates at $t=0$ and $\Phi(t)$ approaches $u(t)$ uniformly on
protocols that do not vary on vanishing time scales.

Conversely, the longest memory time is controlled by the slowest mode
$$
\tau_{\rm mem} \sim \nu_{\min}^{-1},
\qquad
\nu_{\min} := \min_k \nu_k.
$$
Retaining only this slowest pole is precisely the single-RC approximation:
it keeps the dominant long-time relaxation channel and discards faster
components of the line response.

To make the preceding discussion concrete, we now write explicitly the dynamics
of a discrete RC ladder (a lumped-element model of a lossy line). Consider the
$N$-stage network sketched in Fig.~\ref{fig:rc_ladder}: the commanded input
voltage $u(t)$ drives a chain of series resistors, and each intermediate node is
shunted to ground by a capacitor. Let $V_j(t)$ denote the voltage at node $j$
for $j=1,\dots,N$, with the boundary condition $V_0(t)=u(t)$, and identify the
delivered device-node voltage as
$$
\Phi(t)=V_N(t).
$$
For clarity we allow the component values to vary by stage ($R_j$, $C_j$), and
recover the uniform case by setting $R_j\equiv R$, $C_j\equiv C$.

Kirchhoff's current law at node $j$ equates the capacitor current to the net
resistive current flowing into that node. For the first node one obtains
$$
C_1\,\dot V_1(t)
=
\frac{V_0(t)-V_1(t)}{R_1}
-
\frac{V_1(t)-V_2(t)}{R_2},
$$
for interior nodes $j=2,\dots,N-1$ one has
\begin{eqnarray}
C_j\,\dot V_j(t)
&=& i_j-i_{j+1}\nonumber \\
&=&\frac{V_{j-1}(t)-V_j(t)}{R_j}
-
\frac{V_j(t)-V_{j+1}(t)}{R_{j+1}},
\end{eqnarray}

and for the terminal node (open end beyond node $N$) one finds
$$
C_N\,\dot V_N(t)
=
\frac{V_{N-1}(t)-V_N(t)}{R_N}.
$$
In this model, thus, at equilibrium the currents $i_j=i_{j+1}$ and the shunted capacitances do not leak currents into the ground.
In the uniform ladder case ($R_j\equiv R$, $C_j\equiv C$) these equations reduce to the familiar
discrete-diffusion form
$$
C\,\dot V_j(t) = \frac{1}{R}\big(V_{j-1}(t)-2V_j(t)+V_{j+1}(t)\big),\qquad
j=1,\dots,N-1,
$$
together with the terminal condition
$$
C\,\dot V_N(t)=\frac{1}{R}\big(V_{N-1}(t)-V_N(t)\big).
$$

Writing $\vec V(t)=(V_1(t),\dots,V_N(t))^{\mathsf T}$, the ladder is a finite-dimensional
linear time-invariant system
$$
\dot {\vec V}(t)=A\,\vec V(t)+\vec b\,u(t),
\qquad
\Phi(t)=\vec e_N^{\mathsf T}\vec V(t),
$$
where $\vec e_N$ is the $N$th standard basis vector, $\vec b=(1/(R_1C_1),0,\dots,0)^{\mathsf T}$,
and $A$ is tridiagonal with entries determined by $\{R_j,C_j\}$ (explicitly:
$A_{11}=-(1/(R_1C_1)+1/(R_2C_1))$, $A_{12}=1/(R_2C_1)$; for $2\le j\le N-1$,
$A_{j,j-1}=1/(R_jC_j)$, $A_{j,j}=-(1/(R_jC_j)+1/(R_{j+1}C_j))$, $A_{j,j+1}=1/(R_{j+1}C_j)$;
and $A_{N,N-1}=1/(R_NC_N)$, $A_{N,N}=-1/(R_NC_N)$).

At this point, it is obvious to see that for any locally integrable input $u$ and any initial condition $V(0)$, the unique
solution is given by variation of constants \cite{caravelli2},
$$
V(t)=e^{At}V(0)+\int_{0}^{t}e^{A(t-s)}\,b\,u(s)\,ds,
$$
and therefore the readout node satisfies
$$
\Phi(t)=\vec e_N^{\mathsf T}e^{At}\vec V(0)+\int_{0}^{t}K_N(t-s)\,u(s)\,ds,
\qquad
K_N(t):=\vec e_N^{\mathsf T}e^{At}\vec b.
$$
Thus the input--output map $u\mapsto \Phi$ is a causal convolution with kernel
$K_N$, up to an exponentially decaying transient term determined by $V(0)$. In
particular, if one initializes the ladder in its steady state for $u\equiv 0$
(i.e.\ $V(0)=0$), then $\Phi(t)=(K_N*u)(t)$ exactly.

Taking the Laplace transform for $\Re(s)>0$ gives
$$
(sI-A)V(s)=V(0)+b\,u(s),
\qquad
\Phi(s)=\vec e_N^{\mathsf T}\vec V(s),
$$
hence
$$
\Phi(s)=e_N^{\mathsf T}(sI-A)^{-1}\vec V(0)+G_N(s)\,u(s),
\qquad
G_N(s):=\frac{\Phi(s)}{u(s)}=\vec e_N^{\mathsf T}(sI-A)^{-1}\vec b.
$$
The scalar transfer function $G_N(s)$ is a real rational function whose poles are
the eigenvalues of $A$. Since the ladder is passive, all eigenvalues satisfy
$\Re(\lambda)<0$, so $G_N$ is stable (analytic for $\Re(s)>0$). For RC ladders,
one in fact has the stronger property that the poles lie on the negative real
axis (and generically are simple) \cite{FialkowGerst1951,Magos1970}. Equivalently,
the eigenvalues of $-A$ are real and strictly positive.

Assume for simplicity that $A$ is diagonalizable with distinct eigenvalues
$\lambda_k\in\mathbb R$ ($k=1,\dots,N$), with $\lambda_k<0$. Writing
$\nu_k:=-\lambda_k>0$, the transfer function admits a partial-fraction expansion
$$
G_N(s)=\sum_{k=1}^{N}\frac{\alpha_k}{s+\nu_k},
$$
where the residues $\alpha_k$ depend on the overlaps of the input and readout
vectors with the eigenmodes of $A$. Inverting the Laplace transform yields the
impulse response as a finite sum of decaying exponentials,
$$
K_N(t)=\sum_{k=1}^{N}\alpha_k\,e^{-\nu_k t}\,\Theta(t),
$$
so the realized field is
$$
\Phi(t)=\int_{0}^{t}\Big(\sum_{k=1}^{N}\alpha_k e^{-\nu_k(t-s)}\Big)u(s)\,ds.
$$
This is the precise sense in which a finite RC ladder produces an exponential-mode
kernel: its poles $\{-\nu_k\}$ are real and negative, and each pole contributes one
relaxation mode. In the language of the main text, one can define auxiliary modes
$$
\Phi_k(t):=\int_{0}^{t}\alpha_k e^{-\nu_k(t-s)}u(s)\,ds,
\qquad
\Phi(t)=\sum_{k=1}^{N}\Phi_k(t),
$$
which satisfy the time-local realization
$$
\dot\Phi_k(t)=-\nu_k\Phi_k(t)+\alpha_k\,u(t),
\qquad
k=1,\dots,N.
$$

The same result follows without partial fractions by expanding the matrix exponential.
If $A=Q\Lambda Q^{-1}$ with $\Lambda=\mathrm{diag}(\lambda_1,\dots,\lambda_N)$, then
$$
K_N(t)=\vec e_N^{\mathsf T}Q\,e^{\Lambda t}\,Q^{-1}\vec b
      =\sum_{k=1}^{N}\Big(\vec e_N^{\mathsf T}q_k\Big)\Big(\tilde q_k^{\mathsf T}\vec b\Big)
        e^{\lambda_k t}\,\Theta(t),
$$
where $q_k$ and $\tilde q_k$ are right and left eigenvectors. For symmetric ladders
(e.g.\ uniform $R,C$), one can take $Q$ orthogonal so that $\tilde q_k=q_k$.

On the imaginary axis $s=i\omega$, the steady-state sinusoidal response is governed by
$G_N(i\omega)$, whose magnitude and phase encode attenuation and lag. The negative real
pole structure implies a monotone low-pass character and guarantees that $K_N(t)$ is a
linear combination of decaying exponentials. This pole/impulse-response correspondence
is the direct bridge from circuit realizability to the exponential-mode kernels used to
embed the control memory as a finite set of auxiliary ODEs in the main text.

The input--output
transfer function is
$$
G_N(s):=\frac{\Phi(s)}{u(s)}=e_N^{\mathsf T}(sI-A)^{-1}b.
$$
Hence the poles of $G_N$ are the eigenvalues of $A$. For a passive RC ladder the
dynamics is stable, so these poles lie in $\mathrm{Re}(s)<0$, and for the ladder
families of interest they occur on the negative real axis; the associated
realizability and synthesis constraints are classical and are treated in detail
in \cite{FialkowGerst1951} and from a constructive synthesis viewpoint in
\cite{Magos1970}.

The corresponding causal impulse response kernel is
$$
K_N(t)=e_N^{\mathsf T}e^{At}b\,\Theta(t),
$$
and diagonalizing $A$ yields an explicit exponential-mode decomposition,
$$
K_N(t)=\sum_{k=1}^{N}\alpha_k\,e^{-\nu_k t}\,\Theta(t),
\qquad \nu_k>0,
$$
where $\{-\nu_k\}$ are the eigenvalues of $A$ and the weights $\alpha_k$ are fixed by
the overlaps of the input vector $b$ and the readout vector $e_N$ with the
corresponding eigenvectors. Note that $\alpha_k=\Big(e_N^{\mathsf T}q_k\Big)\Big(\tilde q_k^{\mathsf T}b\Big)$. Consequently, the delivered voltage is a causal
convolution $\Phi(t)=(K_N*u)(t)$ and therefore admits the finite-dimensional
time-local realization used in the main text:
$$
\dot\Phi_k(t)=-\nu_k\Phi_k(t)+\alpha_k u(t),
\qquad
\Phi(t)=\sum_{k=1}^{N}\Phi_k(t).
$$
For a finite ladder this representation is exact: it is simply the modal form of
the ladder state-space dynamics.

The longest memory time is governed by the slowest decay rate,
$$
\tau_{\mathrm{mem}}\sim \nu_{\min}^{-1},
\qquad
\nu_{\min}:=\min_k \nu_k.
$$
For the uniform ladder one recovers diffusive scaling of the slowest mode,
$\nu_{\min}\propto (RC)^{-1}N^{-2}$ for large $N$, so adding ladder stages increases
the longest memory time quadratically in $N$. In this precise sense, a long line
naturally produces a hierarchy of relaxation rates rather than a single RC constant,
and the single-pole (single-RC) model corresponds to retaining only the slowest
relaxation channel.

Let us now discuss the diffusive regime. A long ladder with many small sections approaches a distributed RC line. In that
limit, voltage propagation along the line is governed by a diffusion equation,
and the transfer function from the input to a point a distance $x$ away is no
longer rational. Nevertheless, it remains passive and stable, and its impulse
response is still completely monotone: it can be written as a nonnegative mixture
of decaying exponentials.

Concretely, for a distributed line with resistance $r$ and capacitance $c$ per
unit length, the voltage $V(x,t)$ satisfies
\begin{eqnarray}
\partial_t V(x,t)=D\,\partial_x^2 V(x,t),
\qquad
D:=\frac{1}{rc}.\label{eq:Vreadout}
\end{eqnarray}
For boundary drive $V(0,t)=u(t)$ and a readout at position $x=L$, the input--output
relation is again a causal convolution
$\Phi(t)=\int_{-\infty}^t K_L(t-s)u(s)\,ds$.
In the Laplace domain one finds
$$
G_L(s)=\frac{\Phi(s)}{u(s)}=\exp\!\big(-L\sqrt{s/D}\big),
$$
which has no right-half-plane singularities and is strictly decaying with frequency.
The corresponding time-domain kernel has a long tail (slower than an exponential),
reflecting the continuum of diffusive time scales in the line. Equivalently,
$G_L(s)$ admits a Laplace--Stieltjes representation
$$
G_L(s)=\int_{0}^{\infty}\frac{d\mu_L(\nu)}{s+\nu},
\qquad d\mu_L(\nu)\ge 0,
$$
and therefore
$$
K_L(t)=\int_{0}^{\infty}e^{-\nu t}\,d\mu_L(\nu).
$$
This shows that the ``sum of exponentials'' representation used in the main text
is the discrete counterpart of a more general statement: passive RC channels
produce kernels that are nonnegative mixtures of exponentials. A finite auxiliary-mode
model corresponds to approximating the measure $\mu_L$ by a finite quadrature rule,
i.e.\ replacing the continuum by finitely many effective relaxation rates $\{\nu_k\}$
with weights $\{c_k\}$.

The control-channel model in the main text assumes that the realized field is generated
by a passive, stable filter with a hierarchy of relaxation rates,
$$
K(t)\approx \sum_{k=1}^{K_{\max}} c_k e^{-\nu_k t}\,\Theta(t).
$$
For a finite ladder, this is exact after lumped-element modeling; for a long distributed
line, it is a controlled approximation obtained by rational (modal) approximation of the
transfer function. The single-RC case corresponds to retaining only the slowest relaxation
rate and is therefore the minimal model that captures a nontrivial memory time. Finally,
the nomenclature ``mode'' is literal in the ladder picture: the rates $\nu_k$ are the decay
rates of the ladder's internal linear modes, i.e.\ the channels through which the line stores
and releases past drive history \cite{FialkowGerst1951,Magos1970}.

It is worth noting that the same passive RC-line model also applies to the
\emph{readout} (or more generally the wiring-mediated observation) of signals
generated at a quantum device node. In the distributed (diffusive) limit, the
voltage along a lossy line with resistance $r$ and capacitance $c$ per unit
length satisfies eqn.  \eqref{eq:Vreadout}
with a boundary drive (or device-generated signal) imposed at $x=0$ and a
readout taken at $x=l$. This defines a linear, causal input--output relation
between the waveform $u(t):=V(0,t)$ and the measured voltage
$\Phi(t):=V(l,t)$.

Passing to the Laplace domain (with $\Re(s)>0$), the diffusion equation gives
$$
V(x,s)=V(0,s)\,\exp\!\Big(-x\sqrt{\frac{s}{D}}\Big),
\qquad\Rightarrow\qquad
G_l(s):=\frac{\Phi(s)}{u(s)}=\exp\!\Big(-l\sqrt{\frac{s}{D}}\Big).
$$
Therefore the readout is a convolution $\Phi=K_l*u$ with an impulse response
$K_l(t)$ defined by the inverse Laplace transform of $G_l(s)$. Using the
standard inversion formula for $\exp(-a\sqrt{s})$, one obtains the explicit
kernel
$$
K_l(t)=\frac{l}{2\sqrt{\pi D}}\;t^{-3/2}\exp\!\Big(-\frac{l^2}{4Dt}\Big)\,\Theta(t).
$$
This kernel is strictly causal and normalized (it has unit DC gain), but it is
\emph{not} concentrated at a single time: the diffusive line does not behave
as an ideal delay element. Instead, it realizes a broad distribution of
effective delays, reflecting the continuum of internal relaxation time scales.

A convenient operational notion of the readout time is obtained from the step
response. For a step input $u(t)=u_0\,\Theta(t)$ one finds
$$
\Phi(t)=u_0\,\mathrm{erfc}\!\Big(\frac{l}{2\sqrt{Dt}}\Big),
$$
so the time to reach a fraction $\eta\in(0,1)$ of the final value satisfies
$$
\Phi(\tau_\eta)=\eta u_0
\qquad\Longleftrightarrow\qquad
\tau_\eta=\frac{l^2}{4D\,\big(\mathrm{erfc}^{-1}(\eta)\big)^2}.
$$
In particular, any reasonable “readout/settling time” definition scales as
$$
\tau_{\rm ro}\sim \frac{l^2}{D}=rc\,l^2=(rl)(cl)=R_{\rm tot}C_{\rm tot},
$$
up to an $O(1)$ factor that depends on the chosen threshold (e.g.\ $50\%$,
$90\%$, or a $10$--$90\%$ rise time).

Finally, it is sometimes useful (but only as an approximation) to describe the
diffusive kernel as an effective delay for band-limited waveforms. Writing the
frequency response $G_l(i\omega)$, its phase implies a frequency-dependent
group delay
$$
\tau_{\rm g}(\omega):=-\frac{d}{d\omega}\arg G_l(i\omega)
= \frac{l}{2\sqrt{2D\,\omega}},
$$
so around a narrow carrier band one may heuristically view the readout as an
attenuated, delayed version of the drive together with additional smoothing.
The exact time-domain description, however, remains the causal convolution with
$K_l(t)$ above rather than a delta-function delay.

\section{Single-pole bandwidth, time constant, and rise-time conversions}
\label{app:bandwidth_tau_risetime}
Let us briefly describe how to model the controller using the specifications from the classical controller device sitting in the laboratory - e.g. how to turn this manuscript into a practical rule. We assume for simplicity here that this is classical, i.e. finding the specifications of the RC ladder channel and the dominant contribution, in which the kernel can be written as $K(\tau)=\tau_c^{-1} e^{-\tau/\tau_c}$. Here we use the make the assumption that there is a unique RC channel. In this case, the engineering literature comes to help to estimate the time $\tau_c$ \cite{TektronixBW035,KeysightScopeRiseTimeBandwidth,KeysightScopeRiseTimeBandwidth}
. 

In practice the control channel is often characterized \emph{in situ} by a small-signal
frequency response (e.g.\ a measured transfer function magnitude or a $3\,\mathrm{dB}$
bandwidth), whereas in our model the same channel is parameterized in the time domain by
a memory scale $\tau_c$ (for instance, via an exponential kernel or an equivalent
first-order ODE). When the small-signal map from the commanded waveform $u(t)$ to the
realized field $\Phi(t)$ is well described by a \emph{dominant single pole}, these
characterizations are equivalent and one can translate between them without ambiguity.

A first-order low-pass response with unit DC gain has transfer function
$$
H(i\omega)=\frac{1}{1+i\omega\tau_c},
$$
where $\tau_c>0$ is the characteristic time constant. The magnitude is
$$
|H(i\omega)|=\frac{1}{\sqrt{1+(\omega\tau_c)^2}}.
$$
The $3\,\mathrm{dB}$ cutoff is defined as the frequency where the \emph{power} has dropped
by a factor of two relative to DC. Equivalently, the magnitude has dropped to
$1/\sqrt{2}$ of its DC value:
$$
|H(i\omega_{3\mathrm{dB}})|=\frac{1}{\sqrt{2}}.
$$
Substituting the magnitude formula gives
$$
\frac{1}{\sqrt{1+(\omega_{3\mathrm{dB}}\tau_c)^2}}=\frac{1}{\sqrt{2}}
\quad\Longrightarrow\quad
1+(\omega_{3\mathrm{dB}}\tau_c)^2=2
\quad\Longrightarrow\quad
\omega_{3\mathrm{dB}}\tau_c=1.
$$
Thus
$$
\omega_{3\mathrm{dB}}=\frac{1}{\tau_c},
\qquad
f_{\mathrm{3dB}}=\frac{\omega_{3\mathrm{dB}}}{2\pi}=\frac{1}{2\pi\tau_c},
\qquad
\tau_c=\frac{1}{2\pi f_{\mathrm{3dB}}}.
$$
This identity is exact for the single-pole model. In more complicated lines with multiple
poles (or weak resonances), one may still report an \emph{effective} $\tau_c$ by fitting a
dominant-pole approximation over the relevant bandwidth.

The same first-order system has a standard time-domain interpretation. Let the input be
a unit step, and denote by $y(t)$ the normalized output (so that $y(\infty)=1$). The
first-order low-pass step response is
$$
y(t)=1-e^{-t/\tau_c}.
$$
As stated in the text, we define the $10\%$ and $90\%$ times, $t_{10}$ and $t_{90}$, by
$y(t_{10})=0.1$ and $y(t_{90})=0.9$. Solving gives
$$
0.1=1-e^{-t_{10}/\tau_c}
\quad\Longrightarrow\quad
e^{-t_{10}/\tau_c}=0.9
\quad\Longrightarrow\quad
t_{10}=\tau_c\ln\!\Big(\frac{1}{0.9}\Big),
$$
and
$$
0.9=1-e^{-t_{90}/\tau_c}
\quad\Longrightarrow\quad
e^{-t_{90}/\tau_c}=0.1
\quad\Longrightarrow\quad
t_{90}=\tau_c\ln\!\Big(\frac{1}{0.1}\Big).
$$
Therefore the $10\%$--$90\%$ rise time is
$$
t_r=t_{90}-t_{10}
=\tau_c\left[\ln\!\Big(\frac{1}{0.1}\Big)-\ln\!\Big(\frac{1}{0.9}\Big)\right]
=\tau_c\ln\!\Big(\frac{0.9}{0.1}\Big)
=\tau_c\ln 9
\approx 2.197\,\tau_c.
$$

If the channel is approximately single-pole, it is standard to identify the bandwidth
${\mathcal B}_w$ (in hertz) with the $3\,\mathrm{dB}$ cutoff,
$$
{\mathcal B}_w \approx f_{\mathrm{3dB}}=\frac{1}{2\pi\tau_c}.
$$
Combining this with $t_r\approx (\ln 9)\tau_c$ yields
$$
t_r \approx (\ln 9)\tau_c
= \frac{\ln 9}{2\pi}\,\frac{1}{{\mathcal B}_w}
\approx \frac{2.197}{2\pi}\,\frac{1}{{\mathcal B}_w}
\approx \frac{0.349}{{\mathcal B}_w}
\approx \frac{0.35}{{\mathcal B}_w}.
$$
This is the origin of the widely used bandwidth--rise-time conversion for first-order
limited responses in the engineering context, which we also use in the text. In our context it provides a direct translation between measurable
small-signal characterizations of the line (a $3\,\mathrm{dB}$ bandwidth or step/impulse
response) and the effective memory scale $\tau_c$ used to parameterize the kernel or its
ODE embedding in the main text.

The relations above are exact for a true first-order response. For transfer functions
with multiple comparable poles/zeros, the $3\,\mathrm{dB}$ point and the $10\%$--$90\%$
rise time generally do not correspond to a unique $\tau_c$; nevertheless, over a limited
frequency range one can often fit a dominant pole and use the same conversions to report
an effective memory scale appropriate for the drive waveforms considered in this work.
The specifications provided in the literature suggest that 
\cite{TektronixBW035,KeysightScopeRiseTimeBandwidth,KeysightScopeRiseTimeBandwidth} $\tau_c$ is in the range of $500ps-1500$ ps for a scope in the Ghz regime. 

\section{Quantum control channel: Kubo--Born--Oppenheimer derivation of a filtered control Hamiltonian (with weak dissipation)}
\label{app:kubo_born_filtered_control}

This appendix derives, from a microscopic quantum model, the effective description used
throughout the paper in which the device is driven by a \emph{realized} classical field
$\Phi(t)$ that is a causal convolution of the commanded waveform $u(t)$.
The key steps are:
(i) model the control hardware as a (large) quantum ``channel'' driven at an input port by
a classical source $u(t)$;
(ii) identify the delivered device-node field with the expectation value of an output-port
operator $\hat B$;
(iii) use linear response (Kubo) to obtain $\Phi(t)=(K*u)(t)$ with $K$ a retarded
susceptibility; and
(iv) under weak device--channel coupling, show that the device evolves under an effective
Hamiltonian $\hat H_{\rm eff}(t)=\hat H_A+\Phi(t)\hat M$, with any Lindblad contribution
appearing as a separate (weak) correction generated by channel fluctuations.

\subsection{Closed system case (no dissipation)}
\label{app:kubo_phi_derivation}
Let $\mathcal H_A$ be the device Hilbert space (a qubit in the main text) and
$\mathcal H_B$ the Hilbert space of the control channel (wiring/filter stack). We assume a
classical command $u(t)$ acts at the \emph{input port} of the channel as a c-number source,
and the device couples to the \emph{output port} field at the chip.

We take the total Hamiltonian (setting $\hbar=1$) to be
\begin{equation}
    \hat H(t)
    =
    \hat H_A
    +
    \hat H_B
    +
    g\,\hat M\otimes \hat L
    \;-\;
    u(t)\,\mathbb I_A\otimes \hat F,
    \label{eq:H_total_kubo_born}
\end{equation}
where:
 $\hat F$ is the channel operator conjugate to the applied input waveform $u(t)$
(generalized force at the boundary);
 $\hat L$ is the channel operator representing the delivered field at the device node
(generalized coordinate at the output port);
$\hat M$ is the device coupling operator (e.g.\ $\sigma_x$ or $\sigma_z$); and
 $g$ quantifies device loading of the channel.

We assume a bipartite Hilbert space $\mathcal H=\mathcal H_A\otimes\mathcal H_B$ and an
initial product state at time $t_0$,
\begin{equation}
    \hat\rho(t_0)=\hat\rho_A(t_0)\otimes \hat\rho_B,
    \qquad [\hat\rho_B,\hat H_B]=0,
    \label{eq:rho0_factorized}
\end{equation}
with exact unitary evolution
\begin{equation}
    \hat\rho(t)=\hat U(t,t_0)\,\hat\rho(t_0)\,\hat U^\dagger(t,t_0),
    \qquad
    \hat U(t,t_0)=\mathcal T\exp\!\Big(-i\!\int_{t_0}^{t}\hat H(s)\,ds\Big).
    \label{eq:global_unitary}
\end{equation}
The reduced state is $\hat\rho_A(t):=\mathrm{Tr}_B[\hat\rho(t)]$.

\smallskip
\noindent
To derive $\dot{\hat\rho}_A(t)$, we will use only standard identities for the partial trace.
Fix an orthonormal basis $\{|n\rangle\}$ of $\mathcal H_B$. By definition,
\begin{equation}
    \mathrm{Tr}_B[\hat X]
    :=
    \sum_n (\mathbb I_A\otimes\langle n|)\,\hat X\,(\mathbb I_A\otimes |n\rangle),
    \qquad \hat X\in\mathcal B(\mathcal H_A\otimes\mathcal H_B).
    \label{eq:partial_trace_def}
\end{equation}
From \eqref{eq:partial_trace_def} one immediately checks the following identities
(for $\hat A$ acting on $\mathcal H_A$, $\hat Z$ acting on $\mathcal H_B$, and
$\hat X$ arbitrary on $\mathcal H_A\otimes\mathcal H_B$):
\begin{align}
    \mathrm{Tr}_B\!\big[(\hat A\otimes \mathbb I_B)\hat X\big]
    &= \hat A\,\mathrm{Tr}_B[\hat X],
    \label{eq:ptrace_left_A}
    \\
    \mathrm{Tr}_B\!\big[\hat X(\hat A\otimes \mathbb I_B)\big]
    &= \mathrm{Tr}_B[\hat X]\,\hat A,
    \label{eq:ptrace_right_A}
    \\
    \mathrm{Tr}_B\!\big[(\mathbb I_A\otimes \hat Z)\hat X\big]
    &= \mathrm{Tr}_B\!\big[\hat X(\mathbb I_A\otimes \hat Z)\big].
    \label{eq:ptrace_cyclic_B}
\end{align}
(Identity \eqref{eq:ptrace_cyclic_B} is the “cyclicity” of the trace on subsystem $B$ for operators acting
\emph{only} on $B$. It follows from \eqref{eq:partial_trace_def} by inserting a resolution of the identity
on $\mathcal H_B$ and relabeling dummy indices.)
A useful corollary of \eqref{eq:ptrace_cyclic_B} is
\begin{equation}
    \mathrm{Tr}_B\!\big([\mathbb I_A\otimes \hat Z,\hat X]\big)=0
    \qquad \text{for all }\hat Z\in\mathcal B(\mathcal H_B),\ \hat X\in\mathcal B(\mathcal H_A\otimes\mathcal H_B).
    \label{eq:ptrace_commutator_B_zero}
\end{equation}

\smallskip
\noindent
Now use the Liouville--von Neumann equation $\dot{\hat\rho}(t)=-i[\hat H(t),\hat\rho(t)]$ and
differentiate under the partial trace, we get
\begin{equation}
    \dot{\hat\rho}_A(t)
    =
    \mathrm{Tr}_B[\dot{\hat\rho}(t)]
    =
    -i\,\mathrm{Tr}_B\!\big([\hat H(t),\hat\rho(t)]\big).
    \label{eq:dotrhoA_start}
\end{equation}
Assume the usual decomposition 
\begin{equation}
    \hat H(t)
    =
    \hat H_A\otimes \mathbb I_B
    +\mathbb I_A\otimes \hat H_B
    +g\,\hat M\otimes \hat L
    \;+\; \text{(possible drive terms acting only on $B$)}.
    \label{eq:H_decomp_for_derivation}
\end{equation}
We now insert \eqref{eq:H_decomp_for_derivation} into \eqref{eq:dotrhoA_start} and treat each contribution.

Using \eqref{eq:ptrace_left_A}--\eqref{eq:ptrace_right_A},
\begin{align}
    \mathrm{Tr}_B\!\big([\hat H_A\otimes \mathbb I_B,\hat\rho]\big)
    &=
    \mathrm{Tr}_B\!\big((\hat H_A\otimes \mathbb I_B)\hat\rho\big)
    -\mathrm{Tr}_B\!\big(\hat\rho(\hat H_A\otimes \mathbb I_B)\big)
    \nonumber\\
    &=
    \hat H_A\,\mathrm{Tr}_B[\hat\rho]
    -\mathrm{Tr}_B[\hat\rho]\,\hat H_A
    \nonumber\\
    &=
    [\hat H_A,\hat\rho_A].
    \label{eq:ptrace_HA_term}
\end{align}

For any operator $\hat Z$ acting only on $B$, \eqref{eq:ptrace_commutator_B_zero} gives
\begin{equation}
    \mathrm{Tr}_B\!\big([\mathbb I_A\otimes \hat Z,\hat\rho]\big)=0.
    \label{eq:ptrace_B_terms_zero}
\end{equation}
Hence $\mathbb I_A\otimes \hat H_B$ (and likewise any classical-drive term of the form
$\mathbb I_A\otimes \hat Z(t)$) does not contribute \emph{directly} to $\dot{\hat\rho}_A$ at this stage;
it affects the reduced dynamics only indirectly through the full state $\hat\rho(t)$.

We compute the partial trace of the commutator $ [\hat M\otimes \hat L,\hat\rho]$ explicitly:
\begin{align}
    \mathrm{Tr}_B\!\big([\hat M\otimes \hat L,\hat\rho]\big)
    &=
    \mathrm{Tr}_B\!\big((\hat M\otimes \hat L)\hat\rho\big)
    -\mathrm{Tr}_B\!\big(\hat\rho(\hat M\otimes \hat L)\big).
    \label{eq:interaction_comm_start}
\end{align}
For the first term, pull $\hat M$ out using \eqref{eq:ptrace_left_A}:
\begin{align}
    \mathrm{Tr}_B\!\big((\hat M\otimes \hat L)\hat\rho\big)
    &=
    \mathrm{Tr}_B\!\big((\hat M\otimes \mathbb I_B)(\mathbb I_A\otimes \hat L)\hat\rho\big)
    \nonumber\\
    &=
    \hat M\,\mathrm{Tr}_B\!\big((\mathbb I_A\otimes \hat L)\hat\rho\big).
    \label{eq:interaction_first_term}
\end{align}
For the second term, first use cyclicity on subsystem $B$, \eqref{eq:ptrace_cyclic_B}, to move
$(\mathbb I_A\otimes \hat L)$ to the left, and then pull $\hat M$ out on the right using
\eqref{eq:ptrace_right_A}:
\begin{align}
    \mathrm{Tr}_B\!\big(\hat\rho(\hat M\otimes \hat L)\big)
    &=
    \mathrm{Tr}_B\!\big(\hat\rho(\hat M\otimes \mathbb I_B)(\mathbb I_A\otimes \hat L)\big)
    \nonumber\\
    &=
    \mathrm{Tr}_B\!\big((\mathbb I_A\otimes \hat L)\hat\rho(\hat M\otimes \mathbb I_B)\big)
    \qquad\text{(by \eqref{eq:ptrace_cyclic_B})}
    \nonumber\\
    &=
    \mathrm{Tr}_B\!\big((\mathbb I_A\otimes \hat L)\hat\rho\big)\,\hat M
    \qquad\text{(by \eqref{eq:ptrace_right_A})}.
    \label{eq:interaction_second_term}
\end{align}
Subtracting \eqref{eq:interaction_second_term} from \eqref{eq:interaction_first_term} yields
\begin{equation}
    \mathrm{Tr}_B\!\big([\hat M\otimes \hat L,\hat\rho]\big)
    =
    \big[\hat M,\ \mathrm{Tr}_B\!\big((\mathbb I_A\otimes \hat L)\hat\rho\big)\big].
    \label{eq:interaction_comm_result}
\end{equation}

Combining \eqref{eq:dotrhoA_start}, \eqref{eq:ptrace_HA_term}, \eqref{eq:ptrace_B_terms_zero},
and \eqref{eq:interaction_comm_result} gives the exact identity
\begin{equation}
    \dot{\hat\rho}_A(t)
    =
    -i[\hat H_A,\hat\rho_A(t)]
    -ig\,
    \big[\hat M,\ \mathrm{Tr}_B\!\big((\mathbb I_A\otimes \hat L)\hat\rho(t)\big)\big].
    \label{eq:exact_reduced_identity_stepwise}
\end{equation}
The nontrivial object is the operator-valued ``field''
\begin{equation}
    \hat X_L(t):=\mathrm{Tr}_B\!\big((\mathbb I_A\otimes \hat L)\hat\rho(t)\big)\in\mathcal B(\mathcal H_A),
    \label{eq:operator_valued_field_def}
\end{equation}
which depends on the full joint state $\hat\rho(t)$ and therefore, in general, on
device--channel correlations.

We now impose the common but physically motivated \emph{weak-loading} assumption \cite{BreuerPetruccione2002}, or Born-Oppenheimer approximation.
There exists a regime in which the device perturbs the channel only weakly, so that to leading order
\begin{equation}
\hat\rho(t)=\hat\rho_A(t)\otimes \hat\rho_B^{(u)}(t) + \mathcal O(g),
    \label{eq:born_ansatz}
\end{equation}
where $\hat\rho_B^{(u)}(t)$ is the channel state produced by the drive $u(t)$ \emph{in the absence
of the device}, i.e.\ under the driven channel Hamiltonian
\begin{equation}
    \hat H_B^{(u)}(t)=\hat H_B - u(t)\hat F.
    \label{eq:H_B_driven}
\end{equation}

Under \eqref{eq:born_ansatz},
\begin{equation}
    \mathrm{Tr}_B(\hat B\,\hat\rho(t))
    =
    \hat\rho_A(t)\,\Phi(t) + \mathcal O(g),
    \qquad
    \Phi(t):=\mathrm{Tr}_B\!\big(\hat\rho_B^{(u)}(t)\,\hat L\big),
    \label{eq:Phi_def_born}
\end{equation}
and \eqref{eq:dotrhoA_start}becomes, to leading order in $g$,
\begin{equation}
    \dot{\hat\rho}_A(t)
    =
    -i\big[\hat H_A + g\,\Phi(t)\,\hat M,\;\hat\rho_A(t)\big]
    +\mathcal O(g^2).
    \label{eq:unitary_effective_H}
\end{equation}
Thus, at the Hamiltonian level, the device is driven by the c-number realized field $\Phi(t)$
generated by the channel.
We now compute $\Phi(t)$ from \eqref{eq:Phi_def_born} in \emph{linear response} in the source $u$,
being explicit about which time each density operator refers to.

Fix an initial time $t_0$ at which the drive is specified and the channel is prepared in a stationary
reference state for the unforced Hamiltonian $\hat H_B$:
\begin{equation}
\hat\rho_B^{(u)}(t_0)=\hat\rho_B,
\qquad
[\hat\rho_B,\hat H_B]=0.
\label{eq:rhoB_initial_stationary}
\end{equation}
Here $\hat\rho_B$ is time-independent in the Schr\"odinger picture under $\hat H_B$, but it is the
\emph{initial condition} for the driven evolution under $\hat H_B-u(t)\hat F$.

We work in the interaction picture with respect to $\hat H_B$:
\begin{equation}
\hat O_I(t)=e^{i\hat H_B(t-t_0)}\hat O\,e^{-i\hat H_B(t-t_0)}.
\label{eq:IP_def_channel}
\end{equation}
In particular,
\begin{equation}
\hat F_I(t)=e^{i\hat H_B(t-t_0)}\hat F\,e^{-i\hat H_B(t-t_0)},
\qquad
\hat L_I(t)=e^{i\hat H_B(t-t_0)}\hat L\,e^{-i\hat H_B(t-t_0)}.
\label{eq:FI_LI_defs}
\end{equation}

Let $\hat\rho_B^{(u)}(t)$ denote the channel density operator at \emph{observation time} $t$ in the
Schr\"odinger picture, evolved from the initial condition \eqref{eq:rhoB_initial_stationary} under the
time-dependent Hamiltonian $\hat H_B-u(t)\hat F$.
Define its interaction-picture counterpart
\begin{equation}
\hat\rho_I(t)=e^{i\hat H_B(t-t_0)}\hat\rho_B^{(u)}(t)\,e^{-i\hat H_B(t-t_0)}.
\label{eq:rhoI_def}
\end{equation}
Then $\hat\rho_I(t_0)=\hat\rho_B$ and $\hat\rho_I(t)$ satisfies
\begin{equation}
\frac{d}{dt}\hat\rho_I(t)=i\,u(t)\,[\hat F_I(t),\hat\rho_I(t)].
\label{eq:rhoI_eom}
\end{equation}
Integrating \eqref{eq:rhoI_eom} from $t_0$ to $t$ and expanding to first order in $u$ gives
\begin{equation}
\hat\rho_I(t)=\hat\rho_B
+i\int_{t_0}^{t} ds\;u(s)\,[\hat F_I(s),\hat\rho_B]
+\mathcal O(u^2).
\label{eq:rhoI_linear}
\end{equation}
Transforming back to the Schr\"odinger picture at time $t$,
\begin{equation}
\hat\rho_B^{(u)}(t)
=e^{-i\hat H_B(t-t_0)}\hat\rho_I(t)\,e^{i\hat H_B(t-t_0)}
=\hat\rho_B(t)+\delta\hat\rho_B(t)+\mathcal O(u^2),
\label{eq:rhoB_expand}
\end{equation}
where, because $[\hat\rho_B,\hat H_B]=0$,
\begin{equation}
\hat\rho_B(t):=e^{-i\hat H_B(t-t_0)}\hat\rho_B\,e^{i\hat H_B(t-t_0)}=\hat\rho_B,
\label{eq:rhoB_free_timeindep}
\end{equation}
and the first-order correction is
\begin{equation}
\delta\hat\rho_B(t)
=
i\int_{t_0}^{t} ds\;u(s)\,
e^{-i\hat H_B(t-t_0)}[\hat F_I(s),\hat\rho_B]e^{i\hat H_B(t-t_0)}.
\label{eq:delta_rhoB_S}
\end{equation}
Equivalently (and often more convenient), one can keep the correction in the interaction picture as
in \eqref{eq:rhoI_linear}.

By definition,
\begin{equation}
\Phi(t)=\mathrm{Tr}_B\!\big(\hat\rho_B^{(u)}(t)\hat L\big)
=\mathrm{Tr}_B\!\big(\hat\rho_I(t)\hat L_I(t)\big),
\label{eq:Phi_IP_identity}
\end{equation}
where the second equality follows from \eqref{eq:rhoI_def} and cyclicity of trace.
Insert the linearized state \eqref{eq:rhoI_linear}:
\begin{eqnarray}
\Phi(t)
&=&
\mathrm{Tr}_B\!\big(\hat\rho_B\hat L_I(t)\big)
+i\int_{t_0}^{t} ds\;u(s)\,
\mathrm{Tr}_B\!\big([\hat F_I(s),\hat\rho_B]\hat L_I(t)\big)
+\mathcal O(u^2)
\nonumber\\
&=&
\Phi_0
+i\int_{t_0}^{t} ds\;u(s)\,
\langle[\hat L_I(t),\hat F_I(s)]\rangle_B
+\mathcal O(u^2),
\label{eq:Phi_linear_explicit_times}
\end{eqnarray}
with
\begin{equation}
\Phi_0:=\mathrm{Tr}_B(\hat\rho_B\hat L)
=\mathrm{Tr}_B\!\big(\hat\rho_B\hat L_I(t)\big),
\qquad
\langle\cdot\rangle_B:=\mathrm{Tr}_B(\hat\rho_B\,\cdot).
\label{eq:Phi0_def}
\end{equation}
In the second line of \eqref{eq:Phi_linear_explicit_times} we used the trace identity
$\mathrm{Tr}_B([\hat F_I(s),\hat\rho_B]\hat L_I(t))=\mathrm{Tr}_B(\hat\rho_B[\hat L_I(t),\hat F_I(s)])$.

Because $[\hat\rho_B,\hat H_B]=0$, the correlator depends only on the time difference $\tau=t-s$,
so we define
\begin{equation}
\chi_{LF}(\tau)
:=
i\,\Theta(\tau)\,\langle[\hat L_I(\tau),\hat F]\rangle_B
=
i\,\Theta(\tau)\,\langle[\hat L_I(t),\hat F_I(s)]\rangle_B,
\qquad
\tau=t-s.
\label{eq:chi_retarded_explicit}
\end{equation}
Therefore,
\begin{equation}
\Phi(t)=\Phi_0+\int_{t_0}^{t} ds\;\chi_{LF}(t-s)\,u(s)+\mathcal O(u^2).
\label{eq:Phi_linear_response_explicit}
\end{equation}
Taking $t_0\to -\infty$ (with the usual adiabatic switch-on prescription if desired) yields the causal
convolution form
\begin{equation}
\Phi(t)=\Phi_0+\int_{-\infty}^{t} K(t-s)\,u(s)\,ds+\mathcal O(u^2),
\qquad
K(\tau)\equiv \chi_{LF}(\tau).
\label{eq:Phi_convolution_final_explicit}
\end{equation}

Equation \eqref{eq:Phi_convolution_final_explicit} is precisely the kernel model used in the main text: the
control-channel memory kernel is the retarded susceptibility of the driven channel between the
input-port operator $\hat F$ and the output-port operator $\hat B$.

\subsection{Closed versus open control channels: structure of the retarded kernel}
\label{app:closed_vs_open_kernel}

In this appendix we treat the \emph{control channel} $B$ as the only system whose
input--output map carries memory. The device $A$ may be treated separately (and may
remain unitary); the kernel below is defined purely in terms of $B$.
We consider a classical command $u(t)$ that couples at the \emph{input port} to a
Hermitian channel operator $\hat F$, and we define the \emph{delivered field} at the
device node by a Hermitian channel operator $\hat L$ (notation $\hat L\equiv \hat B$ in
other parts of the manuscript). In linear response the realized field
$\Phi(t):=\langle \hat L\rangle_t$ is governed by a retarded susceptibility.

We now treat the channel. Let $\hat\rho_B$ be a stationary reference state for the unforced channel dynamics.
The retarded response function is
\begin{equation}
    \chi_{LF}(\tau)
    :=
    i\,\Theta(\tau)\,\big\langle[\hat L_I(\tau),\hat F_I(0)]\big\rangle_B,
    \qquad
    \langle\cdot\rangle_B:=\mathrm{Tr}_B(\hat\rho_B\,\cdot),
    \label{eq:chi_retarded_recalled}
\end{equation}
where the interaction-picture operators $\hat O_I(t)$ are defined with respect to the
\emph{unforced} channel generator (Hamiltonian for the closed case, Liouvillian for the open case).
In frequency domain we write $\chi_{LF}(\omega)$ for the Fourier transform of
$\chi_{LF}(\tau)$ (under the usual convergence prescription).

The question addressed here is: \emph{under what conditions can $\chi_{LF}(\tau)$ be written (or well
approximated) as a sum of decaying exponentials, as assumed in the RC-like model of the main text?}
We distinguish two cases.

\subsubsection{Closed (lossless) channel: unitary kernel and Lehmann representation}
\label{app:closed_channel_lehmann}

Assume first that the channel $B$ is \emph{closed}, so its unforced dynamics is unitary
under a time-independent Hamiltonian $\hat H_B$. Take the reference state to commute
with $\hat H_B$, hence diagonal in the energy basis. Let
$\hat H_B\ket{n}=E_n\ket{n}$ and
\begin{equation}
    \hat\rho_B=\sum_n p_n \ket{n}\!\bra{n},
    \qquad p_n\ge 0,\ \sum_n p_n=1.
    \label{eq:rhoB_diagonal}
\end{equation}
Then $\hat L_I(\tau)=e^{i\hat H_B\tau}\hat L e^{-i\hat H_B\tau}$ admits the expansion
\begin{equation}
    \hat L_I(\tau)=\sum_{m,n} e^{i(E_m-E_n)\tau}\,L_{mn}\,\ket{m}\!\bra{n},
    \qquad L_{mn}:=\bra{m}\hat L\ket{n}.
    \label{eq:L_lehmann}
\end{equation}
Using \eqref{eq:rhoB_diagonal}--\eqref{eq:L_lehmann}, the commutator expectation becomes
\begin{equation}
    \big\langle[\hat L_I(\tau),\hat F]\big\rangle_B
    =
    \sum_{m,n}(p_n-p_m)\,L_{nm}F_{mn}\,e^{i(E_n-E_m)\tau},
    \qquad F_{mn}:=\bra{m}\hat F\ket{n}.
    \label{eq:comm_expect_lehmann}
\end{equation}
Therefore the retarded kernel has the Lehmann representation
\begin{equation}
    \chi_{LF}(\tau)
    =
    i\,\Theta(\tau)\sum_{m,n}(p_n-p_m)\,L_{nm}F_{mn}\,e^{i(E_n-E_m)\tau}.
    \label{eq:chi_lehmann_closed}
\end{equation}

We now list a few consequences of this description.
First, for finite-dimensional $B$, \eqref{eq:chi_lehmann_closed} is a finite sum of purely
oscillatory terms $e^{i\omega_{nm}\tau}$ with Bohr frequencies $\omega_{nm}=E_n-E_m$.
Hence $\chi_{LF}(\tau)$ does not generically decay as $\tau\to\infty$ (except for
special degeneracies or cancellations). In frequency space, the imaginary part is a
sum of delta-peaked spectral lines rather than a smooth dissipative response.
Second, the factor $\Theta(\tau)$ enforces causality. For Hermitian $\hat L,\hat F$,
$\langle[\hat L_I(\tau),\hat F]\rangle_B$ is purely imaginary, so $\chi_{LF}(\tau)$ is
real for $\tau>0$. The Fourier transform satisfies the reality condition
$\chi_{LF}(\omega)^\ast=\chi_{LF}(-\omega)$ and Kramers--Kronig relations (analyticity in
the upper half-plane).
Lastly, a closed channel can still exhibit \emph{memory} (phase lag, propagation delay, reactive
distortion), but its kernel is not generically a sum of \emph{decaying} exponentials.
Effective decay can occur only in a macroscopic limit (continuous spectrum) through
dephasing (destructive interference), in which case $\chi_{LF}(\tau)$ typically has
non-exponential tails and corresponding branch-cut structure in Laplace/Fourier space.
Thus the RC-like \emph{relaxational} kernels of the main text require either genuine
dissipation or an effective coarse-graining that produces such an irreversible semigroup.

This said, we now provide a general overview of how model dissipation in the \emph{control channel $B$} itself (resistive losses,
thermalized attenuators, leakage into many uncontrolled degrees of freedom) by treating
$B$ as an \emph{open} system with an undriven GKLS generator $\mathcal L_0$.
Let $\hat\rho_{B,\mathrm{ss}}$ be a stationary state, $\mathcal L_0\hat\rho_{B,\mathrm{ss}}=0$.
We drive the channel at the input port by the classical command $u(t)$ through the
Hamiltonian perturbation $-u(t)\hat F$, so the channel master equation is
\begin{equation}
    \dot{\hat\rho}_B(t)
    =
    \mathcal L_0\hat\rho_B(t)\;-\;i\,u(t)\,[\hat F,\hat\rho_B(t)].
    \label{eq:channel_master_driven}
\end{equation}
Define the realized field and its linear response about stationarity:
\begin{equation}
    \Phi(t):=\mathrm{Tr}_B(\hat\rho_B(t)\hat L),
    \qquad
    \delta\Phi(t):=\Phi(t)-\mathrm{Tr}_B(\hat\rho_{B,\mathrm{ss}}\hat L).
    \label{eq:Phi_deltaPhi_defs}
\end{equation}

We then write $\hat\rho_B(t)=\hat\rho_{B,\mathrm{ss}}+\delta\hat\rho_B(t)$ and keep only terms linear in $u$:
\begin{equation}
    \dot{\delta\hat\rho}_B(t)
    =
    {\mathcal L}_0\,{\delta\hat\rho}_B(t)\;-\;i\,u(t)\,[\hat F,\hat\rho_{B,\mathrm{ss}}].
    \label{eq:linearized_master}
\end{equation}
Assuming the steady-state (no-transient) regime, the causal solution is
\begin{equation}
    {\delta\hat\rho}_B(t)
    =
    \int_{-\infty}^{t} ds\;
    e^{\mathcal L_0 (t-s)}\Big(-i[\hat F,\hat\rho_{B,\mathrm{ss}}]\Big)\,u(s).
    \label{eq:delta_rho_solution}
\end{equation}
Taking the expectation of $\hat L$ yields the convolution form
\begin{equation}
    \delta\Phi(t)
    =
    \int_{-\infty}^{t} ds\;K(t-s)\,u(s),
    \qquad
    K(\tau):=
    \mathrm{Tr}_B\!\Big(\hat L\,e^{\mathcal L_0 \tau}\big(-i[\hat F,\hat\rho_{B,\mathrm{ss}}]\big)\Big),
    \quad \tau\ge 0.
    \label{eq:K_from_L0}
\end{equation}
Introduce the adjoint semigroup $e^{\mathcal L_0^\dag \tau}$ acting on observables via
$\mathrm{Tr}(\hat X\,e^{\mathcal L_0 \tau}\hat Y)=\mathrm{Tr}((e^{\mathcal L_0^\dag \tau}\hat X)\hat Y)$.
Then
\begin{equation}
    K(\tau)
    =
    i\,\mathrm{Tr}_B\!\Big(\hat\rho_{B,\mathrm{ss}}\,[\,e^{\mathcal L_0^\dag \tau}\hat L,\hat F\,]\Big),
    \qquad
    \chi_{LF}(\tau)=\Theta(\tau)\,K(\tau).
    \label{eq:chi_open_commutator}
\end{equation}
Equation \eqref{eq:chi_open_commutator} is the Markovian open-channel analogue of the unitary Kubo
formula \eqref{eq:chi_retarded_recalled}: dissipation enters through the semigroup
$e^{\mathcal L_0^\dag \tau}$.

At this point, 
we assume that, on the operator subspace relevant for the port observables, $\mathcal L_0^\dag$
admits a discrete spectral decomposition (e.g.\ after a finite-mode truncation of the channel):
\begin{equation}
    \mathcal L_0^\dag \hat R_k=\lambda_k \hat R_k,
    \qquad \mathrm{Re}(\lambda_k)\le 0,
    \label{eq:Ldag_eigs}
\end{equation}
and expand $e^{\mathcal L_0^\dag \tau}\hat L$ in this eigenbasis. Inserting into
\eqref{eq:chi_open_commutator} yields
\begin{equation}
    \chi_{LF}(\tau)
    =
    \Theta(\tau)\sum_k c_k\,e^{\lambda_k \tau},
    \label{eq:chi_mode_expansion_general}
\end{equation}
with coefficients $c_k$ determined by the overlaps of $\hat L$ and $\hat F$ with the eigenoperators
and by the stationary state $\hat\rho_{B,\mathrm{ss}}$. Stability implies that all non-steady modes
decay: $\mathrm{Re}(\lambda_k)<0$ for $k\neq 0$. In general $\lambda_k=-\nu_k+i\omega_k$, producing
damped oscillations.

The exponential (classical) kernels used in the main text correspond to a \emph{relaxational} regime in which
the relevant channel modes are overdamped and non-oscillatory in the drive band. Concretely, the
dominant eigenvalues satisfy $\lambda_k=-\nu_k$ with $\nu_k>0$, so
\begin{equation}
    \chi_{LF}(\tau)
    =
    \Theta(\tau)\sum_{k=1}^{K_{\max}} c_k e^{-\nu_k\tau},
    \qquad \nu_k>0.
    \label{eq:chi_sum_decays_rc_like}
\end{equation}
Keeping only the slowest rate $\nu_{\min}$ yields the single-pole (single-RC) approximation.

This purely relaxational spectral structure is exactly what is enforced by passive RC ladder models
of the control stack: the transfer function is stable with poles on the negative real axis, implying a
sum (finite ladder) or mixture (distributed line) of decaying exponentials (Appendix~\ref{app:rc_realizability}).
Thus the RC-like approximation employed in the main text may be read equivalently as:
(i) a classical network-synthesis statement about passive RC transfer functions; or
(ii) an open-channel statement that the effective generator $\mathcal L_0^\dag$ has a dominantly real,
negative spectrum on the relevant port-observable subspace.

\subsection{Thermal Lehmann representation of the kernel and its temperature dependence}
\label{app:thermal_lehmann_kernel}

In this subsection, we make the kernel
$K(\tau)\equiv \chi_{LF}(\tau)$ in \eqref{eq:chi_retarded_recalled} fully explicit
for a \emph{closed} control channel in a thermal state. This yields a concrete
expression for how the temperature and the channel energy spectrum determine the
linear filter seen by the commanded waveform $u(t)$.

Assume the unforced channel Hamiltonian $\hat H_B$ is time independent and the
reference state is thermal,
\begin{equation}
    \hat\rho_B=\frac{e^{-\beta \hat H_B}}{Z},
    \qquad
    Z=\Tr_B(e^{-\beta \hat H_B}),
    \qquad
    \beta=(k_B T)^{-1}.
    \label{eq:rhoB_thermal}
\end{equation}
Let $\{\ket{n}\}$ be an eigenbasis of $\hat H_B$,
\begin{equation}
    \hat H_B\ket{n}=E_n\ket{n},
    \qquad
    p_n:=\bra{n}\hat\rho_B\ket{n}=\frac{e^{-\beta E_n}}{Z}.
    \label{eq:thermal_weights}
\end{equation}
We write the interaction-picture operators (with respect to $\hat H_B$) as
$\hat O_I(t)=e^{i\hat H_B(t-t_0)}\hat O\,e^{-i\hat H_B(t-t_0)}$ and define the
output-port observable $\hat L$ (device-side field) and input-port drive operator
$\hat F$ (boundary generalized force), consistent with Sec.~\ref{app:kubo_phi_derivation}.

Stationarity implies
\begin{equation}
    K(\tau)\equiv \chi_{LF}(\tau)
    =
    i\,\Theta(\tau)\,\Tr_B\!\Big(\hat\rho_B\,[\hat L_I(\tau),\hat F_I(0)]\Big).
    \label{eq:K_retarded_thermal_def}
\end{equation}
We now expand \eqref{eq:K_retarded_thermal_def} in the energy eigenbasis.

Insert resolutions of the identity into $\hat L_I(\tau)$ and $\hat F_I(0)=\hat F$.
Writing $L_{mn}:=\bra{m}\hat L\ket{n}$ and $F_{mn}:=\bra{m}\hat F\ket{n}$, we have
\begin{equation}
    \hat L_I(\tau)
    =
    e^{i\hat H_B\tau}\hat L e^{-i\hat H_B\tau}
    =
    \sum_{m,n} e^{i(E_m-E_n)\tau}\,L_{mn}\,\ket{m}\!\bra{n}.
    \label{eq:L_I_lehmann}
\end{equation}
Similarly,
\begin{equation}
    \hat F
    =
    \sum_{m,n}F_{mn}\,\ket{m}\!\bra{n}.
    \label{eq:F_lehmann}
\end{equation}

We compute the thermal correlators. We have $\langle \hat L_I(\tau)\hat F\rangle_\beta$ and
$\langle \hat F\hat L_I(\tau)\rangle_\beta$ with
$\langle \cdot\rangle_\beta:=\Tr_B(\hat\rho_B\,\cdot)$.
Using $\hat\rho_B=\sum_n p_n\ket{n}\!\bra{n}$ from \eqref{eq:thermal_weights},
\begin{align}
    \langle \hat L_I(\tau)\hat F\rangle_\beta
    &=
    \sum_n p_n \bra{n}\hat L_I(\tau)\hat F\ket{n}
    \nonumber\\
    &=
    \sum_{n,m} p_n\, e^{i(E_n-E_m)\tau}\,L_{nm}F_{mn},
    \label{eq:thermal_LF}
    \\
    \langle \hat F\hat L_I(\tau)\rangle_\beta
    &=
    \sum_n p_n \bra{n}\hat F\hat L_I(\tau)\ket{n}
    \nonumber\\
    &=
    \sum_{n,m} p_n\, e^{i(E_m-E_n)\tau}\,F_{nm}L_{mn}
    \nonumber\\
    &=
    \sum_{n,m} p_m\, e^{i(E_n-E_m)\tau}\,L_{nm}F_{mn},
    \label{eq:thermal_FL}
\end{align}
where the last step in \eqref{eq:thermal_FL} is a relabeling $n\leftrightarrow m$.

Subtracting \eqref{eq:thermal_FL} from \eqref{eq:thermal_LF} yields
\begin{equation}
    \langle[\hat L_I(\tau),\hat F]\rangle_\beta
    =
    \sum_{n,m}(p_n-p_m)\,L_{nm}F_{mn}\,e^{i(E_n-E_m)\tau}.
    \label{eq:commutator_thermal_lehmann}
\end{equation}
Therefore the thermal kernel is
\begin{equation}
    K(\tau)
    =
    i\,\Theta(\tau)\sum_{n,m}(p_n-p_m)\,L_{nm}F_{mn}\,e^{i(E_n-E_m)\tau},
    \qquad
    p_n=\frac{e^{-\beta E_n}}{Z}.
    \label{eq:K_lehmann_thermal}
\end{equation}
This is the thermal specialization of \eqref{eq:chi_lehmann_closed}.

Using now the explicit exponential form $p_n=\frac{e^{-\beta E_n}}{Z}$, we can rewrite the weight difference as
\begin{equation}
    p_n-p_m
    =
    \frac{e^{-\beta E_m}}{Z}\Big(e^{-\beta(E_n-E_m)}-1\Big)
    =
    -\,\frac{e^{-\beta E_m}}{Z}\Big(1-e^{-\beta\omega_{nm}}\Big),
    \qquad
    \omega_{nm}:=E_n-E_m.
    \label{eq:pn_pm_detailed_balance_weight}
\end{equation}
Thus each Bohr frequency $\omega_{nm}$ is weighted by an explicit thermal factor
$1-e^{-\beta\omega_{nm}}$, encoding detailed balance.

At this point, we define the (two-sided) Fourier transform of the retarded kernel
\begin{equation}
    K(\omega):=\int_{-\infty}^{\infty} d\tau\,e^{i\omega\tau}K(\tau).
    \label{eq:K_omega_def}
\end{equation}
Using $\int_0^\infty d\tau\,e^{i(\omega-\omega_{nm})\tau}
=\pi\delta(\omega-\omega_{nm})+i\,\mathcal P\frac{1}{\omega-\omega_{nm}}$,
\eqref{eq:K_lehmann_thermal} implies
\begin{equation}
    K(\omega)
    =
    \sum_{n,m}(p_n-p_m)\,L_{nm}F_{mn}
    \Big(\mathcal P\frac{1}{\omega-\omega_{nm}} - i\pi\delta(\omega-\omega_{nm})\Big).
    \label{eq:K_omega_lehmann}
\end{equation}
The dissipative (loss) part is therefore
\begin{equation}
    \Im\,K(\omega)
    =
    -\pi \sum_{n,m}(p_n-p_m)\,L_{nm}F_{mn}\,\delta(\omega-\omega_{nm}).
    \label{eq:ImK_lehmann}
\end{equation}
It is often convenient to introduce the (non-symmetrized) dynamical correlation spectrum
\begin{equation}
    S_{LF}(\omega)
    :=
    \int_{-\infty}^{\infty} d\tau\,e^{i\omega\tau}\,
    \langle \hat L_I(\tau)\hat F\rangle_\beta
    =
    2\pi\sum_{n,m} p_n\,L_{nm}F_{mn}\,\delta(\omega-\omega_{nm}),
    \label{eq:SLF_def}
\end{equation}
for which thermal weights imply the KMS/detailed-balance relation
\begin{equation}
    S_{LF}(-\omega)=e^{-\beta\omega}\,S_{FL}(\omega).
    \label{eq:detailed_balance_S}
\end{equation}
Comparing \eqref{eq:ImK_lehmann} with \eqref{eq:SLF_def} gives the standard
fluctuation--dissipation relation between the retarded response and equilibrium fluctuations:
\begin{equation}
    \Im\,K(\omega)
    =
    -\frac{1}{2}\Big(1-e^{-\beta\omega}\Big)\,S_{LF}(\omega).
    \label{eq:FDT_ImK}
\end{equation}
Equivalently, in terms of the symmetrized spectrum
$S_{LF}^{\mathrm{sym}}(\omega):=\tfrac12(S_{LF}(\omega)+S_{FL}(-\omega))$,
\begin{equation}
    S_{LF}^{\mathrm{sym}}(\omega)
    =
    -\coth\!\Big(\frac{\beta\omega}{2}\Big)\,\Im\,K(\omega),
    \label{eq:FDT_sym}
\end{equation}
(up to the sign convention relating $K$ to $\chi$; here $K\equiv \chi_{LF}$ as in
discussed earlier).

Equations \eqref{eq:K_lehmann_thermal}--\eqref{eq:FDT_sym} show that, for a closed channel,
the kernel is determined by (i) the  frequencies $\omega_{nm}$, (ii) the port matrix
elements $L_{nm}F_{mn}$, and (iii) the thermal factors $p_n-p_m$, or equivalently the
detailed-balance weights $1-e^{-\beta\omega}$. In particular, for a finite-dimensional
closed channel, $K(\tau)$ is a quasi-periodic sum of purely oscillatory terms and does
not generically decay as $\tau\to\infty$. Thus thermal equilibrium alone does not yield
an RC-like relaxational kernel; obtaining sums of \emph{decaying} exponentials requires
either dissipation (open channel) or an effective continuum/coarse-graining limit, as
discussed above.

Let us now observe that the general kernel $K(\tau)=\chi_{LF}(\tau)$ becomes a familiar linear-response
\emph{admittance} (or more generally a \emph{transadmittance}) once we identify the
input and output operators with standard port variables.

\smallskip
\noindent

At an electrical input port, a classical voltage drive $u(t)$ couples to the port
charge operator $\hat q$ via the work term
\begin{equation}
    \hat H_{\rm drive}(t)=-u(t)\,\hat q.
    \label{eq:port_voltage_drive}
\end{equation}
Comparing with $\hat V(t)=-u(t)\hat F$, we identify the input operator as
$\hat F=\hat q$.

\smallskip
\noindent

If we choose the output observable to be the port current
$\hat I:=\dot{\hat q}=\frac{i}{\hbar}[\hat H_B,\hat q]$ (here $\hbar=1$), then the
kernel
\begin{equation}
    \chi_{I q}(\tau)
    =
    i\,\Theta(\tau)\,\langle[\hat I_I(\tau),\hat q]\rangle_\beta
    \label{eq:chi_Iq_time}
\end{equation}
governs the linear response $\delta\langle \hat I(t)\rangle=\int_{-\infty}^t ds\,
\chi_{I q}(t-s)\,u(s)$.
In the frequency domain,
\begin{equation}
    \delta\langle \hat I(\omega)\rangle = Y(\omega)\,u(\omega),
    \qquad
    Y(\omega)\equiv \chi_{I q}(\omega),
    \label{eq:Y_equals_chi_Iq}
\end{equation}
so the complex admittance $Y(\omega)$ is exactly a retarded susceptibility.

\smallskip
\noindent

Using $\hat I=\dot{\hat q}$, one may integrate by parts (or use
$\hat I(\omega)=-i\omega \hat q(\omega)$ at the level of linear response) to write
\begin{equation}
    Y(\omega)=-\,i\omega\,\chi_{q q}(\omega),
    \qquad
    \chi_{q q}(\tau)=i\,\Theta(\tau)\,\langle[\hat q_I(\tau),\hat q]\rangle_\beta.
    \label{eq:Y_from_chiqq}
\end{equation}
This is the standard Kubo identification of admittance with a retarded correlator.

\smallskip
\noindent
It follows that the cycle-averaged power delivered by a monochromatic drive
$u(t)=\Re(u_\omega e^{-i\omega t})$ is
\begin{equation}
    \overline P(\omega)=\frac{1}{2}\,\Re\,Y(\omega)\,|u_\omega|^2.
    \label{eq:power_from_admittance}
\end{equation}
For a passive channel, $\overline P(\omega)\ge 0$, hence $\Re\,Y(\omega)\ge 0$.
In Lehmann form, one sees this nonnegativity transparently: specializing
\eqref{eq:ImK_lehmann} to $L=I$, $F=q$, and using
$\hat I_{nm}=i(E_n-E_m)q_{nm}=i\omega_{nm}q_{nm}$,
the dissipative part $\Re\,Y(\omega)$ can be expressed in terms of squared matrix
elements $|q_{nm}|^2$ with thermal weights and delta functions, ensuring that the
absorption at $\omega>0$ is nonnegative for physical port couplings.

\smallskip

In thermal equilibrium, the symmetrized current noise spectrum
\begin{equation}
    S_{II}^{\rm sym}(\omega)
    :=
    \int_{-\infty}^{\infty} d\tau\,e^{i\omega\tau}\,
    \frac{1}{2}\langle\{\delta\hat I_I(\tau),\delta\hat I\}\rangle_\beta
    \label{eq:SII_sym_def}
\end{equation}
is related to the dissipative response by the fluctuation--dissipation theorem,
\begin{equation}
    S_{II}^{\rm sym}(\omega)
    =
    \omega\,\coth\!\Big(\frac{\beta\omega}{2}\Big)\,\Re\,Y(\omega),
    \label{eq:Johnson_Nyquist_quantum}
\end{equation}
(where $\hbar=1$). In the classical (high-temperature) limit $\beta\omega\ll 1$,
this reduces to the Johnson--Nyquist form
$S_{II}^{\rm sym}(\omega)\approx 2k_BT\,\Re\,Y(\omega)$.
Thus, the same susceptibility that determines the deterministic filtering of the
control waveform also fixes the equilibrium fluctuations of the channel output.

\smallskip
\noindent

More generally, if $\hat L$ is a downstream voltage/flux (node coordinate) or a
current at a different port, then $\chi_{LF}$ is a \emph{transfer} response
(transimpedance/transadmittance). In this interpretation, the kernel $K$ used in the
main text is precisely an input--output response function between an input-port drive
operator (e.g.\ $\hat q$) and an output-port readout operator (e.g.\ a downstream node
flux $\hat\varphi$ or current $\hat I$). This provides a direct bridge between the
Kubo kernel $K(\tau)$ and the usual circuit notion of frequency-dependent transfer
functions.

\subsection{Adiabatic switch-on in Kubo: from discrete lines to RC-like kernels}
\label{app:eta_broadening_rc}

The GKLS/semi-group viewpoint in Sec.~\ref{app:closed_vs_open_kernel}
makes dissipation explicit through the spectrum of $\mathcal L_0^\dag$ and yields
sums of decaying exponentials directly. There is a complementary (and historically
standard) linear-response viewpoint, often phrased in terms of an $\eta\downarrow 0^+$
prescription in Kubo theory, which clarifies how a strictly closed thermal channel
fails to produce decay and how even infinitesimal damping broadens the spectrum.
This is useful here because our goal is precisely to justify why an experimentally
observed control stack behaves like a classical causal filter $K(\tau)$ rather than
a quasi-periodic kernel without resorting to the open channel.

In the Kubo setup, the channel is assumed to be in an equilibrium (e.g. thermal)
state in the remote past, and the perturbation is switched on adiabatically so that the
state is not abruptly driven out of equilibrium at $t_0\to -\infty$. Concretely, one
replaces the drive by an adiabatically switched source, for example
$$
u(t)\ \mapsto\ u_\eta(t):=e^{\eta t}\,u(t), \qquad \eta>0,
$$
and only at the end takes $\eta\downarrow 0^+$. This implements (i) convergence of
integrals from $t_0=-\infty$, and (ii) selection of the retarded (causal) solution.
This is the same physical role played by the $+i0^+$ prescription in retarded Green
functions and susceptibilities. (See Kubo's original development of linear response.)
\cite{Kubo1957Irreversible1}
In this context, we start from the closed-channel retarded response (unitary interaction picture w.r.t. $H_B$),
$$
K(\tau)\equiv \chi_{LF}(\tau)= i\,\Theta(\tau)\,\langle [L_I(\tau),F_I(0)]\rangle_B.
$$
With the adiabatic switch-on, the same derivation yields an $\eta$-regularized retarded kernel
$$
K_\eta(\tau)= i\,\Theta(\tau)\,e^{-\eta\tau}\,\langle [L_I(\tau),F]\rangle_B.
$$
Thus, even before invoking any bath, each oscillatory component acquires a factor
$e^{-\eta\tau}$: the $\eta$ prescription is literally an (infinitesimal) exponential damping
in the time domain.

For a closed thermal channel, the Lehmann form produces discrete spectral lines at $\omega_{mn}$
frequencies. With $\eta>0$, denominators acquire a finite imaginary part:
$$
K_\eta(\omega)
=
\sum_{n,m} (p_n-p_m)\,L_{nm}F_{mn}\,
\frac{1}{\omega-\omega_{nm}+i\eta},
\qquad
\omega_{nm}:=E_n-E_m.
$$
Taking the imaginary part makes the broadening explicit:
$$
\Im\,K_\eta(\omega)
=
-\pi\sum_{n,m}(p_n-p_m)\,L_{nm}F_{mn}\,\delta_\eta(\omega-\omega_{nm}),
\qquad
\delta_\eta(x):=\frac{1}{\pi}\frac{\eta}{x^2+\eta^2}.
$$
So the $\delta$-peaks of the strictly closed channel are replaced by Lorentzians of width $\eta$.
In condensed-matter and transport practice, one often replaces $\eta$ by a physical relaxation
rateto model dissipation, i.e. $\eta\to\Gamma$, producing the same
kind of spectral broadening but with a non-infinitesimal width. 

A strictly closed finite-dimensional channel has purely real frequencies, so $K(\tau)$ is a
sum of oscillatory terms and does not generically decay. To obtain a classical-looking memory
kernel, the spectral lines must be broadened by genuine relaxation mechanisms:
leakage into uncontrolled degrees of freedom, resistive losses, thermalization in attenuators,
or (equivalently) an effective continuum/coarse-graining limit. Operationally, this means that
the retarded response is controlled by poles displaced into the lower half-plane,
$$
\omega\ \mapsto\ \omega+i\gamma_k, \qquad \gamma_k>0,
$$
so that each contribution behaves like a damped oscillation
$$
K_k(\tau)\propto \Theta(\tau)\,e^{-\gamma_k\tau}\,e^{i\omega_k\tau}.
$$
In the relaxational (overdamped) regime relevant for an RC-like control stack, the dominant poles
in the drive band are approximately on the imaginary axis,
$$
K(\omega)\approx \sum_{k}\frac{c_k}{\nu_k-i\omega},
\qquad \nu_k>0,
$$
and the inverse transform gives a sum of pure decays,
$$
K(\tau)\approx \Theta(\tau)\sum_k c_k e^{-\nu_k\tau}.
$$
This is exactly the same functional form obtained from the GKLS semigroup expansion
in Eq.~\eqref{eq:chi_sum_decays_rc_like} and in the RC-ladder case; the two viewpoints differ only in language. First, in the open-channel/GKLS picture, the decay rates $\nu_k$ are $\nu_k=-\Re(\lambda_k)$
from the spectrum of $\mathcal L_0^\dag$ (microscopic dissipation made explicit).
Second, in the $\eta$-regularized Kubo picture, $\eta$ is the adiabatic retarded prescription,
and replacing $\eta$ by a physical linewidth $\Gamma$ amounts to modeling the same
pole displacement caused by dissipation/broadening mechanisms in the channel.

Both routes justify reading the experimentally observed control line as a causal filter with a
stable pole structure; the RC-ladder approximation used in the main text corresponds to the
special case where the relevant poles are dominantly real and negative in the bandwidth of interest.

\end{document}